\title{Structure-Constrained Process Graphs\\ for the Process Semantics of Regular Expressions}
\author{Clemens Grabmayer
\institute{Gran Sasso Science Institute\\ Viale F.\ Crispi, 7, 67100 L'Aquila AQ, Italy}
\email{clemens.grabmayer@gssi.it}}
\tikzset{
  funbisim/.style={
    decoration={funbisim, amplitude=0.25ex},
    decorate,
    funbisim options/.style={#1}    
  }}
\tikzset{
  bisim/.style={
    decoration={bisim, amplitude=0.25ex},
    decorate,
    bisim options/.style={#1}    
  }}
\def\calcLength(#1,#2)#3{%
\pgfpointdiff{\pgfpointanchor{#1}{center}}%
             {\pgfpointanchor{#2}{center}}%
\pgf@xa=\pgf@x%
\pgf@ya=\pgf@y%
\FPeval\@temp@a{\pgfmath@tonumber{\pgf@xa}}%
\FPeval\@temp@b{\pgfmath@tonumber{\pgf@ya}}%
\FPeval\@temp@sum{(\@temp@a*\@temp@a+\@temp@b*\@temp@b)}%
\FProot{\FPMathLen}{\@temp@sum}{2}%
\FPround\FPMathLen\FPMathLen5\relax
\global\expandafter\edef\csname #3\endcsname{\FPMathLen}
}
\theoremstyle{plain}
\newtheorem{thm}{Theorem}[section]
\newtheorem{lem}[thm]{Lemma}
\newtheorem*{repeatedthm}{Theorem}
\newtheorem*{repeatedlem}{Lemma}
\newtheorem{prop}[thm]{Proposition}
\theoremstyle{definition}
\newtheorem{defi}[thm]{Definition}
\newtheorem{rem}[thm]{Remark}
\newtheorem{exa}[thm]{Example}
 \newcommand{\mynobreakpar}{\par\nobreak\@afterheading}
\definecolor{azure}{rgb}{0.94,1.00,1.00}
\definecolor{brown}{rgb}{.75,.25,.25}
\definecolor{cyan}{rgb}{0.25,0.88,0.82}
\definecolor{chocolate}{rgb}{0.82,0.41,0.12}
\definecolor{darkcyan}{rgb}{0.5,0,1}
\definecolor{darkgreen}{rgb}{0,0.39,0}
\definecolor{darkmagenta}{rgb}{0.5,0,0.5}
\definecolor{darkgoldenrod}{RGB}{184,134,11}
\definecolor{firebrick}{RGB}{175,25,25}
\definecolor{forestgreen}{rgb}{0.13,0.55,0.13}
\definecolor{goldenrod}{RGB}{218,165,32}
\definecolor{lightcyan}{rgb}{0.88,1.00,1.00}
\definecolor{lightpink}{rgb}{1.00,0.71,0.76}
\definecolor{myyellow}{RGB}{235,235,0}
\definecolor{lightyellow}{rgb}{1.00,1.00,0.88}
\definecolor{lightgoldenrod}{rgb}{0.83,0.97,0.51}
\definecolor{lightgoldenrodyellow}{rgb}{0.98,0.98,0.82}
\definecolor{lightskyblue}{rgb}{0.53,0.81,0.98}
\definecolor{moccasin}{rgb}{1.00,0.89,0.71}
\definecolor{magenta}{rgb}{1,0,1}
\definecolor{navyblue}{rgb}{0,0,0.5}
\definecolor{orange}{rgb}{1.0,0.65,0.0}
\definecolor{orangered}{rgb}{1.0,0.27,0.0}
\definecolor{palegreen}{rgb}{0.60,0.98,0.60}
\definecolor{powderblue}{rgb}{0.69,0.88,0.90}
\definecolor{purple}{rgb}{1,0.5,1}
\definecolor{royalblue}{RGB}{65,105,225}
\definecolor{mediumblue}{RGB}{0,0,205}
\definecolor{cornflowerblue}{RGB}{100,149,237}
\definecolor{springgreen}{rgb}{0.0,1.0,0.5}
\definecolor{turquoise}{rgb}{0.25,0.88,0.82}
\definecolor{snow}{rgb}{1.00,0.98,0.98}
\definecolor{tan}{rgb}{0.82,0.71,0.55}
\definecolor{red}{rgb}{1,0,0}
\definecolor{violetred}{RGB}{208,32,144}
\newcommand{\colorin}[1]{\textcolor{#1}}
\newcommand{\black}{\colorin{black}}
\newcommand{\colorred}{\colorin{red}}
\newcommand{\alert}{\colorred}
\newcommand{\darkcyan}{\colorin{darkcyan}}
\newcommand{\mediumblue}{\colorin{mediumblue}}
\newcommand{\nb}{\nobreakdash}
\newcommand{\punc}[1]{\ensuremath{\hspace*{1.5pt}{#1}}}
\newcommand{\nf}{\normalfont}
\newcommand{\bs}[1]{\boldsymbol{#1}}
\newenvironment{new}{\color{chocolate}}{\color{black}}
\newenvironment{newer}{\color{firebrick}}{\color{black}}
\newenvironment{newest}{\color{red}}{\color{black}}
\newenvironment{revised}{\color{violetred}}{\color{black}}
\newenvironment{change}{\color{violetred}}{\color{black}}
\newcommand{\isnewest}[1]{\begin{newest}{#1}\end{newest}}
\newcommand{\funin}{\mathrel{:}}
\newcommand{\fap}[2]{{#1}(\hspace*{-0.5pt}{#2}\hspace*{-0.5pt})}
\newcommand{\iap}[2]{#1_{#2}}
\newcommand{\bap}[2]{#1_{#2}}
\newcommand{\pap}[2]{#1^{#2}}
\newcommand{\bpap}[3]{#1_{#2}^{#3}}
\newcommand{\pbap}[3]{#1_{#3}^{#2}}
\newcommand{\sproj}{\pi}
\newcommand{\proj}{\fap{\sproj}}
\newcommand{\sdefdby}{{:=}}
\newcommand{\defdby}{\mathrel{\sdefdby}}
\newcommand\tuple[1]{\langle #1 \rangle}
\newcommand\tuplespace{\hspace*{0.5pt}}
\newcommand\pair[2]{\tuple{#1, \tuplespace #2}}
\newcommand\triple[3]{\tuple{#1, \tuplespace #2, \tuplespace #3}}
\newcommand{\nat}{\mathbb{N}}
\newcommand{\natplus}{\pap{\nat}{+}} 
\newcommand{\BNFor}{\:\mid\:}
\newcommand{\BNFdefdby}{\:{::=}\:}
\newcommand{\sred}{{\to}}
\newcommand{\sredi}[1]{{\iap{\sred}{#1}}}
\newcommand{\redi}[1]{\mathrel{\sredi{#1}}}
\newcommand{\sconvredi}[1]{{\iap{\leftarrow}{#1}}}
\newcommand{\sredrtc}{\sred^{*}}
\newcommand{\redrtc}{\mathrel{\sredrtc}}
\newcommand{\sredrtci}[1]{{\iap{\sredrtc}{#1}}}
\newcommand{\redrtci}[1]{\mathrel{\sredrtci{#1}}}
\newcommand{\scomprewrels}[2]{{#1}\cdot{#2}}
\newcommand{\comprewrels}[2]{\mathrel{\scomprewrels{#1}{#2}}}
\newcommand{\stavoidsv}{\text{\nf\bf\st{$\hspace*{1.5pt}$t$\hspace*{1.5pt}$}}}
\newcommand{\tavoidsv}{\fap{\stavoidsv}}
\newcommand{\sredtavoidsv}[1]{{\xrightarrow[\raisebox{0pt}{\scriptsize {$\tavoidsv{#1}$}}]{}}}
\newcommand{\sredtavoidsvi}[2]{{\sredtavoidsv{#1}{}_{#2}}}
\newcommand{\sredtavoidsvrtci}[2]{{\xrightarrow[\raisebox{0pt}{\scriptsize {$\tavoidsv{#1}$}}]{}}{^{*}_{#2}}}
  \newcommand{\specfontsize}{\fontsize{5}{6}\selectfont} 
  \newcommand{\subotr}{\hspace*{-1pt}\mbox{\specfontsize $(\hspace*{-0.6pt}\sone\hspace*{-0.85pt})$}}
\newcommand{\descsetexpmid}{\mathrel{\vert}}
\newcommand{\descsetexp}[2]{\left\{{#1}\descsetexpmid{#2}\right\}}
\newcommand{\safun}{f}
\newcommand{\setexp}[1]{\left\{{#1}\right\}}
\renewcommand{\emptyset}{\varnothing}
\newcommand{\slogand}{\wedge}
\newcommand{\logand}{\mathrel{\slogand}}
\newcommand{\existsstzero}[1]{\exists{\hspace*{1pt}#1}}
\newcommand{\length}[1]{\left|{#1}\right|}
\newcommand{\depth}{\length}
\newcommand{\noninitial}{non-ini\-tial}
\newcommand{\entrytransition}{entry-tran\-si\-tion}
\newcommand{\entrytransitions}{\entrytransition{s}}
\newcommand{\generatedby}[1]{${#1}$\nb-ge\-ne\-ra\-ted}
\newcommand{\loopentry}{loop-en\-try}
\newcommand{\loopbody}{loop-body}
\newcommand{\entrybodylabeling}{en\-try\discretionary{/}{}{/}body-la\-be\-ling}
\newcommand{\entrybodylabelings}{\entrybodylabeling{s}}
\newcommand{\LEEwitness}{$\LEE$\nb-wit\-ness}
\newcommand{\LEEwitnesses}{$\LEE$\hspace*{1.25pt}\nb-wit\-nes\-ses}
\newcommand{\LLEEwitness}{{\nf LLEE}\nb-wit\-ness}
\newcommand{\LLEEwitnesses}{\LLEEwitness{es}}
\newcommand{\nonempty}{non-emp\-ty}
\newcommand{\nontrivial}{non-triv\-i\-al}
\newcommand{\onetransition}{$1$\nb-tran\-si\-tion}
\newcommand{\onetransitions}{\onetransition{s}}
\newcommand{\onefree}{$1$\nb-free}
\newcommand{\onechart}{$1$\nb-chart}
\newcommand{\subonechart}{sub-$1$\nb-chart}
\newcommand{\transitionact}[1]{{${#1}$}\nb-tran\-si\-tion}
\newcommand{\LTS}{LTS}
\newcommand{\LTSs}{LTSs}
\newcommand{\oneLTS}{1\nb-\LTS}
\newcommand{\TSS}{TSS}
\newcommand{\TSSs}{TSSs}
\newcommand{\astexp}{e}
\newcommand{\bstexp}{f}
\newcommand{\cstexp}{g}
\newcommand{\dstexp}{h}
\newcommand{\astexpi}{\iap{\astexp}}
\newcommand{\bstexpi}{\iap{\bstexp}}
\newcommand{\cstexpi}{\iap{\cstexp}}
\newcommand{\astexpacc}{\astexp'}
\newcommand{\cstexpacc}{\cstexp'}
\newcommand{\dstexpacc}{\dstexp'}
\newcommand{\astexpacci}{\iap{\astexpacc}}
\newcommand{\astexptilde}{\tilde{\astexp}}
\newcommand{\cstexptilde}{\tilde{\cstexp}}
\newcommand{\asstexp}{E}
\newcommand{\bsstexp}{F}
\newcommand{\csstexp}{G}
\newcommand{\asstexpi}{\iap{\asstexp}}
\newcommand{\bsstexpi}{\iap{\bsstexp}}
\newcommand{\csstexpi}{\iap{\csstexp}}
\newcommand{\asstexpacc}{\asstexp'}
\newcommand{\bsstexpacc}{\bsstexp'}
\newcommand{\asstexpacci}{\iap{\asstexpacc}}
\newcommand{\asstexptilde}{\widetilde{\asstexp}}
\newcommand{\asstexptildei}{\iap{\asstexptilde}}
\newcommand{\StExp}{\mathit{StExp}}
\newcommand{\StExpover}{\fap{\StExp}}
\newcommand{\stackStExp}{\mathit{StExp}^{{\scriptscriptstyle(}\sstexpstackprod{\scriptscriptstyle )}}}
\newcommand{\stackStExpover}{\fap{\stackStExp\hspace*{-1.5pt}}}
\newcommand{\AppCxt}{\textit{AppCxt}}
\newcommand{\AppCxtover}{\fap{\AppCxt}}
\newcommand{\acxt}{C}
\newcommand{\acxtwh}{\cxtap{\acxt}{\cdot}}
\newcommand{\cxtap}[2]{{#1}[{#2}]}
\newcommand{\acxtap}{\cxtap{\acxt}}
\newcommand{\stexpzero}{0}
\newcommand{\stexpone}{1}
\newcommand{\sstexpit}{\sstar}
\newcommand{\stexpit}[1]{{#1^{\sstexpit}}}
\newcommand{\sstexpprod}{{\cdot}}
\newcommand{\stexpprod}[2]{{#1}\mathrel{\sstexpprod}{#2}}
\newcommand{\sstexpstackprod}{{\sstar}}
\newcommand{\stexpstackprod}[2]{{#1}\mathrel{\sstexpstackprod}{#2}}
\newcommand{\sstexpsum}{+}
\newcommand{\stexpsum}[2]{{#1}\sstexpsum{#2}}
\newcommand{\sstexpbit}{\circledast} 
\newcommand{\stexpbit}[2]{{#1}\hspace*{0.35pt}\pap{}{\sstexpbit}\hspace*{-0.6pt}{#2}}
\newcommand{\sth}[1]{|{#1}|_{\scalebox{0.8}{$\sstar$}}}
\newcommand{\sdescrelstexpit}{\sredi{\scriptscriptstyle(\sstar)}}
\newcommand{\descrelstexpit}[1]{\mathrel{\sdescrelstexpit}}
\newcommand{\sconvdescrelstexpit}{\sconvredi{\scriptscriptstyle(\sstar)}}
\newcommand{\convdescrelstexpit}[1]{\mathrel{\sconvdescrelstexpit}}
\newcommand{\slt}[1]{{\xrightarrow{#1}}}
\newcommand{\slti}[2]{{\xrightarrow{#1}}{_{#2}}}
\newcommand{\slttc}[1]{{\xrightarrow{#1}}{^{+}}}
\newcommand{\lt}[1]{\mathrel{\slt{#1}}}
\newcommand{\lti}[2]{\mathrel{\slti{#1}{#2}}}
\newcommand{\sone}{1}
\newcommand{\sstar}{*}
\newcommand{\bodylab}{\text{\nf bo}}
\newcommand{\bodylabcol}{\text{\nf\darkcyan{bo}}}
\renewcommand{\bodylab}{\bodylabcol}
\newcommand{\looplab}[1]{{\darkcyan{[#1]}}}
\newcommand{\loopnsteplab}[1]{[{#1}]}
\newcommand{\sloopnstepto}[1]{{\iap{\rightarrow}{\loopnsteplab{#1}}}}
\newcommand{\loopnstepto}[1]{\mathrel{\sloopnstepto{#1}}}
\newcommand{\loopentrytransition}{loop-entry tran\-si\-tion}
\newcommand{\loopentrytransitions}{\loopentrytransition{s}}
\newcommand{\aLname}{n}
\newcommand{\bLname}{m}
\newcommand{\txtnormedplus}{normed$^+$}
\newcommand{\sterminates}{{\downarrow}}
\newcommand{\terminates}[1]{{#1}{\sterminates}}
\newcommand{\terminatesi}[2]{{#2}{\iap{\sterminates}{#1}}}
\newcommand{\snotterminates}{\ndownarrow}  
\newcommand{\notterminates}[1]{{#1}{\snotterminates}}
\newcommand{\onebrackscript}{\scalebox{0.75}{$\scriptstyle (1)$}}
\newcommand{\soneterminates}{{\pap{\downarrow}{\hspace*{-1.5pt}\onebrackscript}}}
\newcommand{\oneterminates}[1]{{#1}{\soneterminates}}
\newcommand{\soneterminatesalert}{{\pap{\downarrow}{\hspace*{-1.5pt}\isnewest{\onebrackscript}}}}
\newcommand{\aLTS}{\mathcal{L}}
\newcommand{\aLTShat}{\widehat{\smash{\aLTS}\rule{0pt}{7pt}}} 
\newcommand{\aLTShatsubscript}{\widehat{\smash{\aLTS}\rule{0pt}{5.5pt}}} 
\newcommand{\aLTSi}{\iap{\aLTS}}
\newcommand{\LTSof}{\fap{\aLTS}}
\newcommand{\LTSdefdby}{\iap{\aLTS}}
\newcommand{\aoneLTS}{\underline{\aLTS}}
\newcommand{\aoneLTShat}{\underline{\widehat{\aLTS}}}
\newcommand{\oneLTSof}{\fap{\aoneLTS}}
\newcommand{\oneLTShatof}{\fap{\aoneLTShat}}
\newcommand{\oneLTSdefdby}{\iap{\aoneLTS}}
  \newcommand{\subscriptindtrans}{\mediumblue{\scriptscriptstyle\pmb{\otind{\cdot}}}} 
\newcommand{\indLTS}[1]{\iap{{#1}}{\subscriptindtrans}}
\newcommand{\indLTSof}{\indLTS}
\newcommand{\achart}{\mathcal{C}}
\newcommand{\acharti}{\iap{\achart}}
\newcommand{\acharthat}{\hspace*{0.75pt}\widehat{\hspace*{-0.75pt}\achart}\hspace*{-0pt}} 
\newcommand{\chartnei}{\pbap{\achart}{\text{\nf (ne)}}}
\newcommand{\aonechart}{\underline{\mathcal{C}}}
\newcommand{\chartof}{\fap{\achart}}
\newcommand{\onechartof}{\fap{\aonechart}}
\newcommand{\indscchartof}[1]{\iap{#1}{\subscriptindtrans}} 
\newcommand{\onecharthatof}[1]{\widehat{\rule{0pt}{7pt}\smash{\fap{\aonechart}{{#1}}}}}
\newcommand{\aloop}{\mathcal{L\hspace*{-4pt}C}}
\newcommand{\indsubchartinat}[1]{\fap{\acharti{#1}}}
\newcommand{\otind}[1]{({#1}]}
\newcommand{\iact}[1]{{\scriptscriptstyle\pmb (}\hspace*{-0pt}{#1}\hspace*{0.4pt}{\scriptscriptstyle\pmb ]}}
\newcommand{\iactalert}[1]{\mediumblue{{\scriptscriptstyle\pmb (}\hspace*{-0pt}{\black{#1}}\hspace*{0.4pt}{\scriptscriptstyle\pmb ]}}}
\newcommand{\silt}[1]{\slt{\iact{#1}}}
\newcommand{\silttc}[1]{\slttc{\iact{#1}}} 
\newcommand{\ilt}[1]{\mathrel{\silt{#1}}}
\newcommand{\ilttc}[1]{\mathrel{\silttc{#1}}}
\newcommand{\entrystep}{en\-try-step}
\newcommand{\entrysteps}{\entrystep{s}}
\newcommand{\actions}{\mathit{A}} 
\newcommand{\oneactions}{\underline{\actions}}
\newcommand{\aact}{a}
\newcommand{\bact}{b}
\newcommand{\cact}{c}
\newcommand{\dact}{d}
\newcommand{\aacti}{\iap{\aact}}
\newcommand{\bacti}{\iap{\bact}}
\newcommand{\aoneact}{\underline{a}}
\newcommand{\verts}{V}
\newcommand{\start}{\averti{\hspace*{-0.5pt}\text{\nf s}}}
\newcommand{\transs}{{\sred}} 
\newcommand{\termexts}{\sterminates} 
\newcommand{\states}{S}
\newcommand{\alab}{l}
\renewcommand{\alab}{\darkcyan{l}}
\newcommand{\vertsof}{\fap{\verts\hspace*{-1pt}}}
\newcommand{\genap}{\bap}
\newcommand{\genvertsof}{\genap{\verts}}
\newcommand{\gentranssof}{\genap{\transs}}
\newcommand{\gentermextsof}{\genap{\termexts}}
\newcommand{\vertsi}[1]{\iap{\verts}{\hspace*{-0.25pt}{#1}}}
\newcommand{\starti}[1]{\averti{\text{\nf s},#1}}
\newcommand{\termextsi}[1]{\iap{\termexts}{{#1}}}
\newcommand{\statesi}{\iap{\states}}
\newcommand{\transshat}{\hspace*{-1.5pt}\bs{\Hat{\normalfont \hspace*{1.5pt}\transs}}} 
\newcommand{\astate}{s}
\newcommand{\bstate}{t}
\newcommand{\astatei}{\iap{\astate}}
\newcommand{\astateacc}{\astate'}
\newcommand{\bstateacc}{\bstate'}
\newcommand{\astateacci}{\iap{\astateacc}}
\newcommand{\asettranss}{U}
\newcommand{\sentries}{\mathit{E}}
\newcommand{\entriesof}{\fap{\sentries}}
\newcommand{\avert}{v}
\newcommand{\bvert}{w}
\newcommand{\averti}{\iap{\avert}}
\newcommand{\atrans}{\tau}
\newcommand{\atranshat}{\widehat{\atrans}}
\newcommand{\sfunbisim}{%
    \setbox0=\hbox{\kern-.1ex{$\rightarrow$}\kern-.1ex}
    \setbox1=\vbox{\hbox{\raise .1ex \box0}\hrule}%
    {\hbox{\kern.05ex\box1\kern.1ex}}
  }
\newcommand{\funbisim}{\mathrel{\sfunbisim}}
\newcommand{\sconvfunbisim}{%
    \setbox0=\hbox{\kern-.1ex{$\leftarrow$}\kern-.1ex}
    \setbox1=\vbox{\hbox{\raise .1ex \box0}\hrule}%
    {\hbox{\kern.05ex\box1\kern.1ex}}
  }
\newcommand{\sbisim}{%
    \setbox0=\hbox{\kern-.1ex{$\leftrightarrow$}\kern-.1ex}
    \setbox1=\vbox{\hbox{\raise .1ex \box0}\hrule}%
    \hbox{\kern.1ex\box1\kern.1ex}
  }
\newcommand{\bisim}{\mathrel{\sbisim\hspace*{1pt}}}
\newcommand{\sfunbisimos}{%
    \setbox0=\hbox{\kern-.1ex{$\rightarrow$}\kern-.1ex}
    \setbox1=\vbox{\hbox{\raise .1ex \box0}\hrule}%
    {\pap{\hbox{\kern.05ex\box1\kern.1ex}}{\hspace*{0.5pt}\subotr}}
  }
\newcommand{\sconvfunbisimos}{%
    \setbox0=\hbox{\kern-.1ex{$\leftarrow$}\kern-.1ex}
    \setbox1=\vbox{\hbox{\raise .1ex \box0}\hrule}%
    {\pap{\hbox{\kern.05ex\box1\kern.1ex}}{\hspace*{0.5pt}\subotr}}
  }
\newcommand{\sbisimos}{%
    \setbox0=\hbox{\kern-.1ex{$\leftrightarrow$}\kern-.1ex}
    \setbox1=\vbox{\hbox{\raise .1ex \box0}\hrule}%
    \ensuremath{\pap{\mathrel{\hbox{\kern.1ex\box1\kern.1ex}}}{\hspace*{0.5pt}\subotr}}
  }
\newcommand{\sfunbisimosvia}[1]{%
    \setbox0=\hbox{\kern-.1ex{$\rightarrow$}\kern-.1ex}
    \setbox1=\vbox{\hbox{\raise .1ex \box0}\hrule}%
    {\bpap{\hbox{\kern.05ex\box1\kern.1ex}}{#1}{\hspace*{0.5pt}\subotr}}
  }
\newcommand{\sconvfunbisimosvia}[1]{%
    \setbox0=\hbox{\kern-.1ex{$\leftarrow$}\kern-.1ex}
    \setbox1=\vbox{\hbox{\raise .1ex \box0}\hrule}%
    {\bpap{\hbox{\kern.05ex\box1\kern.1ex}}{#1}{\hspace*{0.5pt}\subotr}}
  }
\newcommand{\abisim}{B}
\newcommand{\sbehinc}{{\sqsubseteq}}
\newcommand{\sbehinca}[1]{{\prescript{#1}{}{\sbehinc}}}
\newcommand{\behinca}[1]{\mathrel{\sbehinca}}
\newcommand{\sonebehinc}{{\pap{\sbehinc}{\subotr}}}
\newcommand{\sonebehinca}[1]{{{}_{#1}\sonebehinc}}
\newcommand{\onebehinca}[1]{\mathrel{\sonebehinca}}
\newcommand{\arule}{R}
\newcommand{\arulei}{\iap{\arule}}
\newcommand{\aruleacc}{\arule'}
\newcommand{\aruleacci}{\iap{\aruleacc}}
\newcommand{\sderivablein}[1]{\vdash_{#1}}
\newcommand{\derivablein}[1]{\sderivablein{#1}}
\newcommand{\saTSS}{{\cal{T}}}
\newcommand{\StExpTSS}{\text{$\saTSS$}}
\newcommand{\StExpTSSover}[1]{\text{$\fap{\StExpTSS}{#1}$}}
\newcommand{\stackStExpTSS}{\text{$\underline{\saTSS}^{{\scriptscriptstyle (}\sstexpstackprod{\scriptscriptstyle )}}$}}
\newcommand{\stackStExpTSSover}[1]{\text{$\fap{\stackStExpTSS\hspace*{-1.5pt}}{#1}$}}
\newcommand{\stackStExpindTSS}{\text{$\underline{\saTSS}_{\hspace*{1pt}\subscriptindtrans}^{{\scriptscriptstyle (}\sstexpstackprod{\scriptscriptstyle )}}$}}
\newcommand{\stackStExpindTSSover}[1]{\text{$\fap{\stackStExpindTSS\hspace*{-1.5pt}}{#1}$}}
\newcommand{\stackStExpTSShat}{\text{$\underline{\widehat{\saTSS}}^{{\scriptscriptstyle (}\sstexpstackprod{\scriptscriptstyle )}}$}}
\newcommand{\stackStExpTSShatover}[1]{\text{$\fap{\stackStExpTSShat\hspace*{-3pt}}{#1}$}}
\newcommand{\aDeriv}{\mathcal{D}}
\newcommand{\aDerivi}{\iap{\mathcal{D}}}
\newcommand{\aDerivacc}{\aDeriv'}
\newcommand{\aDerivacci}{\iap{\aDerivacc}}
\newcommand{\aDerivtildeacc}{\widetilde{\aDeriv}'}
\newcommand{\aDerivtildeacci}{\iap{\aDerivtildeacc}}
\newcommand{\sLEE}{\text{\nf LEE}}
\newcommand{\LEE}{\sLEE}
\newcommand{\thplus}[2]{{#1}{+}{#2}}
\begin{document}
  
\maketitle

\begin{abstract}
  Milner (1984) introduced a process semantics for regular expressions as process graphs. 
  Unlike for the language semantics, where every regular (that is, DFA-accepted) language is the interpretation of some regular expression,
  there are finite process graphs that are not bisimilar to the process interpretation of any regular expression.
  For reasoning about graphs that are expressible by regular expressions 
  it is desirable to have structural representations of process graphs in the image of the interpretation.
  
  For `1-free' regular expressions,
  their process interpretations satisfy the structural property LEE (loop existence and elimination).
  But this is not in general the case for all regular expressions, as we show by examples.
  Yet as a remedy, we describe the possibility to recover the property \LEE\ for a close variant of the process interpretation.  
  For this purpose we refine the process semantics of re\-gu\-lar ex\-pres\-sions to yield process graphs with \onetransitions,
  similar to silent moves \mbox{for finite-state automata}.
  
  This report accompanies 
  the paper with the same title in the post-proceedings of the workshop TERMGRAPH~2020.
  Here we give the proofs of not only one but of both of the two central theorems.
\end{abstract}

%


\section{Introduction}%
  \label{intro}

Milner \cite{miln:1984} (1984) defined a process semantics for regular expressions as process graphs:
the interpretation of $\stexpzero$ is deadlock, of $\stexpone$ is successful termination, letters $a$ are atomic actions,
the operators $\sstexpsum$ and $\sstexpprod$ stand for choice and concatenation of processes,
and (unary) Kleene star $\stexpit{(\cdot)}$ represents iteration with the option to terminate successfully after each pass-through.
In order to disambiguate the use of regular expressions for denoting processes, Milner called them `star expressions' in this context. 
Unlike for the standard language semantics, where every regular language is the interpretation of some regular expression,
there are finite process graphs that are not bisimilar to the process interpretation of any star expression.%
  \footnote{E.g., the process graphs $\chartnei{1}$ and $\chartnei{2}$ in Ex.~\ref{ex:chart:interpretation} on page~\pageref{ex:chart:interpretation} are not expressible
            by a star expression modulo bisimilarity.}
This phenomenon led Milner to the formulation of two natural questions: (R)~the problem of recognizing whether a given 
process graph is bisimilar to one in the image of the process interpretation of a star expression,
and (A)~whether a natural adaptation of Salomaa's complete proof system for language equivalence of regular expressions
is complete for bisimilarity of the process interpretation of star expressions.   
While (R) has been shown to be decidable in principle, 
so far only partial solutions have been obtained~for~(A). 
    
For tackling these problems it is expedient to obtain structural representations of process graphs in the image of the interpretation.
The result of Baeten, Corradini, and myself \cite{baet:corr:grab:2007} that the problem~(R) is decidable in principle 
was based on the concept of `well-behaved (recursive) specifications' that links process graphs with star expressions. 
Recently in \cite{grab:fokk:2020:LICS,grab:fokk:2020:arxiv}, Wan Fokkink and I obtained a partial solution for~(A) in the form of a complete proof system
for `\onefree' star expressions, which do not contain~$\stexpone$, but are formed with binary Kleene star iteration $\stexpbit{(\cdot)}{(\cdot)}$ instead of unary iteration.
For this, we defined the efficiently decidable `loop existence and elimination property (\LEE)' of process graphs
that holds for all process graph interpretations of \onefree\ star expressions, and for their bisimulation collapses.

Unfortunately, the property \LEE\ does not hold for process graph interpretations $\chartof{\astexp}$ of all star expressions~$\astexp$.
However, it is the aim of this article is to describe how \LEE\ can nevertheless be made applicable, 
by stepping over to a variant $\onechartof{\cdot}$ of the process interpretation $\chartof{\cdot}$. 
In Section~\ref{LEE} we explain the loop existence and elimination property \LEE\ for process graphs, 
and we define the concept of a `layered \LEEwitness', for short a `\LLEEwitness' for process graphs.
Hereby Section~\ref{LEE} 
            is an adaptation for star expressions that may contain~$1$
of the motivation of \LLEEwitnesses\ in Section~3 in \cite{grab:fokk:2020:LICS},
which was concerned with the process semantics of `\onefree\ star expressions'.
\LLEEwitnesses\ arise by adding natural-number labels to transitions that are subject to suitable constraints. 
A process graph for which a \LLEEwitness\ exists is `structure constrained', since it satisfies \LEE\ (as guaranteed by the \LLEEwitness)
in contrast with a process graph for which no \LLEEwitness\ exists (which then does not satisy~\LEE). 

In Section~\ref{LLEE:fail} we explain examples that show that \LEE\ does not hold in general
for process interpretations of star expressions from the full class. 
As a remedy, we introduce process graphs with \onetransitions\ (similar to silent moves for finite-state automata).
In Section~\ref{recover:LEE} we define the variant~$\onechartof{\cdot}$ of the process graph semantics $\chartof{\cdot}$ such that $\onechartof{\cdot}$ yields process graphs with \onetransitions.
Furthermore, we formulate and illustrate by examples the following two properties of the variant process graph semantics~$\onechartof{\cdot}$ 
concerning its relation to $\chartof{\cdot}$, and the structure of process graphs that $\onechartof{\cdot}$ defines: 
\begin{enumerate}[label={(P\arabic{*})}]
  \item{}\label{property:1} \emph{$\chartof{\cdot}$ and $\onechartof{\cdot}$ coincide up to bisimilarity} (Theorem~\ref{thm:onechart-int:funbisim:chart-int}):   
    For every star expression~$\astexp$, there is a functional bisimulation from the variant process semantics $\onechartof{\astexp}$ of $\astexp$
    to the process semantics $\chartof{\astexp}$ of~$\astexp$. 
    Hence
    the process interpretation $\chartof{\astexp}$ of a star expression~$\astexp$ and its variant $\onechartof{\astexp}$ 
    are bisimilar.
  
  \item{}\label{property:2} \emph{$\onechartof{\cdot}$ guarantees \LEE{}-structure} (Theorem~\ref{thm:onechart-int:LLEEw}):    
    The variant process semantics $\onechartof{\astexp}$ of a star expression $\astexp$ 
    satisfies the loop existence and elimination property \LEE.
\end{enumerate}
Section~\ref{proofs} is devoted to the proof of these properties of $\onechartof{\astexp}$. 
There we elaborate the proofs of \ref{property:1} and \ref{property:2}.
While the proof of \ref{property:1} is only sketched in the proceedings version \cite{grab:2021:TERMGRAPH-postproceedings},
it is presented here in detail.

We expect that these results
  can be valuable steps towards solving the axiomatization problem~(A), based on the partial solution in \cite{grab:fokk:2020:LICS},
  and perhaps also for finding an efficient decision procedure for the recognition problem~(R). 
Please see the concluding section of the proceedings version \cite{grab:2021:TERMGRAPH-postproceedings}
for some explanation and motivation about why we think that these hopes may be justified.

The idea to define structure-constrained process graphs via edge-labelings with constraints, on which \LLEEwitnesses\ are based,
originated from `higher-order term graphs' that can be used for representing functional programs in a maximally compact, shared form 
(see \cite{grab:roch:2014,grab:2019}). There, additional concepts (scope sets of vertices, or abstraction-prefix labelings) 
are used to constrain the form of term graphs. 
The common underlying idea with \LLEEwitnesses\ is an enrichment of graphs that:
(i)~guarantees that graphs can be directly expressed by terms of some language, 
(ii)~does not significantly hamper sharing of represented subterms,
(iii)~is simple enough so as to keep reasoning about graph transformations feasible.

\section{Preliminaries on
         the process semantics of star expressions}%
  \label{prelims}         

In this section we define the process semantics of regular expressions 
as charts: finite labeled transition systems with initial states. 
We proceed by a sequence of definitions, and conclude by providing examples.

\begin{defi}\nf\label{def:StExp}
  We assume, for subsequent definitions implicitly, a set $\actions$ whose members we call \emph{actions}.
  The set $\StExpover{\actions}$ of \emph{star expressions over (actions in) $\actions$} is defined by the following grammar:
  \begin{center}
    $
    \astexp, \astexpi{1}, \astexpi{2}
      \;\;\BNFdefdby\;\;
        \stexpzero
          \BNFor
        \stexpone
          \BNFor
        \aact
          \BNFor
        \stexpsum{\astexpi{1}}{\astexpi{2}}
          \BNFor
        \stexpprod{\astexpi{1}}{\astexpi{2}}
          \BNFor
        \stexpit{\astexp} 
          \qquad\text{(where $\aact\in\actions$)} \punc{.}
    $
  \end{center}
  The \emph{(syntactic) star height} $\sth{\astexp}$ of a star expression $\astexp\in\StExpover{\actions}$
  denotes the maximal nesting depth of stars in $\astexp$
  via: $\sth{\stexpzero} \defdby \sth{\stexpone}  \defdby \sth{\aact} \defdby 0$, 
       $\sth{\stexpsum{\astexpi{1}}{\astexpi{2}}} \defdby \sth{\stexpprod{\astexpi{1}}{\astexpi{2}}}
                                                  \defdby \max\setexp{\sth{\astexpi{1}}, \sth{\astexpi{2}}}$, 
       and $\sth{\stexpit{\astexp}} \defdby 1 + \sth{\astexp}$. 
\end{defi}

\begin{defi}
  A \emph{labeled transition system (LTS) with termination and actions in $\actions$}
  is a 4\nb-tuple
  $\tuple{\states,\actions,\sred,\sterminates}$
  where $\states$ is a \nonempty\ set of \emph{states},
  $\actions$ is a set of $\emph{action labels}$,
  $\sred \subseteq \states\times\actions\times\states$ is the \emph{labeled transition relation},
  and $\sterminates \subseteq \verts$ is a set of \emph{states with immediate termination},
  for~short, the \emph{terminating~states}.
  In such an \LTS, we write $\astatei{1} \lt{\aact} \astatei{2}$ for a transition $\triple{\astatei{1}}{\aact}{\astatei{2}}\in\sred$,
  and we write $\terminates{\astate}$ for a terminating state $\astate\in\termexts$. 
  %
\end{defi}

\begin{defi}\nf\label{def:process:semantics}\enlargethispage{1ex}
  The transition system specification (TSS)~$\StExpTSSover{\actions}$ is defined by the axioms and rules:   
  \begin{center}\label{StExpTSS}
    $
    \begin{gathered}
      \begin{aligned}
         &
         \AxiomC{\phantom{$\terminates{\stexpone}$}}
         \UnaryInfC{$\terminates{\stexpone}$}
         \DisplayProof
         & \hspace*{-1.5ex} & & &
         \AxiomC{$ \terminates{\astexpi{1}} $}
         \UnaryInfC{$ \terminates{(\stexpsum{\astexpi{1}}{\astexpi{2}})} $}
         & \hspace*{2ex} & & & 
         \AxiomC{$ \terminates{\astexpi{i}} $}
         \RightLabel{\scriptsize $(i\in\setexp{1,2})$}
         \UnaryInfC{$ \terminates{(\stexpsum{\astexpi{1}}{\astexpi{2}})} $}
         \DisplayProof
         & \hspace*{2ex} & & & 
         \AxiomC{$\terminates{\astexpi{1}}$}
         \AxiomC{$\terminates{\astexpi{2}}$}
         \BinaryInfC{$\terminates{(\stexpprod{\astexpi{1}}{\astexpi{2}})}$}
         \DisplayProof
         & \hspace*{2ex} & & & 
         \AxiomC{$\phantom{\terminates{\stexpit{\astexp}}}$}
         \UnaryInfC{$\terminates{(\stexpit{\astexp})}$}
         \DisplayProof
      \end{aligned} 
      \\[1ex]
      \begin{aligned}
        & 
        \AxiomC{$\phantom{a_i \:\lt{a_i}\: \stexpone}$}
        \UnaryInfC{$a \:\lt{a}\: \stexpone$}
        \DisplayProof
        & & & &
        \AxiomC{$ \astexpi{i} \:\lt{a}\: \astexpacci{i} $}
        \RightLabel{\scriptsize $(i\in\setexp{1,2})$}
        \UnaryInfC{$ \stexpsum{\astexpi{1}}{\astexpi{2}} \:\lt{a}\: \astexpacci{i} $}
        \DisplayProof 
        & & & & 
        \AxiomC{$ \astexpi{1} \:\lt{a}\: \astexpacci{1} $}
        \UnaryInfC{$ \stexpprod{\astexpi{1}}{\astexpi{2}} \:\lt{a}\: \stexpprod{\astexpacci{1}}{\astexpi{2}} $}
        \DisplayProof
        & &
        \AxiomC{$\terminates{\astexpi{1}}$}
        \AxiomC{$ \astexpi{2} \:\lt{a}\: \astexpacci{2} $}
        \BinaryInfC{$ \stexpprod{\astexpi{1}}{\astexpi{2}} \:\lt{a}\: \astexpacci{2} $}
        \DisplayProof
        & & & & 
        \AxiomC{$\astexp \:\lt{a}\: \astexpacc$}
        \UnaryInfC{$\stexpit{\astexp} \:\lt{a}\: \stexpprod{\astexpacc}{\stexpit{\astexp}}$}
        \DisplayProof
      \end{aligned}
    \end{gathered}
    $
  \end{center}
  If $\astexp \lt{\aact} \astexpacc$ is derivable in $\StExpTSSover{\actions}$, for $\astexp,\astexpacc\in\StExpover{\actions}$,
  and $\aact\in\actions$, then we say that $\astexpacc$ is a \emph{derivative} of $\astexp$.
  If $\terminates{\astexp}$ is derivable in $\StExpTSSover{\actions}$, for $\astexp\in\StExpover{\actions}$,
  then we say that $\astexp$ \emph{permits immediate termination}.
  If $\terminates{\astexp}$ is not derivable in $\StExpTSSover{\actions}$, then we write $\notterminates{\astexp}$.
  
  The \TSS\ $\StExpTSSover{\actions}$ defines the process semantics for star expressions in $\StExpover{\actions}$
  in the form of the \emph{star expressions \LTS}~$\LTSof{\StExpover{\actions}} \defdby \LTSdefdby{\StExpTSSover{\actions}}$,
  where $\LTSdefdby{\StExpTSSover{\actions}} = \tuple{\StExpover{\actions},\actions,\transs,\termexts}$ 
  is the \emph{\LTS\ generated by} $\StExpTSSover{\actions}$, that is,
  its set $\transs \subseteq  \StExpover{\actions}\times\actions\times\StExpover{\actions}$ of transitions,
  and its set $\termexts\subseteq\StExpover{\actions}$ of vertices with the im\-me\-di\-ate-ter\-mi\-na\-tion property
  are defined via derivations in~$\StExpTSSover{\actions}$ in the natural way.
  
  For every set $S \subseteq \StExpover{\actions}$ we denote by $\LTSof{S}$ 
  the \emph{$S$\nb-generated sub-LTS} $\tuple{\genvertsof{S},\actions,\gentranssof{S},\gentermextsof{S}}$ of the star expressions \LTS\ $\LTSof{\StExpover{\actions}}$, 
  that is, the sub-LTS whose states are those in $S$ together with all star expressions that are reachable
  from ones in $S$ via paths that follow transitions of $\LTSof{\StExpover{\actions}}$.
\end{defi}

\begin{defi}
  A \emph{chart} is a 5-tuple $\achart = \tuple{\verts,\actions,\start,\transs,\termexts}$
  such that $\tuple{\verts,\actions,\transs,\termexts}$ is an \LTS, which we call the \emph{LTS underlying $\achart$},
  with a \underline{\smash{finite}} set $\verts$ of states, which we call \emph{vertices}, and 
  that is rooted by a specified \emph{start vertex} $\start\in\verts$;
  we call $\transs \subseteq \verts\times\actions\times\verts$ the set of \emph{labeled transitions} of $\achart$,
  and $\termexts \subseteq \verts$ the set the vertices \emph{with immediate termination} of $\achart$.
  %
\end{defi}

\begin{defi}\nf\label{def:chart:interpretation}
  The \emph{chart interpretation $\chartof{\astexp} = \tuple{\vertsof{\astexp},\actions,\astexp,\transs,\termexts}$}  
  of a star expression $\astexp\in\StExpover{\actions}$ is 
  the $\setexp{\astexp}$\nb-gen\-er\-ated sub-LTS $\LTSof{\setexp{\astexp}} = \tuple{\genvertsof{\setexp{\astexp}},\actions,\gentranssof{\setexp{\astexp}},\gentermextsof{\setexp{\astexp}}}$
  of $\LTSof{\StExpover{\actions}}$.%
    \footnote{That this generated sub-LTS is finite, and so that $\chartof{\astexp}$ is indeed a chart,
      follows from Antimirov's result in \cite{anti:1996} that every regular expression has only finitely many iterated `partial derivatives',
      which coincide with repeated derivatives from Def.~\ref{def:process:semantics}.}
\end{defi}

\newcommand{\picarrowstart}{\raisebox{2pt}{\begin{tikzpicture}%
                                             \draw[<-,very thick,>=latex,chocolate,shorten <=2pt](0,0) -- ++ (180:{12pt});%
                                           \end{tikzpicture}}}
  
\newcommand{\pictermvert}{\begin{tikzpicture}%
                           \node[draw,chocolate,very thick,circle,minimum width=2.5pt,fill,inner sep=0pt,outer sep=2pt](v){};%
                           \draw[thick,chocolate] (v) circle (0.12cm);%
                         \end{tikzpicture}}

\begin{exa}\label{ex:chart:interpretation}\nf
  In the chart illustrations below and later, we indicate 
  the start vertex by a brown arrow~\picarrowstart,
  and the property of a vertex $\avert$ to permit immediate termination
  by emphasizing $\avert$ in brown as \pictermvert\ including a boldface ring. 
  Each of the vertices of the charts $\chartnei{1}$ and $\chartof{\astexp}$ below permits immediate termination, 
  but none of the vertices of the other charts does.\vspace{0ex}
  \begin{center}
    \begin{tikzpicture}

\matrix[anchor=north,row sep=0.9cm,every node/.style={draw,very thick,circle,minimum width=2.5pt,fill,inner sep=0pt,outer sep=2pt}] at (2.8,0.2) {
  \node(v-0){};
  \\
  \node(v-1){};
  \\
  \node(v-2){};
  \\
};
\calcLength(v-0,v-1){mylen}
\draw[<-,very thick,>=latex,chocolate](v-0) -- ++ (90:{0.45*\mylen pt});
\path(v-0) ++ (0cm,1cm) node{\Large $\chartof{\cstexpi{0}}$};
\path(v-0) ++ ({0.3*\mylen pt},{0.25*\mylen pt}) node{$\cstexpi{0}$};
\draw[->](v-0) to node[right,xshift={-0.05*\mylen pt},pos=0.45]{\small $\aact$} (v-1); 
\path(v-1) ++ ({0.325*\mylen pt},0cm) node{$\cstexpi{1}$};
\draw[->](v-1) to node[right,xshift={-0.05*\mylen pt},pos=0.45]{\small $\aact$} (v-2);
\draw[->,shorten <= 5pt](v-1) to[out=175,in=180,distance={0.75*\mylen pt}]
         node[above,yshift={0.05*\mylen pt},pos=0.7]{\small $\cact$} (v-0);
\path(v-2) ++ (-0cm,{-0.275*\mylen pt}) node{$\cstexpi{2}$};
\draw[->](v-2) to[out=180,in=185,distance={0.75*\mylen pt}] 
               node[below,yshift={0.0*\mylen pt},pos=0.2]{\small $\bact$} (v-1);
\draw[->](v-2) to[out=0,in=0,distance={1.3*\mylen pt}]  
               node[below,yshift={0.00*\mylen pt},pos=0.125]{\small $\bact$} (v-0);

%
\matrix[anchor=north,row sep=0.75cm,column sep=0.924cm,
        every node/.style={draw,very thick,circle,minimum width=2.5pt,fill,inner sep=0pt,outer sep=2pt}] at (0,-0.025) {
  \node(C-2-1){};  &                  &     \node(C-2-2){};
  \\
                   &                  &                  
  \\
                   & \node(C-2-3){};  &
  \\
};
\draw[<-,very thick,>=latex,color=chocolate](C-2-1) -- ++ (90:0.5cm);  

\draw[->,bend right,distance=0.65cm] (C-2-1) to node[above]{$\aacti{2}$} (C-2-2); 
\draw[->,bend right,distance=0.65cm] (C-2-1) to node[left]{$\aacti{3}$}  (C-2-3);

\path(C-2-1) ++ (1.15cm,1.25cm) node{\Large $\chartnei{2}$};
\draw[->,bend right,distance=0.65cm]  (C-2-2) to node[above]{$\aacti{1}$} (C-2-1); 
\draw[->,bend left,distance=0.65cm]  (C-2-2) to node[right]{$\aacti{3}$} (C-2-3);

\draw[->,bend right,distance=0.45cm] (C-2-3) to node[left]{$\aacti{1}$}  ($(C-2-1)+(0.25cm,-0.2cm)$);
\draw[->,bend left,distance=0.65cm]  (C-2-3) to node[right]{$\aacti{2}$} (C-2-2);

%
\matrix[anchor=north,row sep=0.8cm,column sep=0.924cm,
        every node/.style={draw,very thick,circle,minimum width=2.5pt,fill,inner sep=0pt,outer sep=2pt}] at (0,-2.75) {
  \node[color=chocolate](C-1-0){};  &                  &     \node[color=chocolate](C-1-1){};
  \\
                   &                  &                  
  \\
                   & \node[draw=none,fill=none](C-1-2){};  &
  \\
};
\draw[<-,very thick,>=latex,chocolate,shorten <=2pt](C-1-0) -- ++ (90:{0.5*\mylen pt});
%
\path(C-1-1) ++ (0.1cm,0.6cm) node{\Large $\chartnei{1}$};


\draw[thick,chocolate] (C-1-1) circle (0.12cm);
\draw[thick,chocolate] (C-1-0) circle (0.12cm);
\draw[->,bend left,distance=0.65cm,shorten <=2pt,shorten >=2pt] (C-1-0) to node[above]{$a$} (C-1-1); 
\draw[->,bend left,distance=0.65cm,shorten <=2pt,shorten >=2pt] (C-1-1) to node[below]{$b$} (C-1-0); 


\matrix[anchor=north,row sep=1cm,column sep=0.7cm,every node/.style={draw,very thick,circle,minimum width=2.5pt,fill,inner sep=0pt,outer sep=2pt}] at (6.3,0) {
                 &  \node[color=chocolate](e){};
  \\
  \node[color=chocolate](e-1){};  &  \node[draw=none,fill=none](dummy){};  
                                   & \node[color=chocolate](e-2){}; 
  \\
};
\calcLength(e,dummy){mylen}
\draw[<-,very thick,>=latex,chocolate,shorten <=2pt](e) -- ++ (90:{0.5*\mylen pt});
\path(e) ++ ({0.25*\mylen pt},{0.2*\mylen pt}) node{$\astexp$};  
\path(e) ++ ({0*\mylen pt},{1*\mylen pt}) node{\Large $\chartof{\astexp}$};

\draw[chocolate,thick] (e) circle (0.12cm);
\draw[->,shorten <=2pt,shorten >=2pt,out=200,in=90] (e) to node[left,pos=0.4]{$\aact$} (e-1);
\draw[->,shorten <=2pt,shorten >=2pt,out=-20,in=90] (e) to node[right,pos=0.225,xshift={0.1*\mylen pt}]{$\bact$} (e-2);

\draw[chocolate,thick] (e-1) circle (0.12cm);
\path(e-1) ++ ({-0.35*\mylen pt},{-0.035*\mylen pt}) node{$\astexpi{1}$}; 
\draw[->,shorten <=2pt,shorten >=2pt,out=220,in=140,distance={1.25*\mylen pt}] (e-1) to node[left]{$\aact$} (e-1);
\draw[->,shorten <=2pt,shorten >=2pt,out=-20,in=200,distance={0.6*\mylen pt}] (e-1) to node[below]{$\bact$} (e-2);

\draw[thick,chocolate] (e-2) circle (0.12cm);
\path(e-2) ++ ({0.35*\mylen pt},{-0.035*\mylen pt}) node{$\astexpi{2}$};  
\draw[->,shorten <=2pt,shorten >=2pt,out=-40,in=40,distance={1.25*\mylen pt}] (e-2) to node[above,yshift={0*\mylen pt},pos=0.75]{$\bact$} (e-2);
\draw[->,shorten <=2pt,shorten >=2pt,out=160,in=20,distance={0.6*\mylen pt}] (e-2) to node[above]{$\aact$} (e-1);

\matrix[anchor=north,row sep=1cm,column sep=0.9cm,every node/.style={draw,very thick,circle,minimum width=2.5pt,fill,inner sep=0pt,outer sep=2pt}] at (11.5,0.1) {
                 &  \node(f){};
  \\[-0.1cm]
  \node(f-1){};  &  \node[draw=none,fill=none](dummy){};  
                                   & \node(f-2){}; 
  \\[0.3cm]
                 &  \node(f-3){};  &                      & \node[draw=none,fill=none](helper){};
  \\
};
\calcLength(f,dummy){mylen}
\path (helper) ++ ({0.5*\mylen pt},{-0.35*\mylen pt}) node[draw,very thick,circle,minimum width=2.5pt,fill,inner sep=0pt,outer sep=2pt](sink){}; 
\draw[<-,very thick,>=latex,chocolate](f) -- ++ (90:{0.5*\mylen pt});
\path(f) ++ ({0.3*\mylen pt},{0.2*\mylen pt}) node{$\bstexp$};  
\path(f) ++ ({0*\mylen pt},{1*\mylen pt}) node{\Large $\chartof{\bstexp}$};

\draw[->
         ] (f) to node[left,pos=0.4]{$\aacti{1}$} (f-1);
\draw[->
        ] (f) to node[right,pos=0.4]{$\aacti{2}$} (f-2);
\draw[->,out=180,in=180,distance={2.95*\mylen pt},shorten >=4pt] (f) to node[above,pos=0.3,yshift={0.05*\mylen pt}]{$\aacti{3}$} (f-3);

\path(f-1) ++ ({-0.35*\mylen pt},{-0.01*\mylen pt}) node{$\bstexpi{1}$};  
\draw[->,out=270,in=160] (f-1) to node[left]{$\aacti{3}$} (f-3);
\draw[->,out=220,in=140,distance={1.25*\mylen pt}]  (f-1) to node[left,xshift={0.1*\mylen pt}]{$\aacti{1}$} (f-1);
\draw[->,out=-30,in=210,distance={0.65*\mylen pt}]  (f-1) to node[above,yshift={-0.065*\mylen pt}]{$\aacti{2}$} (f-2);
\draw[->,out=-80,in=162.5,distance={1.3*\mylen pt}] (f-1) to node[above,pos=0.75,yshift={-0.05*\mylen pt},xshift={0*\mylen pt}]{$\bacti{1}$} (sink);

\path(f-2) ++ ({0.4*\mylen pt},{-0.01*\mylen pt}) node{$\bstexpi{2}$};  
\draw[->,out=270,in=20,distance={0.65*\mylen pt}] (f-2) to node[right]{$\aacti{3}$} (f-3);
\draw[->,out=-40,in=40,distance={1.25*\mylen pt}] (f-2) to node[right]{$\aacti{2}$} (f-2);
\draw[->,out=150,in=30,distance={0.65*\mylen pt}] (f-2) to node[above]{$\aacti{1}$} (f-1);
\draw[->] (f-2) to node[above,pos=0.7,yshift={0.075*\mylen pt},xshift={0.05*\mylen pt}]{$\bacti{2}$} (sink);

\path(f-3) ++ ({0*\mylen pt},{-0.3*\mylen pt}) node{$\bstexpi{3}$};  
\draw[->,out=230,in=310,distance={1.25*\mylen pt}] (f-3) to node[left,pos=0.2,xshift={0.1*\mylen pt}]{$\aacti{3}$} (f-3);
\draw[->,out=90,in=-30,distance={0.65*\mylen pt},shorten >=10pt] (f-3) to node[pos=0.42,left,xshift={0.11*\mylen pt}]{$\aacti{1}$} (f-1);
\draw[->,out=90,in=210,distance={0.65*\mylen pt}] (f-3) to node[right,pos=0.5]{$\aacti{2}$} (f-2);
\draw[->] (f-3) to node[below,pos=0.5,yshift={0.05*\mylen pt}]{$\bacti{3}$} (sink);

\path(sink) ++ ({0*\mylen pt},{0*\mylen pt}) node[right]{$\textit{sink}$};

\end{tikzpicture}
  
  \end{center}\vspace{-7ex}
  The charts $\chartnei{1}$ and $\chartnei{2}$ are not expressible by the process interpretation modulo bisimilarity,
  as shown by Bosscher \cite{boss:1997} 
           and Milner \cite{miln:1984}. \pagebreak 
  That $\chartnei{2}$ is not expressible, Milner proved by observing the absence of a `loop behaviour'.
  That concept has inspired the stronger concept of `loop chart' in Def.~\ref{def:loop:chart} below.
  For the weaker result that $\chartnei{1}$ and $\chartnei{2}$ are not expressible by \onefree\ star expressions
  please see Remark~\ref{rem:expressible}.
  
  The chart $\chartof{\cstexpi{0}}$ on the left above is the interpretation of the 
  star expression $\cstexpi{0} = \stexpprod{(\stexpprod{(\stexpprod{\stexpone}{\aact})}{\cstexp})}{\stexpzero}$
  where $\cstexp = \stexpit{(\stexpsum{\stexpprod{\cact}{\aact}}{\stexpprod{\aact}{(\stexpsum{\bact}{\stexpprod{\bact}{\aact}})}})}$,
  and with $\cstexpi{1} = \stexpprod{(\stexpprod{\stexpone}{\cstexp})}{\stexpzero}$,
  and $\cstexpi{2} = \stexpprod{(\stexpprod{(\stexpprod{\stexpone}{(\stexpsum{\bact}{\stexpprod{\bact}{\aact}})})}{\cstexp})}{\stexpzero}$
  as remaining vertices.
  The chart $\chartof{\astexp}$ is the interpretation
  of 
  $\astexp =  
     \stexpit{\stexpprod{(\stexpit{\aact}}{\stexpit{\bact}})}$
  with     
  $\astexpi{1} = \stexpprod{(\stexpprod{(\stexpprod{\stexpone}{\stexpit{\aact}})}{\stexpit{\bact}})}{\astexp}$,
  and $\astexpi{2} = \stexpprod{(\stexpprod{\stexpone}{\stexpit{\bact}})}{\astexp}$.
  With 
  $\bstexpi{0} = 
        \stexpsum{\stexpprod{\aacti{1}}{(\stexpsum{\stexpone}{\stexpprod{\bacti{1}}{\stexpzero}})}}
                            {\stexpsum{{\stexpprod{\aacti{2}}{(\stexpsum{\stexpone}{\stexpprod{\bacti{2}}{\stexpzero}})}}}
                                      {{\stexpprod{\aacti{3}}{(\stexpsum{\stexpone}{\stexpprod{\bacti{3}}{\stexpzero}})}}}}$,
  the chart $\chartof{\bstexp}$ is the interpretation of
  $\bstexp = \stexpprod{\stexpit{\bstexpi{0}}}{\stexpzero}$ 
  with
  $\bstexpi{i} = \stexpprod{(\stexpprod{\stexpprod{\stexpone}{(\stexpsum{\stexpone}{\stexpprod{\bacti{i}}{\stexpzero}})}}
                                       {\stexpit{\bstexpi{0}}})}
                          {\stexpzero}$ (for $i\in\setexp{1,2,3}$),
  and $\textit{sink} = \stexpprod{(\stexpprod{(\stexpprod{\stexpone}{\stexpzero})}{\stexpit{\bstexpi{0}}})}
                                 {\stexpzero}$.
  The chart interpretations $\chartof{\astexp}$ and $\chartof{\bstexp}$, which will be used later,
  have been constructed as expressible variants of the not expressible charts $\chartnei{1}$ and $\chartnei{2}\,$.
  In particular, $\chartof{\astexp}$ contains $\chartnei{1}$ as a subchart\vspace*{-0.1ex},
  and $\chartof{\bstexp}$ contains $\chartnei{2}$ as a subchart
  (a `subchart' arises by taking a part of a chart, and picking~a~start~vertex).
  %
\end{exa}


\begin{defi}
            \label{def:bisims:LTSs}
  For $i\in\setexp{1,2}$ we consider the \LTSs\
  $\aLTSi{i} = \tuple{\statesi{i},\actions,\sredi{i},\termextsi{i}}$.
  By a \emph{bisimulation between $\aLTSi{1}$ and $\aLTSi{2}$}
  we mean a binary relation $\abisim \subseteq \statesi{1}\times\statesi{2}$ 
  with the properties that it is \nonempty, that is, $\abisim\neq\emptyset$, 
  and that for every $\pair{\astatei{1}}{\astatei{2}}\in\abisim$ the following three conditions hold:
  \begin{itemize}[labelindent=5.5em,leftmargin=*,itemsep=0.25ex,labelsep=1em]
    \item[(forth)]
      $ \forall \astateacci{1}\in\statesi{1}
          \forall \aact\in\actions
              \bigl(\,
                \astatei{1} \lti{\aact}{1} \astateacci{1}
                  \;\;\Longrightarrow\;\;
                    \exists \astateacci{2}\in\statesi{2}
                      \bigl(\, \astatei{2} \lti{\aact}{2} \astateacci{2} 
                                 \logand
                               \pair{\astateacci{1}}{\astateacci{2}}\in\abisim \,)
            \,\bigr) \punc{,} $
      
    \item[(back)]
      $ \forall \astateacci{2}\in\statesi{2}
          \forall \aact\in\actions
            \bigr(\,
              \bigr(\, 
                \exists \astateacci{1}\in\statesi{1}
                  \bigl(\, \astatei{1} \lti{\aact}{1} \astateacci{1} 
                             \logand
                           \pair{\astateacci{1}}{\astateacci{2}}\in\abisim \,)
              \,\bigr)           
                  \;\;\Longleftarrow\;\;
                \astatei{2} \lti{\aact}{2} \astateacci{2}
              \,\bigr) \punc{,} $
      
    \item[(termination)]
      $ \terminatesi{1}{\astatei{1}}
          \;\;\Longleftrightarrow\;\;
            \terminatesi{2}{\astatei{2}} \punc{.}$
  \end{itemize}
  
  For a partial function $\safun \funin \statesi{1} \rightharpoonup \statesi{2}$
  we say that $\safun$ \emph{defines} a bisimulation between $\aLTSi{1}$ and $\aLTSi{2}$
  if its graph $\descsetexp{ \pair{\avert}{\fap{f}{\avert}} }{\avert\in\statesi{1}}$ is a bisimulation between $\aLTSi{1}$ and $\aLTSi{2}$.
  We call this graph a \emph{functional}~bisimulation. 
\end{defi}

\begin{defi}[bisimulation between charts]\label{def:bisims:charts}
  For $i\in\setexp{1,2}$ we consider the charts
  $\acharti{i} = \tuple{\vertsi{i},\actions,\starti{i},\sredi{i},\termextsi{i}}$.
  
  By a \emph{bisimulation between $\acharti{1}$ and $\acharti{2}$}
  we mean a binary relation $\abisim \subseteq \vertsi{1}\times\vertsi{2}$ 
  such that $\pair{\starti{1}}{\starti{2}}\in\abisim$ ($\abisim$ relates the start vertices of $\acharti{1}$ and $\acharti{2}$),
  and $\abisim$ is a bisimulation between the underlying \LTSs. 
  
  We denote by $\acharti{1} \bisim \acharti{2}$, and say that $\acharti{1}$ and $\acharti{2}$ are \emph{bisimilar},
  if there is a bisimulation between $\acharti{1}$ and $\acharti{2}$.
  By $\acharti{1} \funbisim \acharti{2}$ we denote the stronger statement 
  that there is a partial function $f \funin \vertsi{1} \rightharpoonup \vertsi{2}$ whose graph $\descsetexp{ \pair{\avert}{\fap{f}{\avert}} }{\avert\in\vertsi{1}}$  
  is a bisimulation between~$\acharti{1}$~and~$\acharti{2}$. 
\end{defi}

Each of four charts $\chartnei{1}$, $\chartnei{2}$, $\chartof{\cstexpi{0}}$, and $\chartof{\bstexp}$  
  in Ex.~\ref{ex:chart:interpretation} is a `bisimulation collapse':
by that we mean a chart for which no two subcharts that are induced by different vertices are bisimilar. 
But that does not hold for $\chartof{\astexp}$ in which any two subcharts that are induced by different vertices are bisimilar.

\section{Loop existence and elimination}%
  \label{LEE}

The chart translation $\chartof{\cstexpi{0}}$ of $\cstexpi{0}$ as in Ex.~\ref{ex:chart:interpretation}
satisfies the `loop existence and elimination' property \LEE\ that we will explain in this section. 
For this purpose we summarize Section~3 in \cite{grab:fokk:2020:LICS}, 
  and in doing so we adapt the concepts defined there from \onefree\ star expressions
  to the full class of star expressions as defined in Section~\ref{prelims}.
The property \LEE\ is defined by a dynamic elimination procedure that analyses the structure of a chart
by peeling off `loop sub\-charts'. Such subcharts capture,
within the chart interpretation of a star expression~$\astexp$,
the behavior of the iteration of $\bstexp$ within innermost subterms $\stexpit{\bstexp}$ in~$\astexp$.

\begin{defi}\label{def:loop:chart}\nf
  A chart~$\aloop = \tuple{\verts,\actions,\start,\transs,\termexts}$ is called a \emph{loop chart} if:\vspace*{-0.25ex}
  \begin{enumerate}[label={{\rm (L\arabic*)}},leftmargin=*,align=left,itemsep=0ex]
    \item{}\label{loop:1}
      There is an infinite path from the start vertex $\start$.
    \item{}\label{loop:2}  
      Every infinite path from $\start$ returns to $\start$ after a positive number of transitions
      (and so visits $\start$ infinitely often).
    \item{}\label{loop:3}
      Immediate termination is only permitted at the start vertex, that is, $\termexts\subseteq\setexp{\start}$.
  \end{enumerate}\vspace*{-0.25ex}
  We call the transitions from $\start$ \emph{\loopentry\ transitions},
  and all other transitions \emph{\loopbody\ transitions}.
  A loop chart~$\aloop$ is a \emph{loop subchart of} a chart $\achart$
  if it is the subchart of $\achart$ rooted at some vertex $\avert\in\verts$ 
           that is generated, for a nonempty set $\asettranss$ of transitions of $\achart$ from $\avert$,
           by all paths that start with a transition in $\asettranss$ and continue onward until $\avert$ is reached again
  (so the transitions in $\asettranss$ are the \loopentrytransitions~of~$\aloop$).
\end{defi}

Both of the not expressible charts $\chartnei{1}$ and $\chartnei{2}$ in Ex.~\ref{ex:chart:interpretation} are not loop charts:
$\chartnei{1}$ violates \ref{loop:3}, and $\chartnei{2}$ violates \ref{loop:2}.
Moreover, none of these charts contains a loop subchart.
The chart $\chartof{\cstexpi{0}}$ in Ex.~\ref{ex:chart:interpretation} is not
a loop chart either, as it violates \ref{loop:2}. But we will see that $\chartof{\cstexpi{0}}$ has loop subcharts. 

Let $\aloop$ be a loop subchart of a chart~$\achart$.
The result of \emph{eliminating $\aloop$ from $\achart$}
arises by removing all \loopentrytransitions\ of $\aloop$ from $\achart$, 
and then removing all vertices and transitions that become unreachable. 
We say that a chart $\achart$ has the \emph{loop existence and elimination property (LEE)}
if the procedure, started on~$\achart$, of repeated eliminations of loop subcharts
results in a chart that does~not~have~an~infinite~path.

For the not expressible charts $\chartnei{1}$ and $\chartnei{2}$ in Ex.~\ref{ex:chart:interpretation} the procedure stops immediately,
as they do not contain loop subcharts. Since both of them have infinite paths,
it follows that they do not satisfy LEE. 

Now we consider (see below) three runs of the elimination procedure for the
chart~$\chartof{\cstexpi{0}}$ in Ex.~\ref{ex:chart:interpretation}. 
The \loopentrytransitions\ of loop subcharts that are removed 
in each step are marked in bold.
Each run witnesses that $\chartof{\cstexpi{0}}$ satisfies \LEE. 
Note that loop elimination does not yield a unique result.%
  \footnote{
    Confluence 
               can be shown 
    if a pruning operation is added that permits to drop transitions to deadlocking vertices.}
\begin{center}
  \input{figs/ex-12-LEE.tex}
\end{center}\vspace*{-0.75ex}
Runs can be recorded, in the original chart, by attaching a marking label
to transitions that get removed in the elimination procedure. 
That label is the sequence number of the corresponding elimination step.
For the three runs of loop elimination above we get the following 
marking labeled versions of $\achart$, respectively:
\begin{center} 
  \begin{tikzpicture}
  %
  %
  
\matrix[anchor=north,row sep=0.9cm,column sep=1cm,every node/.style={draw,very thick,circle,minimum width=2.5pt,fill,inner sep=0pt,outer sep=2pt}] at (-3.75,0) {
  \node(v-0-hat-1){};
  \\
  \node(v-1-hat-1){};
  \\
  \node(v-2-hat-1){};
  \\
};
\calcLength(v-0-hat-1,v-1-hat-1){mylen}
\draw[<-,very thick,>=latex,chocolate](v-0-hat-1) -- ++ (90:{0.45*\mylen pt});
\path(v-0-hat-1) ++ ({0.3*\mylen pt},{0.25*\mylen pt}) node{$\averti{0}$};
\draw[->](v-0-hat-1) to node[right,xshift={-0.05*\mylen pt},pos=0.45]{\small $\aact$} (v-1-hat-1); 
\path(v-1-hat-1) ++ ({0.325*\mylen pt},0cm) node{$\averti{1}$};
\draw[->](v-1-hat-1) to node[right,xshift={-0.05*\mylen pt},pos=0.45]{\small $\aact$} (v-2-hat-1);
\draw[->,very thick,shorten <= 5pt](v-1-hat-1) to[out=175,in=180,distance={0.75*\mylen pt}] 
         node[left,pos=0.5,xshift={0.05*\mylen pt}]{$\darkcyan{\loopnsteplab{1}}$} 
         node[above,yshift={0.05*\mylen pt},pos=0.7]{\small $\cact$} (v-0-hat-1);
\path(v-2-hat-1) ++ (-0cm,{-0.275*\mylen pt}) node{$\averti{2}$};
\draw[->,very thick](v-2-hat-1) to[out=180,in=185,distance={0.75*\mylen pt}] 
               node[left,pos=0.5,xshift={0.05*\mylen pt}]{$\darkcyan{\loopnsteplab{2}}$} 
               node[below,yshift={0.0*\mylen pt},pos=0.2]{\small $\bact$} (v-1-hat-1);
\draw[->,very thick](v-2-hat-1) to[out=0,in=0,distance={1.3*\mylen pt}] 
               node[right,pos=0.5,xshift={-0.05*\mylen pt}]{$\darkcyan{\loopnsteplab{3}}$} 
               node[below,yshift={0.00*\mylen pt},pos=0.125]{\small $\bact$} (v-0-hat-1);

\matrix[anchor=north,row sep=0.9cm,every node/.style={draw,very thick,circle,minimum width=2.5pt,fill,inner sep=0pt,outer sep=2pt}] at (0,0) {
  \node(v-0-hat-2){};
  \\
  \node(v-1-hat-2){};
  \\
  \node(v-2-hat-2){};
  \\
};
\calcLength(v-0-hat-2,v-1-hat-2){mylen}
\draw[<-,very thick,>=latex,chocolate](v-0-hat-2) -- ++ (90:{0.45*\mylen pt});
\path(v-0-hat-2) ++ ({0.3*\mylen pt},{0.25*\mylen pt}) node{$\averti{0}$};
\draw[->](v-0-hat-2) to node[right,xshift={-0.05*\mylen pt},pos=0.45]{\small $\aact$} (v-1-hat-2); 
\path(v-1-hat-2) ++ ({0.325*\mylen pt},0cm) node{$\averti{1}$};
\draw[->,very thick](v-1-hat-2) to node[right,xshift={-0.05*\mylen pt},pos=0.45]{\small $\aact$} 
                                   node[left,pos=0.45,xshift={0.05*\mylen pt}]{$\darkcyan{\loopnsteplab{1}}$} (v-2-hat-2);
\draw[->,very thick,shorten <= 5pt](v-1-hat-2) to[out=175,in=180,distance={0.75*\mylen pt}] 
         node[left,pos=0.5,xshift={0.05*\mylen pt}]{$\darkcyan{\loopnsteplab{2}}$} 
         node[above,yshift={0.05*\mylen pt},pos=0.7]{\small $\cact$} (v-0-hat-2);
\path(v-2-hat-2) ++ (-0cm,{-0.275*\mylen pt}) node{$\averti{2}$};
\draw[->](v-2-hat-2) to[out=180,in=185,distance={0.75*\mylen pt}] 
               node[below,yshift={0.0*\mylen pt},pos=0.2]{\small $\bact$} (v-1-hat-2);
\draw[->](v-2-hat-2) to[out=0,in=0,distance={1.3*\mylen pt}]  
               node[below,yshift={0.00*\mylen pt},pos=0.125]{\small $\bact$} (v-0-hat-2);

\matrix[anchor=north,row sep=0.9cm,every node/.style={draw,very thick,circle,minimum width=2.5pt,fill,inner sep=0pt,outer sep=2pt}] at (3.75,0) {
  \node(v-0-hat-3){};
  \\
  \node(v-1-hat-3){};
  \\
  \node(v-2-hat-3){};
  \\
};
\calcLength(v-0-hat-3,v-1-hat-3){mylen}
\draw[<-,very thick,>=latex,chocolate](v-0-hat-3) -- ++ (90:{0.45*\mylen pt});
\path(v-0-hat-3) ++ ({0.3*\mylen pt},{0.25*\mylen pt}) node{$\averti{0}$};
\draw[->](v-0-hat-3) to node[right,xshift={-0.05*\mylen pt},pos=0.45]{\small $\aact$} (v-1-hat-3); 
\path(v-1-hat-3) ++ ({0.325*\mylen pt},0cm) node{$\averti{1}$};
\draw[->,very thick](v-1-hat-3) to node[left,xshift={0.05*\mylen pt},pos=0.45]{\small $\aact$} 
                                   node[right,pos=0.45,xshift={-0.05*\mylen pt}]{$\darkcyan{\loopnsteplab{1}}$} (v-2-hat-3);
\draw[->,very thick,shorten <= 5pt](v-1-hat-3) to[out=175,in=180,distance={0.75*\mylen pt}] 
                                                 node[left,pos=0.5,xshift={0.05*\mylen pt}]{$\darkcyan{\loopnsteplab{1}}$} 
                                                 node[right,xshift={-0.05*\mylen pt},pos=0.5]{\small $\cact$} (v-0-hat-3);
\path(v-2-hat-3) ++ (-0cm,{-0.275*\mylen pt}) node{$\averti{2}$};
\draw[->](v-2-hat-3) to[out=180,in=185,distance={0.75*\mylen pt}] 
                     node[below,yshift={0.0*\mylen pt},pos=0.2]{\small $\bact$} (v-1-hat-3);
\draw[->](v-2-hat-3) to[out=0,in=0,distance={1.3*\mylen pt}] 
                     node[below,yshift={0.00*\mylen pt},pos=0.125]{\small $\bact$} (v-0-hat-3);

\end{tikzpicture}  
\end{center}\vspace*{-0.75ex}
Since all three runs were successful (as they yield charts without infinite paths), 
these recordings (marking-labeled charts) can be viewed as `\LEEwitnesses'.
We now will define the concept of a `layered \LEEwitness' (\LLEEwitness), i.e., a \LEEwitness\
with the added constraint that in the recorded run of the loop elimination procedure
it never happens that a \loopentrytransition\ is removed from within the body of 
a previously removed loop subchart. This refined concept has simpler properties, but is equally powerful.


\begin{defi}
  An \emph{\entrybodylabeling\ of a chart $\achart = \tuple{\verts,\actions,\start,\transs,\termexts}$}
    is a chart~$\acharthat = \tuple{\verts,\actions\times\nat,\start,\transshat,\termexts}$ 
  where $\transshat\subseteq\actions\times\nat$ arises from $\transs$ by adding, for each transition~$\atrans = \triple{\averti{1}}{\aact}{\averti{2}}\in\transs$, 
  to the action label~$\aact$ of $\atrans$ a 
  \emph{marking label} $\aLname\in\nat$,
  yielding \mbox{$\atranshat = \triple{\averti{1}}{\pair{\aact}{\aLname}}{\averti{2}} \in \transshat$}. 
  In an \entrybodylabeling\ we call transitions with marking label $0$ \emph{body transitions},
  and transitions with marking labels in $\natplus$ \emph{entry transitions}.
    
  By an \emph{\entrybodylabeling\ of an \LTS~$\aLTS = \tuple{\states,\actions,\transs,\termexts}$}
  we mean an \LTS~$\aLTShat = \tuple{\states,\actions\times\nat,\transshat,\termexts}$ 
  in which $\transshat \subseteq \states\times(\actions\times\nat)\times\states$ 
  arises from $\transs$ by adding marking labels in $\nat$ to the action labels of transitions.
  
  We define the following designations and concepts for \LTSs, but will use them also for charts.
  Let $\aLTShat$ be an \entrybodylabeling\ of $\aLTS$, and let $\avert$ and $\bvert$ be vertices of $\aLTS$ and $\aLTShat$.
  We denote by $\avert \redi{\bodylabcol} \bvert$\vspace*{-3pt} that there is a body transition $\avert \lt{\pair{\aact}{0}} \bvert$ in $\aLTShat$ for some $\aact\in\actions$,
  and by $\avert \redi{\darkcyan{\loopnsteplab{\aLname}}} \bvert$, for $\aLname\in\natplus$
  that there\vspace*{-3.5pt} is an \entrytransition\ $\avert \lt{\pair{\aact}{\aLname}} \bvert$ in $\aLTShat$ for some $\aact\in\actions$.
  By the set $\entriesof{\aLTShat}$ of \emph{\entrytransition\ identifiers} we denote the set of pairs $\pair{\avert}{\aLname}\in\verts\times\natplus$ 
  such that an \entrytransition\ $\sloopnstepto{\darkcyan{\aLname}}$ departs from $\avert$ in $\aLTShat$. 
  For $\pair{\avert}{\aLname}\in\entriesof{\aLTShat}$,
  we define by $\indsubchartinat{\aLTShatsubscript}{\avert,\aLname}$ the subchart of $\aLTS$ 
  with start vertex $\start$
  that consists of the vertices and transitions which occur on paths in $\aLTS$ 
  as follows: any path that starts with a $\sloopnstepto{\darkcyan{\aLname}}$ \entrytransition\ from $\avert$,
  continues with body transitions only (thus does not cross another \entrytransition), and halts immediately if $\avert$~is~revisited. 
\end{defi}

The three recordings obtained above of the loop elimination procedure for the chart~$\chartof{\cstexpi{0}}$ in Ex.~\ref{ex:chart:interpretation}
indicate \entrybodylabelings\ by signaling the \entrytransitions\ but neglecting body-step labels $0$.

\begin{defi}\label{def:LLEEwitness}\nf
  Let $\aLTS = \tuple{\verts,\actions,\transs,\termexts}$ be an \LTS.
  A \emph{\LLEEwitness} (a \emph{layered \LEEwitness}) \emph{of $\aLTS$}
  is an \entrybodylabeling~$\aLTShat$ of $\aLTS$ that satisfies the following three properties:
  \begin{enumerate}[label={\mbox{\rm (W\arabic*)}},leftmargin=*,align=left,itemsep=0ex]
    \item{}\label{LLEEw:1}%
      \emph{Body-step termination:} There is no infinite path of $\sredi{\bodylabcol}$ transitions in $\aLTS$.
    \item{}\label{LLEEw:2}%
      \emph{Loop condition:} 
      For all $\pair{\astate}{\aLname}\in\entriesof{\aLTShat}$, \mbox{}
        $\indsubchartinat{\aLTShatsubscript}{\astate,\aLname}$ is a loop chart.
    \item{}\label{LLEEw:3}%
      \emph{Layeredness:} 
      For all $\pair{\astate}{\aLname}\in\entriesof{\aLTShat}$, 
        if an \entrytransition\ $\bstate \loopnstepto{\darkcyan{\bLname}} \bstateacc$ departs from a state $\bstate\neq \astate$ of $\indsubchartinat{\aLTShatsubscript}{\astate,\aLname}$,
        then its marking label $\bLname$ satisfies $\bLname < \aLname$.
  \end{enumerate}
  The condition \ref{LLEEw:2} justifies to call an \entrytransition\ in a \LLEEwitness\
  a \emph{\loopentrytransition}.
  For a \loopentrytransition\ $\sredi{\darkcyan{\loopnsteplab{\bLname}}}$ with $\bLname\in\natplus$, we call $\bLname$ its \emph{loop level}. 
  
  For a chart $\achart = \tuple{\verts,\actions,\start,\transs,\termexts}$, we define a \emph{\LLEEwitness~$\acharthat$}
  analogously as an \entrybodylabeling~$\acharthat$ of $\achart$ with the properties \ref{LLEEw:1}, \ref{LLEEw:2}, and \ref{LLEEw:3} with $\achart$ and $\acharthat$
  for $\aLTS$ and $\aLTShat$, respectively.
\end{defi}

\begin{exa}
  The three \entrybodylabelings\ of the chart $\chartof{\cstexpi{0}}$ in Ex.~\ref{ex:chart:interpretation}
  that we have obtained as recordings of runs of the loop elimination procedure
  are \LLEEwitnesses~of~$\chartof{\cstexpi{0}}$, as is easy to verify. 
\end{exa}

\begin{prop}
  If a chart $\achart$ has a \LLEEwitness, then it satisfies \LEE. 
\end{prop}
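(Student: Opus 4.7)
The plan is to prove this by induction on the number $N$ of entry transitions occurring in a \LLEEwitness~$\acharthat$ of $\achart$. The inductive strategy mirrors the loop elimination procedure: at each stage we use the labeling to single out a loop subchart that we can legally eliminate, and we argue that restricting the labeling to the resulting chart yields a \LLEEwitness\ of strictly smaller entry-transition count.

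For the base case $N=0$, $\acharthat$ contains only body transitions, so condition \ref{LLEEw:1} directly rules out any infinite path in $\achart$. No loop subchart need be eliminated, and $\achart$ itself witnesses \LEE. For the inductive step, I would pick any entry transition identifier $\pair{\astate}{\aLname}\in\entriesof{\acharthat}$. By \ref{LLEEw:2}, the induced subchart $\indsubchartinat{\acharthat}{\astate,\aLname}$ is a loop chart, and inspection of Def.~\ref{def:loop:chart} shows that it is a loop subchart of $\achart$, with loop-entry transitions the $\sloopnstepto{\darkcyan{\aLname}}$ transitions departing from $\astate$. Let $\achartacc$ be the result of eliminating this loop subchart from $\achart$, and $\acharthatacc$ the restriction of $\acharthat$ to $\achartacc$. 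I would claim $\acharthatacc$ is a \LLEEwitness\ of $\achartacc$ with strictly fewer entry transitions. Properties \ref{LLEEw:1} and \ref{LLEEw:3} are preserved trivially: no transitions are added and no marking labels are altered, so any body path in $\achartacc$ was already a body path in $\achart$, and the level inequality from \ref{LLEEw:3} carries over verbatim. The entry-transition count strictly decreases because at least one level-$\aLname$ entry transition from $\astate$ is deleted.

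The main obstacle will be the verification of \ref{LLEEw:2} for $\acharthatacc$. For an entry transition identifier $\pair{\bstate}{\bLname}$ still present in $\acharthatacc$ (so $\bstate$ remains reachable in $\achartacc$, and the pair differs from $\pair{\astate}{\aLname}$ or else $\bLname \neq \aLname$), one must argue that the induced subchart in $\achartacc$ is again a loop chart. The key observation is that elimination only deletes level-$\aLname$ entry transitions from $\astate$ plus vertices and transitions that thereby become unreachable, whereas $\indsubchartinat{\acharthat}{\bstate,\bLname}$ is built from body transitions following an initial level-$\bLname$ entry from $\bstate$; none of these ingredients are removed as such by the elimination, and anything that disappears does so only because it has become globally unreachable in $\achartacc$. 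A short case analysis then shows that the induced subchart in $\achartacc$ is the $\achartacc$\nb-reachable restriction of $\indsubchartinat{\acharthat}{\bstate,\bLname}$, and the loop-chart properties \ref{loop:1}--\ref{loop:3} are stable under such a restriction (returning paths to $\bstate$ remain returning paths, immediate termination is still confined to $\bstate$, and infinite paths from $\bstate$ persist since the body is untouched). Applying the induction hypothesis to $\acharthatacc$ and $\achartacc$ then finishes the argument: the loop elimination procedure, started on $\achart$, first eliminates $\indsubchartinat{\acharthat}{\astate,\aLname}$ and then, by the induction hypothesis, terminates on $\achartacc$ in a chart without infinite paths, as required for~\LEE.
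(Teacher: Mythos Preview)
There is a genuine gap in your inductive step. You pick an \emph{arbitrary} entry transition identifier $\pair{\astate}{\aLname}$ and claim that $\indsubchartinat{\acharthat}{\astate,\aLname}$ is a loop subchart of $\achart$. But Def.~\ref{def:loop:chart} defines a loop subchart as the subchart generated by a set $\asettranss$ of transitions from $\astate$ via \emph{all} paths in $\achart$ that continue until $\astate$ is revisited, whereas $\indsubchartinat{\acharthat}{\astate,\aLname}$ follows only \emph{body} transitions after the initial entry. These coincide only if no entry transitions depart from body vertices of $\indsubchartinat{\acharthat}{\astate,\aLname}$. For an arbitrary $\aLname$ that can fail: take $\avert \redi{\looplab{2}} \bvert$, $\bvert \redi{\looplab{1}} \cvert$, $\cvert \redi{\bodylab} \bvert$, $\bvert \redi{\bodylab} \avert$ (with termination only at $\avert$). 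This is a valid \LLEEwitness, and $\indsubchartinat{\acharthat}{\avert,2}$ is the two-vertex loop $\avert \to \bvert \to \avert$. But the subchart of $\achart$ generated by the transition $\avert \to \bvert$ in the sense of Def.~\ref{def:loop:chart} also contains the cycle $\bvert \to \cvert \to \bvert$, so it violates \ref{loop:2} and is \emph{not} a loop chart. Hence removing the level-$2$ entry at $\avert$ is not a legal loop-elimination step.

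The paper's proof avoids this by always picking an identifier $\pair{\avert}{\aLname}$ with $\aLname$ \emph{minimal}. Then \ref{LLEEw:3} forces every entry transition from a body vertex of $\indsubchartinat{\acharthat}{\avert,\aLname}$ to have level strictly below $\aLname$, which is impossible by minimality; so body vertices carry only body transitions, and $\indsubchartinat{\acharthat}{\avert,\aLname}$ really is the generated subchart of Def.~\ref{def:loop:chart}. Your inductive framework is fine once you make this minimality choice; the rest of your argument (preservation of \ref{LLEEw:1}--\ref{LLEEw:3} under elimination, decrease of the entry-transition count) then goes through, and in fact simplifies because the eliminated loop contains no nested entry transitions at all.
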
\vspace{-2ex}

\begin{proof}
  Let $\acharthat$ be a \LLEEwitness\ of a chart $\achart$. 
  Repeatedly pick an \entrytransition\ identifier $\pair{\avert}{\aLname}\in\entriesof{\acharthat}$ 
  with $\aLname\in\natplus$ minimal, remove the loop subchart that is generated by \loopentry\ transitions of level $\aLname$ from $\avert$
  (it is indeed a loop by condition~\ref{LLEEw:2} on $\acharthat$, 
   noting that minimality of $\aLname$ and condition~\ref{LLEEw:3} on $\acharthat$ ensure
   the absence of departing \loopentrytransitions\ of lower level), and perform garbage collection. 
  Eventually the part of $\achart$ that is reachable by body transitions from the start vertex
  is obtained. This subchart does not have an infinite path due to condition~\ref{LLEEw:1} on $\acharthat$.
  Therefore $\achart$ satisfies \LEE. 
\end{proof}

The condition \ref{LLEEw:2} on a \LLEEwitness~$\acharthat$ of a chart~$\achart$
requires the loop structure defined by $\acharthat$ to be hierarchical. 
This permits to extract a star expression $\astexptilde$ from $\acharthat$ (defined in \cite{grab:fokk:2020:LICS,grab:fokk:2020:arxiv}) 
that expresses $\achart$ in the sense that $\chartof{\astexptilde}\bisim\achart$ holds,
intuitively by unfolding the underlying chart~$\achart$ to the syntax tree of~$\astexptilde$.

\begin{rem}\label{rem:expressible}\nf
  In \cite{grab:fokk:2020:LICS,grab:fokk:2020:arxiv} we established a connection between charts that have a \LLEEwitness\ (and hence satisfy \LEE)
  and charts that are expressible by \onefree\ star expressions (that is, star expressions without~$\stexpone$, and with binary star iteration instead of unary star iteration).
  By saying that a chart~$\achart$ `is expressible' we mean here that $\achart$ is bisimilar to the 
  chart interpretation $\chartof{\astexptilde}$ of some star expression $\astexptilde$. 
  Now Corollary~6.10 in \cite{grab:fokk:2020:LICS,grab:fokk:2020:arxiv} states 
  that if a chart is expressible by a \onefree\ star expression 
  then its bisimulation collapse has a \LLEEwitness, and thus satisfies \LEE. 
  This statement entails that neither of the charts $\chartnei{1}$ and $\chartnei{2}$ in Ex.~\ref{ex:chart:interpretation}
  is expressible by a \onefree\ star expression,
  because both are bisimulation collapses, and neither of them satisfies \LEE, 
  as we have already observed above.
\end{rem}

\section{LEE may fail for process interpretations of star expressions}%
  \label{LLEE:fail}

The chart interpretations $\chartof{\astexp}$ of $\astexp$, and $\chartof{\bstexp}$ of $\bstexp$
in Ex.~\ref{ex:chart:interpretation} do not satisfy \LEE, contrasting with $\chartof{\cstexpi{0}}$. 
%
For $\chartof{\astexp}$ we find the following run of the loop elimination procedure
that successively eliminates the two loop subcharts that are induced by the cycling transitions at $\astexpi{1}$ and $\astexpi{2}$:
\begin{center}
  \begin{tikzpicture}
  %
  
\matrix[anchor=north,row sep=1cm,column sep=0.7cm,every node/.style={draw,very thick,circle,minimum width=2.5pt,fill,inner sep=0pt,outer sep=2pt}] at (0,0) {
                 &  \node[color=chocolate](e){};
  \\
  \node[color=chocolate](e-1){};  &  \node[draw=none,fill=none](dummy){};  
                                   & \node[color=chocolate](e-2){}; 
  \\
};
\calcLength(e,dummy){mylen}
\draw[<-,very thick,>=latex,chocolate,shorten <=2pt](e) -- ++ (90:{0.5*\mylen pt});
\path(e) ++ ({0.25*\mylen pt},{0.2*\mylen pt}) node{$\astexp$};  
\path(e) ++ ({-1.25*\mylen pt},{0.2*\mylen pt}) node{\Large $\chartof{\astexp}$};

\draw[thick,chocolate] (e) circle (0.12cm);
\draw[->,shorten <=2pt,shorten >=2pt,out=200,in=90] (e) to node[left,pos=0.4]{$\aact$} (e-1);
\draw[->,shorten <=2pt,shorten >=2pt,out=-20,in=90] (e) to node[right,pos=0.4]{$\bact$} (e-2);

\draw[thick,chocolate] (e-1) circle (0.12cm);
\path(e-1) ++ ({-0.35*\mylen pt},{-0.035*\mylen pt}) node{$\astexpi{1}$}; 
\draw[->,very thick,shorten <=2pt,shorten >=2pt,out=220,in=140,distance={1.25*\mylen pt}] (e-1) to node[left]{$\aact$} (e-1);
\draw[->,shorten <=2pt,shorten >=2pt,out=-20,in=200,distance={0.6*\mylen pt}] (e-1) to node[below]{$\bact$} (e-2);

\draw[thick,chocolate] (e-2) circle (0.12cm);
\path(e-2) ++ ({0.35*\mylen pt},{-0.035*\mylen pt}) node{$\astexpi{2}$};  
\draw[->,shorten <=2pt,shorten >=2pt,out=-40,in=40,distance={1.25*\mylen pt}] (e-2) to node[right]{$\bact$} (e-2);
\draw[->,shorten <=2pt,shorten >=2pt,out=160,in=20,distance={0.6*\mylen pt}] (e-2) to node[above]{$\aact$} (e-1);

\matrix[anchor=north,row sep=1cm,column sep=0.7cm,every node/.style={draw,very thick,circle,minimum width=2.5pt,fill,inner sep=0pt,outer sep=2pt}] at (5,0) {
                 &  \node[color=chocolate](e-1){};
  \\
  \node[color=chocolate](e-1-1){};  &  \node[draw=none,fill=none](dummy-1){};  
                                   & \node[color=chocolate](e-2-1){}; 
  \\
};
\calcLength(e-1,dummy-1){mylen}
\draw[<-,very thick,>=latex,chocolate,shorten <=2pt](e-1) -- ++ (90:{0.5*\mylen pt});
\path(e-1) ++ ({0.25*\mylen pt},{0.2*\mylen pt}) node{$\astexp$};  

\draw[thick,chocolate] (e-1) circle (0.12cm);
\draw[->,shorten <=2pt,shorten >=2pt,out=200,in=90] (e-1) to node[left,pos=0.4]{$\aact$} (e-1-1);
\draw[->,shorten <=2pt,shorten >=2pt,out=-20,in=90] (e-1) to node[right,pos=0.4]{$\bact$} (e-2-1);

\draw[thick,chocolate] (e-1-1) circle (0.12cm);
\path(e-1-1) ++ ({-0.35*\mylen pt},{-0.035*\mylen pt}) node{$\astexpi{1}$}; 
\draw[->,shorten <=2pt,shorten >=2pt,out=-20,in=200,distance={0.6*\mylen pt}] (e-1-1) to node[below]{$\bact$} (e-2-1);

\draw[thick,chocolate] (e-2-1) circle (0.12cm);
\path(e-2-1) ++ ({0.35*\mylen pt},{-0.035*\mylen pt}) node{$\astexpi{2}$};  
\draw[->,very thick,shorten <=2pt,shorten >=2pt,out=-40,in=40,distance={1.25*\mylen pt}] (e-2-1) to node[right]{$\bact$} (e-2-1);
\draw[->,shorten <=2pt,shorten >=2pt,out=160,in=20,distance={0.6*\mylen pt}] (e-2-1) to node[above]{$\aact$} (e-1-1);

\draw[-implies,thick,double equal sign distance,
               shorten <= 1.35cm,shorten >= 1.35cm
               ] (e) to node[below,pos=0.7]{\scriptsize elim} (e-1);

\matrix[anchor=north,row sep=1cm,column sep=0.7cm,every node/.style={draw,very thick,circle,minimum width=2.5pt,fill,inner sep=0pt,outer sep=2pt}] at (10,0) {
                 &  \node[color=chocolate](e-2){};
  \\
  \node[color=chocolate](e-1-2){};  &  \node[draw=none,fill=none](dummy-2){};  
                                   & \node[color=chocolate](e-2-2){}; 
  \\
};
\calcLength(e-2,dummy-2){mylen}
\draw[<-,very thick,>=latex,chocolate,shorten <=2pt](e-2) -- ++ (90:{0.5*\mylen pt});
\path(e-2) ++ ({0.25*\mylen pt},{0.2*\mylen pt}) node{$\astexp$}; 
\path(e-2) ++ ({1.25*\mylen pt},{0.2*\mylen pt}) node{\Large $\achart''$};

\draw[thick,chocolate] (e-2) circle (0.12cm);
\draw[->,shorten <=2pt,shorten >=2pt,out=200,in=90] (e-2) to node[left,pos=0.4]{$\aact$} (e-1-2);
\draw[->,shorten <=2pt,shorten >=2pt,out=-20,in=90] (e-2) to node[right,pos=0.4]{$\bact$} (e-2-2);

\draw[thick,chocolate] (e-1-2) circle (0.12cm);
\path(e-1-2) ++ ({-0.35*\mylen pt},{-0.035*\mylen pt}) node{$\astexpi{1}$}; 
\draw[->,shorten <=2pt,shorten >=2pt,out=-20,in=200,distance={0.6*\mylen pt}] (e-1-2) to node[below]{$\bact$} (e-2-2);

\draw[thick,chocolate] (e-2-2) circle (0.12cm);
\path(e-2-2) ++ ({0.35*\mylen pt},{-0.035*\mylen pt}) node{$\astexpi{2}$};  
\draw[->,shorten <=2pt,shorten >=2pt,out=160,in=20,distance={0.6*\mylen pt}] (e-2-2) to node[above]{$\aact$} (e-1-2);

\draw[-implies,thick,double equal sign distance,
               shorten <= 1.35cm,shorten >= 1.35cm
               ] (e-1) to node[below,pos=0.7]{\scriptsize elim} (e-2);

\end{tikzpicture}
\end{center}\vspace{-2.5ex}
The resulting chart $\achart''$ does not contain loop subcharts any more,
because taking, for example, a transition from $\astexpi{1}$ to $\astexpi{2}$ as an \entrytransition\ 
does not yield a loop subchart, because in the induced subchart immediate termination is not only possible at the start vertex~$\astexpi{1}$ but also in the body vertex $\astexpi{2}$,
in contradiction to \ref{loop:3}.  
But while $\achart''$ does not contain a loop subchart any more, it still has an infinite trace.  
Therefore it follows that $\chartof{\astexp}$ does not satisfy \LEE.

In order to see that $\chartof{\bstexp}$ does not satisfy \LEE, we can consider a run of the loop elimination procedure 
that successively removes the cyclic transitions at $\bstexpi{1}$, $\bstexpi{2}$, and $\bstexpi{3}$. 
After these removals a variant of the not expressible chart $\chartnei{2}$ is obtained that still describes an infinite behavior, 
but that does not contain any\vspace*{-0.25ex} loop subchart. The latter can be argued analogously as for $\chartnei{2}$, namely that 
for all choices of \entrytransitions\ between $\bstexpi{1}$, $\bstexpi{2}$, and $\bstexpi{3}$ the loop condition \ref{loop:2} fails.
We conclude that $\chartof{\bstexp}$ does not~satisfy~\LEE.

The reason for this failure of \LEE\ is that, while the syntax trees of star expressions can provide a nested loop-chart structure,
this is not guaranteed by the specific form of the TSS~$\StExpTSS$. 
Execution of an iteration $\stexpit{\cstexp}$ in an expression $\stexpprod{\stexpit{\cstexp}}{\dstexp}$  
leads eventually, in case that termination is reachable in $\cstexp$, to an iterated derivative $\stexpprod{(\stexpprod{\stexpone}{\stexpit{\cstexp}})}{\dstexp}$. 
Also, as in the examples above, 
an iterated derivative $\stexpprod{(\stexpprod{\cstexptilde}{\stexpit{\cstexp}})}{\dstexp}$ with $\terminates{\cstexptilde}$ may be reached. 
In these cases,
continued execution will bypass the initial term $\stexpprod{\stexpit{\cstexp}}{\dstexp}$, and either proceed 
with another execution of the iteration to $\stexpprod{(\stexpprod{\cstexpacc}{\stexpit{\cstexp}})}{\dstexp}$, where $\cstexpacc$ is a derivative of $\cstexp$,
or take a step into the exit to $\dstexpacc$, where $\dstexpacc$ is a derivative of~$\dstexp$.  
In both cases the execution does not return to the initial term $\stexpit{\cstexp}$ of the execution,
as would be required for a loop subchart at $\stexpit{\cstexp}$ to arise in accordance with loop condition~\ref{loop:2}.%

\section{Recovering LEE for a variant definition 
         of the process semantics}%
  \label{recover:LEE}

A remedy for the frequent failure of \LEE\ for the chart translation of star expressions can consist in the use
of `\onetransitions'. Such transitions may be used to create a back-link to an expression $\stexpprod{\stexpit{\cstexp}}{\dstexp}$
from an iterated derivative $\stexpprod{(\stexpprod{\cstexptilde}{\stexpit{\cstexp}})}{\dstexp}$ with $\terminates{\cstexptilde}$
(where $\cstexptilde$ is an iterated derivative of $\cstexp$) that is reached by a descent of the execution
into the body of $\cstexp$. 
This requires an adapted refinement of the TSS~$\StExpTSS$ from page~\pageref{StExpTSS}. 
While different such refinements are conceivable, 
our choice is to introduce \onetransitions\ most sparingly, making sure that 
every \onetransition\ can be construed as a backlink in an accompanying \LLEEwitness.


In particular
we want to create transition rules that facilitate a back-link to an expression $\stexpit{\cstexp}$ after the execution has descended into $\cstexp$
reaching $\stexpprod{\cstexptilde}{\stexpit{\cstexp}}$ with $\terminates{\cstexptilde}$.
In order to distinguish a concatenation expression $\stexpprod{\cstexptilde}{\stexpit{\cstexp}}$ 
that arises from the descent of the execution into an iteration
$\stexpit{\cstexp}$ from other concatenation expressions 
we introduce a variant operation $\sstexpstackprod$. 
The rules of the refined TSS should guarantee that in the example
the reached iterated derivative of $\stexpit{\cstexp}$ is a `stacked star expression' $\stexpstackprod{\csstexp}{\stexpit{\cstexp}}$
where $\csstexp$ is itself a stacked star expression that denotes an iterated derivative of $\cstexp$.
If now $\csstexp$ is also a star expression~$\cstexptilde$~with~$\terminates{\cstexptilde}$,
then the expression $\stexpstackprod{\csstexp}{\stexpit{\cstexp}}$ of the form $\stexpstackprod{\cstexptilde}{\stexpit{\cstexp}}$
should permit a \onetransition\ that returns to $\stexpit{\cstexp}$. 

This intuition guided the definition of the rules of the TSS~$\stackStExpTSS$ in Def.~\ref{def:stackStExpTSS} below,
starting from the adaptation of the 
              rule for steps from iterations $\stexpit{\astexp}$,
and the rule that creates \onetransition\ backlinks to iterations $\stexpit{\astexpi{2}}$ 
from stacked expressions $\stexpstackprod{\asstexpi{1}}{\stexpit{\astexpi{2}}}$ with $\terminates{\asstexpi{1}}$. 
The `stacked product' $\sstexpstackprod$ has the following features:
$\stexpstackprod{\asstexp}{\stexpit{\astexp}}$ never permits immediate termination;
for defining transitions it behaves similarly as concatenation $\sstexpprod$ except that a transition 
from $\stexpstackprod{\asstexp}{\stexpit{\astexp}}$ into $\stexpit{\astexp}$ when $\asstexp$ permits immediate termination
now requires a \protect\onetransition\ to $\stexpit{\astexp}$ first.
The formulation of these rules of $\stackStExpTSS$ led to the tailor-made set of stacked star expressions as defined below.

\begin{defi}
            \nf\label{def:stackStExp}\enlargethispage{2ex}
  Let $\actions$ be a set whose members we call \emph{actions}.
  The set $\stackStExpover{\actions}$ of \emph{stacked star expressions over (actions in) $\actions$} is defined by the following grammar:
  \begin{center}
    $
    \asstexp
      \;\;\BNFdefdby\;\;
        \astexp
          \BNFor
        \stexpprod{\asstexp}{\astexp}
          \BNFor
        \stexpstackprod{\asstexp}{\stexpit{\astexp}}
          \qquad\text{(where $\astexp\in\StExpover{\actions}$)} \punc{.}  
          $
  \end{center}
  Note that the set $\StExpover{\actions}$ of star expressions would arise again if the clause $\stexpstackprod{\asstexp}{\stexpit{\astexp}}$ were dropped.
  
  The \emph{star height} $\sth{\asstexp}$ of stacked star expressions $\asstexp$ is defined by adding 
  $\sth{ \stexpprod{\asstexp}{\astexp} } \defdby \max \setexp{ \sth{\asstexp} + \sth{\astexp} }$,
  and
  $\sth{ \stexpstackprod{\asstexp}{\stexpit{\astexp}} } \defdby \max \setexp{ \sth{\asstexp} + \sth{\stexpit{\astexp}} }$
  to the defining clauses for star height of star expressions.
 
  The \emph{projection function} $\sproj \funin \stackStExpover{\actions} \to \StExpover{\actions}$  
  is defined by interpreting $\sstexpstackprod$ as $\sstexpprod$ by the clauses:
  $\proj{\stexpprod{\asstexp}{\astexp}} \defdby \stexpprod{\proj{\asstexp}}{\astexp}$, \mbox{}
  $\proj{\stexpstackprod{\asstexp}{\stexpit{\astexp}}} \defdby \stexpprod{\proj{\asstexp}}{\stexpit{\astexp}}$, \mbox{}
  and $\proj{\astexp} \defdby \astexp$,
  for all $\asstexp\in\stackStExpover{\actions}$, and $\astexp\in\StExpover{\actions}$.
\end{defi}  
  
  %
  %
  
\renewcommand{\sone}{\alert{1}}  
  
\begin{defi}  
  By a \emph{labeled transition system with termination, actions in $\actions$ and empty steps (a \oneLTS)}
  we mean a 5\nb-tuple $\tuple{\states,\actions,\sone,\sred,\sterminates}$ 
    where $\states$ is a \nonempty\ set of \emph{states},
    $\actions$ is a set of $\emph{(proper) action labels}$,
    $\sone\notin\actions$ is the specified \emph{empty step label},
    $\sred \subseteq \states\times\oneactions\times\states$ is the \emph{labeled transition relation},
    where $\oneactions \defdby \actions \cup \setexp{\sone}$ is the set of action labels including $\sone$, 
    and $\sterminates \subseteq \verts$ is a set of \emph{states with immediate termination}. 
    Note that then $\tuple{\states,\oneactions,\sred,\sterminates}$ is an \LTS.
  In such a \oneLTS, 
  we call a transition in $\transs\cap(\states\times\actions\times\states)$ (labeled by a \emph{proper action} in $\actions$) 
          a \emph{proper transition},
  and a transition in $\transs\cap(\states\times\setexp{\sone}\times\states)$ (labeled by the \emph{empty-step symbol}~$\sone$)
          a \emph{\onetransition} .  
  Reserving non-underlined action labels like $\aact,\bact,\ldots$ for proper actions,
  we use underlined action label symbols like $\alert{\aoneact}$ for actions labels in the set $\oneactions$ 
  that includes the label $\sone$.    
\end{defi}
  
\begin{defi}\label{def:stackStExpTSS}\nf  
  The transition system specification~$\stackStExpTSSover{\actions}$ 
  has the following axioms and rules,
  where \mbox{$\alert{\stexpone}\notin\actions$} is an additional label (for representing empty steps),
  $\aact\in\actions$,
  $\alert{\aoneact}\in\oneactions\defdby\actions\cup\setexp{\alert{\stexpone}}$,
  stacked star expressions $\asstexpi{1},\asstexpi{2},\asstexpacci{1},\asstexpacci{2},\asstexpacc\in\stackStExpTSSover{\actions}$,
  and star expressions $\astexpi{1},\astexpi{2},\stexpit{\astexpi{2}},\stexpit{\astexp}\in\StExpover{\actions}$
  (here and below we highlight in red transitions that may involve $\alert{\stexpone}$\nb-transitions):
  \begin{center}
    $
   \begin{aligned}
     &
     \AxiomC{\phantom{$\terminates{\stexpone}$}}
     \UnaryInfC{$\terminates{\stexpone}$}
     \DisplayProof
     & & 
     & \hspace*{2ex} & & & 
     \AxiomC{$ \terminates{\astexpi{i}} $}
     \RightLabel{\scriptsize $(i\in\setexp{1,2})$}
     \UnaryInfC{$ \terminates{(\stexpsum{\astexpi{1}}{\astexpi{2}})} $}
     \DisplayProof
     & \hspace*{2ex} & & & 
     \AxiomC{$\terminates{\astexpi{1}}$}
     \AxiomC{$\terminates{\astexpi{2}}$}
     \BinaryInfC{$\terminates{(\stexpprod{\astexpi{1}}{\astexpi{2}})}$}
     \DisplayProof
     & \hspace*{2ex} & & & 
     \AxiomC{$\phantom{\terminates{\stexpit{\astexp}}}$}
     \UnaryInfC{$\terminates{(\stexpit{\astexp})}$}
     \DisplayProof
   \end{aligned} 
   $
   \\[0.35ex]
   $
   \begin{aligned}
     & 
     \AxiomC{$\phantom{a_i \:\lt{a_i}\: \stexpone}$}
     \UnaryInfC{$a \:\lt{a}\: \stexpone$}
     \DisplayProof
     & & & &
     \AxiomC{$ \astexpi{i} \:\lt{\aact}\: \asstexpacci{i} $}
     \RightLabel{\scriptsize $(i\in\setexp{1,2})$}
     \UnaryInfC{$ \stexpsum{\astexpi{1}}{\astexpi{2}} \:\lt{\aact}\: \asstexpacci{i} $}
     \DisplayProof 
     & & & & 
     \AxiomC{$ \asstexpi{1} \:\lt{\alert{\aoneact}}\: \asstexpacci{1} $}
     \UnaryInfC{$ \stexpprod{\asstexpi{1}}{\astexpi{2}} \:\lt{\alert{\aoneact}}\: \stexpprod{\asstexpacci{1}}{\astexpi{2}} $}
     \DisplayProof
     & &
     \AxiomC{$\terminates{\astexpi{1}}$}
     \AxiomC{$ \astexpi{2} \:\lt{a}\: \asstexpacci{2} $}
     \BinaryInfC{$ \stexpprod{\astexpi{1}}{\astexpi{2}} \:\lt{\aact}\: \asstexpacci{2} $}
     \DisplayProof
     & & & & 
     \AxiomC{$   \phantom{\astexpi{1}}
               \astexp \:\lt{a}\: \asstexpacc 
                 \phantom{\asstexpacci{1}} $}
     \UnaryInfC{$\stexpit{\astexp} \:\lt{a}\: \stexpstackprod{\asstexpacc}{\stexpit{\astexp}}$}
     \DisplayProof
     \\[-0.25ex]
     & 
     & & & &
     & & & & 
     \AxiomC{$ \asstexpi{1} \:\lt{\alert{\aoneact}}\: \asstexpacci{1} $}
     \UnaryInfC{$ \stexpstackprod{\asstexpi{1}}{\stexpit{\astexpi{2}}} \:\lt{\alert{\aoneact}}\: \stexpstackprod{\asstexpacci{1}}{\stexpit{\astexpi{2}}} $}
     \DisplayProof
     & &
     \AxiomC{$ \hspace*{5ex} \terminates{\astexpi{1}}\rule{0pt}{13pt} \hspace*{5ex} $}
     \UnaryInfC{$ \stexpstackprod{\astexpi{1}}{\stexpit{\astexpi{2}}} \:\lt{\alert{\stexpone}}\: \stexpit{\astexpi{2}} $}
     \DisplayProof
   \end{aligned}
    $
  \end{center}
  If $\asstexp \lt{\aoneact} \asstexpacc$ is derivable in $\stackStExpTSSover{\actions}$, for $\asstexp,\asstexpacc\in\StExpover{\actions}$,
  and $\aoneact\in\oneactions$, then we say that $\asstexpacc$ is a \emph{subderivative} of $\asstexp$.
  The \TSS\ $\stackStExpTSSover{\actions}$ defines the variant process semantics for stacked star expressions in $\stackStExpover{\actions}$
  as the \emph{stacked star expressions \oneLTS}~$\oneLTSof{\stackStExpover{\actions}} \defdby \oneLTSdefdby{\stackStExpTSSover{\actions}}$,
  where $\oneLTSdefdby{\stackStExpTSSover{\actions}} = \tuple{\stackStExpover{\actions},\actions,\transs,\termexts}$ 
  is the \emph{\oneLTS\ generated by} $\stackStExpTSSover{\actions}$, that is,
  its transitions $\transs \subseteq  \stackStExpover{\actions}\times\oneactions\times\stackStExpover{\actions}$,
  and its immediately terminating vertices $\termexts\subseteq\stackStExpover{\actions}$
  are defined via derivations in~$\StExpTSSover{\actions}$ in the natural way.
 
  For sets $S\subseteq\stackStExpover{\actions}$, $S$\nb-generated sub-1-LTSs are defined analogously as for~$\LTSof{\StExpover{\actions}}$.
\end{defi}%

\begin{defi}
  A \emph{\onechart} is a (rooted) \oneLTS\ $\tuple{\verts,\actions,\sone,\start,\transs,\termexts}$ 
  such that $\tuple{\verts,\actions,\sone,\transs,\termexts}$ is a \oneLTS\
  whose states we refer to as \emph{vertices},
  and where $\start\in\verts$ is called the \emph{start vertex} of the \onechart.  
\end{defi}

\begin{defi}\nf\label{def:onechart:interpretation}
  The \emph{\onechart\ interpretation $\onechartof{\astexp} = \tuple{\vertsof{\astexp},\actions,\sone,\astexp,\gentranssof{\astexp},\gentermextsof{\astexp}}$} 
  of a star expression $\astexp\in\StExpover{\actions}$ is 
  the $\astexp$\nb-rooted version 
    of the $\setexp{\astexp}$\nb-generated 
           sub-\oneLTS~$\oneLTSof{\setexp{\astexp}} = \tuple{\genvertsof{\setexp{\astexp}},\actions,\sone,\gentranssof{\setexp{\astexp}},\gentermextsof{\setexp{\astexp}}}$
  of $\oneLTSof{\StExpover{\actions}}$.
\end{defi}

\renewcommand{\iact}{\iactalert}
\renewcommand{\soneterminates}{\soneterminatesalert}%

%
%
In order to link the \oneLTS~$\oneLTSof{\stackStExpover{\actions}}$
to the \LTS~$\LTSof{\StExpover{\actions}}$, 
we need to take account\vspace*{-2pt} of the semantics of $\alert{\stexpone}$\nb-transitions as empty steps
(see Vrancken \cite{vran:1997}).
For this,\vspace*{-0.25ex}
  we define 
  `induced transitions'~$\silt{\cdot}$, and `induced termination'~$\soneterminates$ 
  as follows:                                                        
$\asstexp \ilt{\aact} \asstexpacc$ holds if there is a sequence of $\alert{\stexpone}$\nb-transitions from $\asstexp$ 
to\vspace*{-2pt} some $\asstexptilde$ from which there is an $\aact$\nb-transition to $\asstexpacc$
(the asymmetric notation $\silt{\cdot}$ is intended to reflect this asymmetry),
and $\oneterminates{\asstexp}$ holds\vspace*{-0.35ex}
if there is a sequence of $\alert{\stexpone}$\nb-transitions from $\asstexp$ to some $\asstexptilde$ with~$\terminates{\asstexptilde}$.

\begin{defi}\nf\label{def:indLTS}
  Let $\aoneLTS = \tuple{\states,\actions,\sone,\slt{},\termexts}$ be a \oneLTS.
  By the \emph{induced LTS of $\aoneLTS$}, and the \emph{LTS induced~by~$\aoneLTS$},
  we mean the \LTS~$\indLTSof{\aoneLTS} = \tuple{\states,\actions,\silt{\cdot},\soneterminatesalert}$
  where $\silt{\cdot} \subseteq \states\times\oneactions\times\states$ 
  is the \emph{induced transition relation},
  and~$\soneterminatesalert\subseteq\verts$ is the set of vertices with \emph{induced termination}
  that are defined as follows, for all $\astate,\astateacc\in\states$ and $\aact\in\actions\,$:
  \begin{enumerate}[label={(ind-\arabic{*})},align=right,leftmargin=*]
    \item{}\label{it:indtrans::def:indLTS}
      $\astate \ilt{\aact} \astateacc$ holds if 
        $\astate = \astatei{0} \lt{\sone} \astatei{1} \lt{\sone} \ldots \lt{\sone} \astatei{n} \lt{\aact} \astateacc$, 
        for some $\astatei{0},\ldots,\astatei{n}\in\verts$ and $n\in\nat$
      (we then say that there is an \emph{induced transition} between $\astate,\astateacc\in\states$ \emph{with respect to $\aoneLTS$}),
    \item{}\label{it:indterm::def:indLTS}  
      $\oneterminates{\astate}$ holds if
      $\astate = \astatei{0} \lt{\sone} \astatei{1} \lt{\sone} \ldots \lt{\sone} \astatei{n} \logand \oneterminates{\astatei{n}}$, 
        for some $\astatei{0},\ldots,\astatei{n}\in\verts$ and $n\in\nat$
      (then we say that $\astate$ \emph{has induced termination with respect to $\aoneLTS$}). 
  \end{enumerate}
\end{defi}

\begin{defi}\nf\label{def:indchart}
  The \emph{induced chart} $\indscchartof{\aonechart} = \tuple{\verts,\actions,\start,\silt{\cdot},\soneterminatesalert}$\vspace*{-2.5pt}
    of a \onechart~$\aonechart = \tuple{\verts,\actions,\sone,\start,\transs,\termexts}$ is defined
  analogously as the induced \LTS\ of a \oneLTS\ in Def.~\ref{def:indLTS}
  with the induced transition relation $\silt{\cdot}$ defined analogously to \ref{it:indtrans::def:indLTS},
  and the set of vertices $\soneterminatesalert$ with induced termination defined analogously to \mbox{\ref{it:indterm::def:indLTS}}.
\end{defi}

\begin{lem}\label{lem:StExpLTS:funbisim:StExpindLTS}
  The projection function $\sproj$ defines a bisimulation between the
  induced LTS $\indLTSof{\oneLTSof{\stackStExpover{\actions}}}$ of the stacked star expressions \oneLTS~$\stackStExpover{\actions}$
  and the star expressions LTS $\LTSof{\StExpover{\actions}}$.
\end{lem}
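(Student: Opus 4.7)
The plan is to establish the three conditions for the graph of $\sproj$ to be a bisimulation between $\indLTSof{\oneLTSof{\stackStExpover{\actions}}}$ and $\LTSof{\StExpover{\actions}}$: the forth condition, the back condition, and preservation of termination. As a preliminary I would record two structural observations about the rules of $\stackStExpTSSover{\actions}$. First, no termination axiom involves $\sstexpstackprod$, so $\terminates{\asstexp}$ derivable in $\stackStExpTSSover{\actions}$ forces $\asstexp$ to be a pure star expression in $\StExpover{\actions}$, and for such $\asstexp$ it is derivable in $\stackStExpTSSover{\actions}$ iff in $\StExpTSSover{\actions}$. Second, every derivation of an $\alert{\stexpone}$-transition $\asstexp \lt{\alert{\stexpone}} \asstexpacc$ has one of three shapes: the escape axiom $\stexpstackprod{\astexpi{1}}{\stexpit{\astexpi{2}}} \lt{\alert{\stexpone}} \stexpit{\astexpi{2}}$ with $\terminates{\astexpi{1}}$, or the congruence rule propagating an $\alert{\stexpone}$-step through the left argument of a $\sstexpprod$ or of a $\sstexpstackprod$.

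For the forth condition and the $(\Rightarrow)$ direction of termination, I would first prove by induction on the derivation of $\asstexp \lt{\alert{\stexpone}} \asstexpacc$ the key invariance lemma that every transition $\proj{\asstexpacc} \lt{\aact} \bstexp$ in $\StExpTSSover{\actions}$ lifts to $\proj{\asstexp} \lt{\aact} \bstexp$, and that $\terminates{\proj{\asstexpacc}}$ entails $\terminates{\proj{\asstexp}}$. In the escape case one exploits that $\proj{\stexpstackprod{\astexpi{1}}{\stexpit{\astexpi{2}}}} = \stexpprod{\astexpi{1}}{\stexpit{\astexpi{2}}}$ together with $\terminates{\astexpi{1}}$ forwards every transition of $\stexpit{\astexpi{2}}$ and its termination via the right-hand concatenation rule of $\StExpTSSover{\actions}$. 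The two congruence cases pass through the induction hypothesis because $\sproj$ sends $\sstexpstackprod$ to $\sstexpprod$. A parallel induction on derivations yields that proper-action transitions project, i.e., $\asstexp \lt{\aact} \asstexpacc$ with $\aact \in \actions$ implies $\proj{\asstexp} \lt{\aact} \proj{\asstexpacc}$. Chaining the invariance lemma along an induced transition $\asstexp = \asstexpi{0} \lt{\alert{\stexpone}} \cdots \lt{\alert{\stexpone}} \asstexpi{n} \lt{\aact} \asstexpacc$ yields the forth condition, and an analogous chain yields $\terminates{\proj{\asstexp}}$ from $\oneterminates{\asstexp}$.

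For the back condition and the $(\Leftarrow)$ direction of termination, I would proceed by structural induction on $\asstexp \in \stackStExpover{\actions}$, using throughout the preliminary that for every pure star expression $\astexp$ and every step $\astexp \lt{\aact} \astexpacc$ in $\StExpTSSover{\actions}$ there is a matching step $\astexp \lt{\aact} \asstexpacc$ in $\stackStExpTSSover{\actions}$ with $\proj{\asstexpacc} = \astexpacc$. This preliminary follows by a straightforward structural induction on $\astexp$, the $\sstexpit$-rule being the only place where $\sstexpstackprod$ is introduced into derivatives. The base case $\asstexp = \astexp$ of the main induction then reduces to the preliminary; for $\asstexp = \stexpprod{\asstexpi{1}}{\astexpi{2}}$ the left concatenation rule lifts via the induction hypothesis and the congruence rule of $\stackStExpTSSover{\actions}$, while the right concatenation rule (triggered by $\terminates{\proj{\asstexpi{1}}}$) is simulated by first using the inductive $(\Leftarrow)$ for $\asstexpi{1}$ to drive $\asstexpi{1}$ along $\alert{\stexpone}$-steps into a terminating pure star expression, and then matching the step on $\astexpi{2}$ via the preliminary.

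The hard part will be the case $\asstexp = \stexpstackprod{\asstexpi{1}}{\stexpit{\astexpi{2}}}$ under the right concatenation rule of $\StExpTSSover{\actions}$: a transition $\stexpit{\astexpi{2}} \lt{\aact} \stexpprod{\astexpacci{2}}{\stexpit{\astexpi{2}}}$ that is available from $\proj{\asstexp} = \stexpprod{\proj{\asstexpi{1}}}{\stexpit{\astexpi{2}}}$ thanks to $\terminates{\proj{\asstexpi{1}}}$ must be matched in the induced LTS by first driving $\asstexpi{1}$ to a terminating pure star expression $\asstexptilde$ along $\alert{\stexpone}$-steps, using the inductive $(\Leftarrow)$ of termination applied to $\asstexpi{1}$, then firing the escape axiom $\stexpstackprod{\asstexptilde}{\stexpit{\astexpi{2}}} \lt{\alert{\stexpone}} \stexpit{\astexpi{2}}$, and finally applying the preliminary to produce a step $\stexpit{\astexpi{2}} \lt{\aact} \stexpstackprod{\bsstexp}{\stexpit{\astexpi{2}}}$ in $\stackStExpTSSover{\actions}$ with $\proj{\bsstexp} = \astexpacci{2}$. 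The resulting induced transition $\asstexp \ilt{\aact} \stexpstackprod{\bsstexp}{\stexpit{\astexpi{2}}}$ projects to $\stexpprod{\astexpacci{2}}{\stexpit{\astexpi{2}}}$, as required. The $(\Leftarrow)$ direction of termination at $\sstexpstackprod$ is analogous, using the escape axiom to finish on $\stexpit{\astexpi{2}}$, whose termination is the axiom $\terminates{(\stexpit{\astexp})}$.
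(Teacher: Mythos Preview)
Your proposal is correct and follows essentially the same route as the paper: the invariance lemma for $\sone$\nb-steps is the paper's Lemma~\ref{lem:1:lem:indTSS:stackStExpTSS:2:StExpTSS} (statements \eqref{eq:3:lem:1:lem:indTSS:stackStExpTSS:2:StExpTSS} and \eqref{eq:4:lem:1:lem:indTSS:stackStExpTSS:2:StExpTSS}), the projection of proper transitions is Lemma~\ref{lem:stackStExpTSS:proj:StExpTSS}, and your structural induction on $\asstexp$ for the back direction mirrors the proof of Lemma~\ref{lem:indTSS:stackStExpTSS:2:StExpTSS}, statements \eqref{eq:3:lem:indTSS:stackStExpTSS:2:StExpTSS} and \eqref{eq:4:lem:indTSS:stackStExpTSS:2:StExpTSS}. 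The only organizational difference is that the paper packages the chaining through $\sone$\nb-paths into an auxiliary TSS $\stackStExpindTSS$ with admissible rules (Lemma~\ref{lem:2:lem:indTSS:stackStExpTSS:2:StExpTSS}), whereas you expand those chains by hand; this makes your version slightly more elementary but identical in content.
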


 
With this lemma, the proof of which we outline in Section~\ref{proofs}, we will be able to prove the following connection between the chart interpretation
and the \onechart\ interpretation of a star expression. 

\begin{thm}\label{thm:onechart-int:funbisim:chart-int}
  $\indscchartof{\onechartof{\astexp}} \funbisim \chartof{\astexp}$ holds for all $\astexp\in\StExpover{\actions}$,
  that is, there is a functional bisimulation from the induced chart of the \onechart\ interpretation of a star expression~$\astexp$
  to the chart interpretation of $\astexp$.
\end{thm}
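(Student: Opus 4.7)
The plan is to obtain the desired functional bisimulation as the restriction of the projection function $\sproj$ from Lemma~\ref{lem:StExpLTS:funbisim:StExpindLTS} to the vertices of $\indscchartof{\onechartof{\astexp}}$. Since $\onechartof{\astexp}$ is (by Def.~\ref{def:onechart-int-interpretation}\nolinebreak[4]~\ref{def:onechart:interpretation}) rooted at $\astexp$ and underlies the $\setexp{\astexp}$\nb-generated sub-\oneLTS of $\oneLTSof{\stackStExpover{\actions}}$, and $\chartof{\astexp}$ is rooted at $\astexp$ and underlies the $\setexp{\astexp}$\nb-generated sub-\LTS of $\LTSof{\StExpover{\actions}}$, and since $\astexp\in\StExpover{\actions}\subseteq\stackStExpover{\actions}$ with $\proj{\astexp}=\astexp$, the proposed witness maps start vertex to start vertex.

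First I would verify that the relation between induced-chart-of-generated-sub-oneLTS and generated-sub-chart-of-induced-LTS behaves as expected. Concretely, I need to observe that taking the induced \LTS\ commutes with taking the $\setexp{\astexp}$\nb-generated sub-structure: every induced transition $\asstexp \ilt{\aact} \asstexpacc$ in $\indLTSof{\oneLTSof{\stackStExpover{\actions}}}$ departing from a vertex $\asstexp$ reachable from $\astexp$ in $\onechartof{\astexp}$ is witnessed by $\sone$\nb-transitions and an $\aact$\nb-transition all of whose intermediate and target vertices are then also reachable from $\astexp$ in $\onechartof{\astexp}$, hence lie in $\vertsof{\indscchartof{\onechartof{\astexp}}}$. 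Similarly for induced termination. Thus $\indscchartof{\onechartof{\astexp}}$ coincides with the $\setexp{\astexp}$\nb-generated sub-\LTS\ of $\indLTSof{\oneLTSof{\stackStExpover{\actions}}}$.

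Next I would verify the three bisimulation conditions of Def.~\ref{def:bisims:LTSs} for $\sproj$ restricted to $\vertsof{\indscchartof{\onechartof{\astexp}}}$, viewed as a partial function into $\vertsof{\chartof{\astexp}}$. The forth, back, and termination clauses are inherited directly from Lemma~\ref{lem:StExpLTS:funbisim:StExpindLTS} once one shows that $\sproj$ respects reachability in both directions: (i)~if $\asstexp$ is reachable from $\astexp$ in $\indscchartof{\onechartof{\astexp}}$ then $\proj{\asstexp}$ is reachable from $\astexp$ in $\chartof{\astexp}$, which follows by induction on the reaching path via the forth clause; and (ii)~if $\astexpacc$ is reachable from $\astexp$ in $\chartof{\astexp}$ then any preimage under $\sproj$ reachable from $\astexp$ in $\indscchartof{\onechartof{\astexp}}$ via the back clause remains within $\vertsof{\indscchartof{\onechartof{\astexp}}}$. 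Together with $\sproj(\astexp)=\astexp$, this ensures that $\sproj$ is well-defined and surjective on the relevant vertex sets and gives a functional bisimulation.

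The main obstacle I anticipate is the commutation step above, establishing that the induced chart of the generated sub-\oneLTS\ equals the sub-\LTS\ of the induced \LTS\ generated from $\astexp$. The delicate point is that $\sone$\nb-transitions used to witness an induced transition in $\indLTSof{\oneLTSof{\stackStExpover{\actions}}}$ must themselves be present (together with their intermediate vertices) in $\onechartof{\astexp}$; this uses that the sub-\oneLTS\ generated by $\setexp{\astexp}$ is closed under every outgoing $\sone$\nb-step, which holds by construction of the generated substructure. Once this closure property is pinned down, everything else is a routine transfer of Lemma~\ref{lem:StExpLTS:funbisim:StExpindLTS} to the generated substructures.
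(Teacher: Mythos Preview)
Your proposal is correct and follows essentially the same route as the paper: restrict the projection function $\sproj$ from Lemma~\ref{lem:StExpLTS:funbisim:StExpindLTS} to the $\setexp{\astexp}$\nb-generated part, using $\proj{\astexp}=\astexp$ to match start vertices. The paper dispatches the commutation step you flag (that the induced chart of the generated sub-\oneLTS\ coincides with the generated sub-\LTS\ of the induced \LTS) in a single sentence appealing to the definition of induced transitions, but your more explicit treatment of closure under $\sone$\nb-steps is a sound way to justify that step.
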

  
For the construction of \LLEEwitnesses\ for the \onechart\ interpretation $\onechartof{\cdot}$
we need to distinguish `\txtnormedplus' stacked star expressions 
  that permit an induced-transition path of \underline{\smash{positive length}} to an expression with induced termination
from those that do not enable such a path. 
                        This property slightly strengthens normedness of expressions, 
  which means the existence of a path to an expression with immediate termination
  (or equally, an \underline{\smash{arbitrary-length}} induced-transition path to an expression~with~induced~termination).

\begin{defi}
  Let $\asstexp\in\stackStExpover{\actions}$.
  We say that $\asstexp$ is \emph{\txtnormedplus} (and respectively, $\asstexp$ is \emph{normed}) 
  if there\vspace*{-2pt} is $\asstexpacc\in\stackStExpover{\actions}$ 
  such that 
  $\asstexp \ilttc{\cdot} \asstexpacc$ and $\oneterminates{\asstexpacc}$ in $\indLTSof{\oneLTSof{\StExpover{\actions}}}$
    (resp., $\asstexp \redrtc \asstexpacc$ and $\terminates{\asstexpacc}$ in $\oneLTSof{\StExpover{\actions}}$).
\end{defi}

These properties permit inductive definitions, and therefore they are easily decidable.
Also, a stacked star expression is \txtnormedplus\ if and only if it enables a transition to a normed stacked~star~expression.


Now we define, similarly as we have done so for \onefree\ star expressions in \cite{grab:fokk:2020:LICS,grab:fokk:2020:arxiv},
a refinement of the TSS~$\stackStExpTSS$ into a TSS that will supply \entrybodylabelings\ for \LLEEwitnesses, 
by adding marking labels to the rules of $\stackStExpTSS$.
In particular, body labels are added to transitions that cannot return to their source expression. 
The rule for transitions from an iteration $\stexpit{\astexp}$ is split into the case in which $\astexp$ is \txtnormedplus\ or not.
Only if $\astexp$ is \txtnormedplus\ can $\stexpit{\astexp}$ return to itself after a positive number of steps, and then a \mbox{(loop-)} \entrytransition\ 
with the star height $\sth{\stexpit{\astexp}}$ of $\stexpit{\astexp}$ as its level is created;
otherwise a body label~is~introduced.

\begin{defi}\label{def:stackStExpTSShat}\nf
  The TSS~$\stackStExpTSShatover{\actions}$ 
  has the following rules, where $\darkcyan{\alab} \in \setexp{\bodylabcol} \cup \descsetexp{ \darkcyan{\loopnsteplab{\aLname}} }{ \aLname\in\natplus }$:\vspace{-1.5ex}
  \begin{center}
    $
    \begin{gathered}
      \begin{aligned}
        &
        \AxiomC{\phantom{$\terminates{\stexpone}$}}
        \UnaryInfC{$\terminates{\stexpone}$}
        \DisplayProof
        & & 
        & \hspace*{2ex} & & & 
        \AxiomC{$ \terminates{\astexpi{i}} $}
        \RightLabel{\scriptsize $(i\in\setexp{1,2})$}
        \UnaryInfC{$ \terminates{(\stexpsum{\astexpi{1}}{\astexpi{2}})} $}
        \DisplayProof
        & \hspace*{2ex} & & & 
        \AxiomC{$\terminates{\astexpi{1}}$}
        \AxiomC{$\terminates{\astexpi{2}}$}
        \BinaryInfC{$\terminates{(\stexpprod{\astexpi{1}}{\astexpi{2}})}$}
        \DisplayProof
        & \hspace*{2ex} & & & 
        \AxiomC{$\phantom{\terminates{\stexpit{\astexp}}}$}
        \UnaryInfC{$\terminates{(\stexpit{\astexp})}$}
        \DisplayProof
      \end{aligned} 
      \\
      \begin{aligned}
        & \hspace*{8ex}
        \AxiomC{$\phantom{a_i \:\lti{a_i}{\darkcyan{\bodylabcol}}\: \stexpone}$}
        \UnaryInfC{$a \:\lti{a}{\bodylabcol}\: \stexpone$}
        \DisplayProof
        & & 
        \AxiomC{$ \astexpi{i} \:\lti{\aact}{\darkcyan{\alab}}\: \asstexpacci{i} $}
        \RightLabel{\scriptsize $(i\in\setexp{1,2})$}
        \UnaryInfC{$ \stexpsum{\astexpi{1}}{\astexpi{2}} \:\lti{\aact}{\bodylabcol}\: \asstexpacci{i} $}
        \DisplayProof 
        & & 
        \AxiomC{$ \asstexpi{1} \:\lti{\alert{\aoneact}}{\darkcyan{\alab}}\: \asstexpacci{1} $}
        \UnaryInfC{$ \stexpprod{\asstexpi{1}}{\astexpi{2}} \:\lti{\alert{\aoneact}}{\darkcyan{\alab}}\: \stexpprod{\asstexpacci{1}}{\astexpi{2}} $}
        \DisplayProof
        & &
        \AxiomC{$\terminates{\astexpi{1}}$}
        \AxiomC{$ \astexpi{2} \:\lti{a}{\darkcyan{\alab}}\: \asstexpacci{2} $}
        \BinaryInfC{$ \stexpprod{\astexpi{1}}{\astexpi{2}} \:\lti{a}{\bodylabcol}\: \asstexpacci{2} $}
        \DisplayProof
        \\
        & 
        \AxiomC{$   \phantom{\astexpi{1}}
                 \astexp \:\lti{a}{\darkcyan{\alab}}\: \asstexpacc 
                   \phantom{\asstexpacci{1}} $}
        \AxiomC{\small ($\astexp$ \txtnormedplus)}    
        \insertBetweenHyps{\hspace*{-0.5ex}}        
        \BinaryInfC{$\stexpit{\astexp} \:\lti{\aact}{\darkcyan{\loopnsteplab{\sth{\stexpit{\astexp}}}}}\: \stexpstackprod{\asstexpacc}{\stexpit{\astexp}}$}
        \DisplayProof
        & &
        \AxiomC{$   \phantom{\astexpi{1}}
                 \astexp \:\lti{a}{\darkcyan{\alab}}\: \asstexpacc 
                   \phantom{\astexpacci{1}} $}
        \insertBetweenHyps{\hspace*{-0.5ex}}     
        \AxiomC{\small ($\astexp$ not \txtnormedplus)} 
        \BinaryInfC{$\stexpit{\astexp} \:\lti{a}{\bodylabcol}\: \stexpstackprod{\asstexpacc}{\stexpit{\astexp}}$}
        \DisplayProof 
        & & 
        \AxiomC{$ \asstexpi{1} \:\lti{\alert{\aoneact}}{\darkcyan{\alab}}\: \asstexpacci{1} $}
        \UnaryInfC{$ \stexpstackprod{\asstexpi{1}}{\stexpit{\astexpi{2}}} \:\lti{\alert{\aoneact}}{\darkcyan{\alab}}\: \stexpstackprod{\asstexpacci{1}}{\stexpit{\astexpi{2}}} $}
        \DisplayProof
        & &
        \AxiomC{$ \hspace*{5ex} \terminates{\astexpi{1}}\rule{0pt}{13pt} \hspace*{5ex} $}
        \UnaryInfC{$ \stexpstackprod{\astexpi{1}}{\stexpit{\astexpi{2}}} \:\lti{\alert{\stexpone}}{\bodylabcol}\: \stexpit{\astexpi{2}} $}
        \DisplayProof
      \end{aligned}
    \end{gathered}
    $
  \end{center}
  The \entrybodylabeling~$\oneLTShatof{\stackStExpover{\actions}} = \tuple{\StExpover{\actions},\oneactions,\slti{\cdot}{\cdot},\sterminates}$ 
  of the stacked star expressions \oneLTS\vspace*{-2pt} $\oneLTSof{\stackStExpover{\actions}}$
  is defined as the \LTS\ generated by $\stackStExpTSShatover{\actions}$,
  with $\termexts\subseteq\stackStExpover{\actions}$ as set of immediately terminating vertices,
  and with set $\slti{\cdot}{\cdot} \subseteq  \stackStExpover{\actions}\times(\oneactions\times\nat)\times\stackStExpover{\actions}$ of transitions,
  where $\oneactions = \actions\cup\setexp{\sone}$. 
\end{defi}      

\begin{defi}
  For every star expression~$\astexp\in\StExpover{\actions}$\vspace*{-0.4ex}
  we denote by $\onecharthatof{\astexp}$ the \entrybodylabeling\ 
  that is the chart formed as the $\astexp$\nb-rooted sub-\LTS\ generated by $\setexp{\astexp}$ of the \entrybodylabeling~$\oneLTShatof{\stackStExpover{\actions}}$.
\end{defi}

For this \entrybodylabeling\ we will show in Section~\ref{proofs} that it recovers the property \LEE\ 
for the stacked star expressions \oneLTS, and as a consequence, for the \onechart\ interpretation of~star~expressions.\enlargethispage{1ex}

\begin{lem}\label{lem:oneLTShat:stackStExps:is:LLEEw}
  $\oneLTShatof{\stackStExpover{\actions}}$ is an \entrybodylabeling, and indeed a \LLEEwitness, of $\oneLTSof{\stackStExpover{\actions}}$. 
\end{lem}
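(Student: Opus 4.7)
The plan is to verify the three defining conditions \ref{LLEEw:1}, \ref{LLEEw:2}, \ref{LLEEw:3} of a \LLEEwitness\ for $\oneLTShatof{\stackStExpover{\actions}}$. That $\oneLTShatof{\stackStExpover{\actions}}$ is an \entrybodylabeling\ of $\oneLTSof{\stackStExpover{\actions}}$ is immediate by inspection of the rules of $\stackStExpTSShatover{\actions}$ in comparison with those of $\stackStExpTSSover{\actions}$: every rule of the refined TSS decorates its conclusion with a marking label of either $\bodylabcol$ or $\loopnsteplab{\aLname}$ for some $\aLname\in\natplus$, and reduces to the matching rule of the unlabeled TSS when the marks are erased.

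I would first dispatch \ref{LLEEw:3} (layeredness), which is the most structural. The only rule of $\stackStExpTSShatover{\actions}$ that creates a loop-entry label is the \txtnormedplus\ iteration rule, and it always assigns level exactly $\sth{\stexpit{\astexp}}$ to the outgoing step from $\stexpit{\astexp}$. By induction on derivations in $\stackStExpTSShatover{\actions}$, body-step descendants of the entry step at level $\aLname$ keep the iteration $\stexpit{\astexp_0}$ that originated it pinned as a rightmost outer stacked factor, modifying only its left operand through derivatives of $\astexp_0$; any further entry transition fired inside the body therefore arises by iterating a proper subexpression of $\astexp_0$ and carries level strictly below $\sth{\stexpit{\astexp_0}} = \aLname$.

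The main technical obstacle is \ref{LLEEw:1} (body-step termination), for which I would proceed by structural induction on stacked star expressions. The crucial observation is that once a body step enters an iteration $\stexpit{\astexp}$ with $\astexp$ not \txtnormedplus, reaching $\stexpstackprod{\asstexpacc}{\stexpit{\astexp}}$, no continuation by body steps can return to $\stexpit{\astexp}$: such a return would fire the pop rule $\stexpstackprod{\astexpi{1}}{\stexpit{\astexpi{2}}} \lti{\sone}{\bodylabcol} \stexpit{\astexpi{2}}$ from a descendant of $\asstexpacc$ that is immediately terminating, which would yield a positive-length induced path from $\astexp$ to a terminating expression and contradict that $\astexp$ is not \txtnormedplus. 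Combined with the fact that iteration of a \txtnormedplus\ expression always produces an entry step (never a body step), this lets the induction run: body paths from $\stexpit{\astexp}$ with $\astexp$ \txtnormedplus\ are empty; body paths from $\stexpit{\astexp}$ with $\astexp$ not \txtnormedplus\ stay trapped inside $\stexpstackprod{\cdot}{\stexpit{\astexp}}$ and reduce to body paths in the derivatives of $\astexp$, where the inductive hypothesis applies; the sum and product cases reduce through the corresponding propagation rules. Alternatively the argument can be packaged as a lexicographic well-founded measure on stacked star expressions that strictly decreases along every body transition.

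Finally \ref{LLEEw:2} (loop condition): fix an entry-transition identifier $\pair{\asstexp}{\aLname}\in\entriesof{\oneLTShatof{\stackStExpover{\actions}}}$. Tracing the derivation of the corresponding entry step identifies a \txtnormedplus\ iteration $\stexpit{\astexp_0}$ with $\sth{\stexpit{\astexp_0}}=\aLname$ at the active position of $\asstexp$. Condition \ref{loop:3} holds because every non-start vertex of the induced loop subchart has the form $\stexpstackprod{\cdot}{\stexpit{\astexp_0}}$ (possibly embedded in the outer context already present in $\asstexp$), and no rule of $\stackStExpTSShatover{\actions}$ permits immediate termination of a stacked product. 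Condition \ref{loop:1} uses that $\astexp_0$ is \txtnormedplus\ to construct a body-step path from the entry target back to $\asstexp$, which can then be iterated into an infinite path. Condition \ref{loop:2} follows from \ref{LLEEw:1}: since no pure body-step path is infinite, every infinite path in the loop subchart must return to $\asstexp$ infinitely often via fresh firings of the level-$\aLname$ entry step.
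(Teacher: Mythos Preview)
Your plan is essentially the paper's: both arguments rest on the facts that an entry step can only originate at a position $\acxtap{\stexpit{\astexp}}$ with $\astexp$ \txtnormedplus\ and then carries level $\sth{\stexpit{\astexp}}$, that body paths through a stacked product $\stexpstackprod{\cdot}{\stexpit{\astexp}}$ project to body paths in the left component until (if ever) the pop rule fires, and that star height does not increase along any transition. The paper isolates these facts as separate lemmas (the applicative-context compatibility lemma, a path-parsing lemma for $\stexpprod{\asstexp}{\astexp}$ and $\stexpstackprod{\asstexp}{\stexpit{\astexp}}$, and a four-part technical lemma including body-step termination and the shape of entry steps), and then checks a four-item reformulation of the witness conditions rather than \ref{LLEEw:1}--\ref{LLEEw:3} directly; but the content is the same.

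There is a genuine gap in your \ref{LLEEw:1} argument. In the case $\bsstexp=\stexpit{\astexp}$ with $\astexp$ not \txtnormedplus, you say body paths ``reduce to body paths in the derivatives of $\astexp$, where the inductive hypothesis applies.'' But a derivative $\asstexpacc$ with $\astexp\lt{\aact}\asstexpacc$ is in general \emph{not} a structural sub-expression of $\stexpit{\astexp}$ (take $\astexp=a\cdot b$, giving $\asstexpacc=\stexpone\cdot b$), so the hypothesis of a purely structural induction is unavailable. What does go through is precisely your parenthetical alternative, which should be the main argument: order stacked star expressions lexicographically by $(\sth{\cdot},\text{size})$. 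The iteration case strictly drops the first component (since $\sth{\asstexpacc}\le\sth{\astexp}<\sth{\stexpit{\astexp}}$), and your trapping observation then reduces the infinite body path to one from $\asstexpacc$; the sum, product, and stacked-product cases keep star height and drop size via the path-parsing decomposition. One further imprecision in your \ref{LLEEw:3} sketch: the inner entry transition need not come from a ``proper subexpression of $\astexp_0$'' but from some $\stexpit{\bstexp}$ inside an iterated derivative of $\astexp_0$; the conclusion $m<\aLname$ still follows, but via $\sth{\stexpit{\bstexp}}\le\sth{\astexp_0}<\aLname$ using non-increase of star height, not via the sub-expression relation.
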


\begin{thm}\label{thm:onechart-int:LLEEw}
  For every $\astexp\in\StExpover{\actions}$,
  the \entrybodylabeling~$\onecharthatof{\astexp}$ of $\onechartof{\astexp}$ is a \LLEEwitness\ of $\onechartof{\astexp}$.
  Hence the \onechart\ interpretation $\onechartof{\astexp}$ of a star expression $\astexp\in\StExpover{\actions}$ satisfies the property \LEE. 
\end{thm}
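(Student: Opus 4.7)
The plan is to derive Theorem~\ref{thm:onechart-int:LLEEw} from Lemma~\ref{lem:oneLTShat:stackStExps:is:LLEEw} by a short inheritance argument, and then to outline how the lemma itself is proved by verifying, in turn, each of the three defining clauses \ref{LLEEw:1}, \ref{LLEEw:2}, \ref{LLEEw:3} of a \LLEEwitness.

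First I would argue that, since by Def.~\ref{def:stackStExpTSShat} the \entrybodylabeling~$\oneLTShatof{\stackStExpover{\actions}}$ only enriches the \oneLTS~$\oneLTSof{\stackStExpover{\actions}}$ with marking labels, and since $\onecharthatof{\astexp}$ is the $\astexp$\nb-rooted sub\nb-1\nb-LTS of $\oneLTShatof{\stackStExpover{\actions}}$ generated by $\setexp{\astexp}$, each of the \LLEEwitness\ clauses is inherited from the ambient witness. Concretely: (W1) any infinite $\sredi{\bodylabcol}$\nb-path in $\onecharthatof{\astexp}$ is also one in $\oneLTShatof{\stackStExpover{\actions}}$; (W2) for every $\pair{\avert}{\aLname}\in\entriesof{\onecharthatof{\astexp}}$ one has $\pair{\avert}{\aLname}\in\entriesof{\oneLTShatof{\stackStExpover{\actions}}}$, and the subchart $\indsubchartinat{\onecharthatof{\astexp}}{\avert,\aLname}$ coincides with $\indsubchartinat{\oneLTShatof{\stackStExpover{\actions}}}{\avert,\aLname}$ because every vertex appearing in the latter is, being reachable from $\avert$ which is reachable from $\astexp$, already a vertex of $\onecharthatof{\astexp}$; (W3) layeredness is a local condition on the labels of transitions and so carries over unchanged. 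Together with Lemma~\ref{lem:oneLTShat:stackStExps:is:LLEEw} this proves the theorem; the final sentence of the theorem follows because the existence of a \LLEEwitness\ implies~\LEE.

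For clause~\ref{LLEEw:1} in the lemma I would introduce a well-founded measure $\mu\funin\stackStExpover{\actions}\to \nat\times\nat$ that strictly decreases along every body transition. A promising candidate is a lexicographic combination whose main component counts the occurrences of stacked products $\sstexpstackprod$ inside the expression and whose secondary component is the syntactic size of $\sproj{\cdot}$ restricted to the currently-active (non-iterated) part; one has to check case-by-case that each rule producing a body-labeled step genuinely decreases $\mu$. The delicate cases are the rule $\stexpit{\astexp}\lti{a}{\bodylabcol}\stexpstackprod{\asstexpacc}{\stexpit{\astexp}}$ when $\astexp$ is not \txtnormedplus\ (the star-expression itself is not shrinking, but after this step no proper-action back-step into $\stexpit{\astexp}$ is possible, so $\mu$ must charge this via the non-\txtnormedplus\ ness), and the $\alert{\stexpone}$\nb-back rule for $\stexpstackprod{\astexpi{1}}{\stexpit{\astexpi{2}}}$, which strictly reduces the count of $\sstexpstackprod$.

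For clause~\ref{LLEEw:2} I would use that every entry transition of $\oneLTShatof{\stackStExpover{\actions}}$ is generated by the single rule that fires on a normed$^+$ iteration $\stexpit{\astexp}$ with label $\darkcyan{\loopnsteplab{\sth{\stexpit{\astexp}}}}$. Hence an entry identifier $\pair{\asstexp}{\aLname}$ forces $\asstexp$ to have an outermost iteration $\stexpit{\dstexp}$ (within the relevant head position) with $\sth{\stexpit{\dstexp}}=\aLname$ and $\dstexp$ \txtnormedplus. By induction on body-step reduction one shows the invariant that every vertex of $\indsubchartinat{\oneLTShatof{\stackStExpover{\actions}}}{\asstexp,\aLname}$ different from $\asstexp$ has the shape $\stexpstackprod{\asstexpacc}{\stexpit{\dstexp}}$ with $\asstexpacc$ a subderivative of $\dstexp$; this invariant implies the three properties of a loop chart in Def.~\ref{def:loop:chart}: \ref{loop:1} using \txtnormedplus ness of $\dstexp$, \ref{loop:2} because the only way to leave the body is through the $\alert{\stexpone}$\nb-back rule which returns to $\asstexp$, and \ref{loop:3} because no $\stexpstackprod{\asstexpacc}{\stexpit{\dstexp}}$ permits immediate termination (there is no termination rule for $\sstexpstackprod$). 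For clause~\ref{LLEEw:3} the same invariant yields that any entry-transition label occurring strictly inside the body must be $\darkcyan{\loopnsteplab{\sth{\stexpit{\cstexp}}}}$ for some proper subterm $\stexpit{\cstexp}$ of $\dstexp$, so its level is strictly smaller than $\sth{\stexpit{\dstexp}}=\aLname$.

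The main obstacle will be establishing the invariant in clause~\ref{LLEEw:2}, since it requires a careful structural analysis of how body-step reduction transforms stacked star expressions and in particular a proof that the trailing $\stexpit{\dstexp}$\nb-component at the top of the stack is preserved by every body-step until the $\alert{\stexpone}$\nb-back rule fires; this is the place where the distinction between normed and \txtnormedplus\ matters, and where the specific shape of the split rule for $\stexpit{\astexp}$ in $\stackStExpTSShat$ is essential.
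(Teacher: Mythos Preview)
Your derivation of the theorem from Lemma~\ref{lem:oneLTShat:stackStExps:is:LLEEw} is essentially the paper's argument: restrict the global \LLEEwitness\ to the $\astexp$\nb-generated sub\nb-1\nb-LTS and observe that the defining conditions are inherited. The paper streamlines this by first reformulating \ref{LLEEw:1}--\ref{LLEEw:3} as four local, path-based conditions (LLEE\nb-1)--(LLEE\nb-4), which makes closure under generated subcharts immediate; but your direct verification of (W1)--(W3) works just as well.

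For the lemma itself your plan diverges from the paper in two places, and both contain gaps. First, for \ref{LLEEw:1} the paper does \emph{not} construct a global well-founded measure; it proves termination of $\sredi{\bodylabcol}$ by structural induction on stacked star expressions, relying on a decomposition lemma (Lemma~\ref{lem:bodypaths:stexpstackprod:stexpprod}) that shows every maximal body path from $\stexpstackprod{\asstexp}{\stexpit{\astexp}}$ or $\stexpprod{\asstexp}{\astexp}$ either stays inside a body path of the left factor or eventually escapes once into the right factor. Your proposed measure does not handle the rule $\stexpit{\astexp}\lti{a}{\bodylabcol}\stexpstackprod{\asstexpacc}{\stexpit{\astexp}}$ for $\astexp$ not \txtnormedplus: the $\sstexpstackprod$\nb-count increases, and your remark that ``$\mu$ must charge this via the non-\txtnormedplus\-ness'' is not a definition. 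A single global measure that decreases along every body rule is awkward precisely because the not-\txtnormedplus\ case trades off structure in a way that is naturally captured by induction on syntax, not by a scalar.

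Second, your invariant for \ref{LLEEw:2} and \ref{LLEEw:3} is stated too narrowly. An entry transition $\asstexp\redi{\looplab{\aLname}}\asstexpacc$ does not force $\asstexp$ to \emph{be} an iteration; it forces $\asstexp=\acxtap{\stexpit{\dstexp}}$ for some applicative context $\acxtwh$ (because the rules for $\sstexpprod$ and $\sstexpstackprod$ propagate marking labels unchanged). Consequently the body vertices of $\indsubchartinat{\aoneLTShat}{\asstexp,\aLname}$ have the form $\acxtap{\stexpstackprod{\asstexpacc}{\stexpit{\dstexp}}}$, not $\stexpstackprod{\asstexpacc}{\stexpit{\dstexp}}$. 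Without the context wrapper your argument for \ref{loop:3} (no immediate termination in the body) fails for contexts like $\stexpprod{\Box}{\astexp}$, and your layeredness bound must be argued for entry steps originating from $\asstexpacc$ inside $\acxtap{\stexpstackprod{\asstexpacc}{\stexpit{\dstexp}}}$, which requires the star-height non-increase lemma (Lemma~\ref{lem:lem:oneLTShat:stackStExps:is:LLEEw},~\ref{it:4:lem:lem:oneLTShat:stackStExps:is:LLEEw}) that you do not mention. Once you add the applicative-context layer and the star-height lemma, your invariant becomes exactly the paper's Lemma~\ref{lem:bodypaths:stexpstackprod:stexpprod},~\ref{it:1:lem:bodypaths:stexpstackprod:stexpprod} combined with Lemma~\ref{lem:lem:oneLTShat:stackStExps:is:LLEEw},~\ref{it:3:lem:lem:oneLTShat:stackStExps:is:LLEEw}.
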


\begin{exa}
  We consider the chart interpretations~$\chartof{\astexp}$ and $\chartof{\bstexp}$ 
  for the star expressions $\astexp$ and $\bstexp$ in Ex.~\ref{ex:chart:interpretation}
  for which we saw in Section~\ref{LLEE:fail} that \LEE\ fails. 
  We first illustrate the \onechart\ interpretations~$\onechartof{\astexp}$ of $\astexp$ 
  together with the \entrybodylabeling~$\onecharthatof{\astexp}$ of $\onechartof{\astexp}$. 
  The dotted transitions indicate \onetransitions.
  The expressions at \noninitial\ vertices of $\onechartof{\astexp}$ are  
  $\asstexpacci{1} = \stexpstackprod{(\stexpprod{(\stexpstackprod{\stexpone}{\stexpit{\aact}})}{\stexpit{\bact}})}{\astexp}$, \mbox{}
  $\asstexpi{1} = \stexpstackprod{(\stexpprod{\stexpit{\aact}}{\stexpit{\bact}})}{\astexp}$, \mbox{}
  $\asstexpi{2} = \stexpstackprod{\stexpit{\bact}}{\astexp}$, and 
  $\asstexpacci{2} = \stexpstackprod{(\stexpstackprod{\stexpone}{\stexpit{\bact}})}{\astexp}$,
  which are obtained as iterated derivatives via the TSS~$\stackStExpTSS$.
  Furthermore, we depict the induced chart $\indscchartof{\onechartof{\astexp}}$ of $\onechartof{\astexp}$,
  and its relationship to the chart interpretation $\chartof{\astexp}$ of $\astexp$
  via the projection function $\sproj$ that defines a functional bisimulation 
  (which we indicate via arrows \begin{tikzpicture}\draw[|->,thick,magenta,densely dashed](0,0) -- ++ (0:{12pt});\end{tikzpicture}).
  \begin{center}
    \begin{tikzpicture}[scale=0.97]

\matrix[anchor=north,row sep=1cm,column sep=1.2cm,every node/.style={draw,very thick,circle,minimum width=2.5pt,fill,inner sep=0pt,outer sep=2pt}] at (0,0) {
                 &  \node[color=chocolate](e){};
  \\[-0.2cm]
  \node(e-1){};  &  \node[draw=none,fill=none](dummy){};  
                                   & \node(e-2){}; 
  \\
  \node(e'-1){}; &                 & \node(e'-2){};
  \\
};
\calcLength(e,dummy){mylen}
\draw[<-,very thick,>=latex,chocolate,shorten <=2pt](e) -- ++ (90:{0.55*\mylen pt});
\path(e) ++ ({0.25*\mylen pt},{0.4*\mylen pt}) node{$\astexp$};  
\path(e) ++ ({-0.25*\mylen pt},{0.5*\mylen pt}) node[left]{\Large $\onechartof{\astexp},\;\widehat{\onechartof{\astexp}}$};  

\draw[thick,chocolate] (e) circle (0.12cm);
\draw[->,very thick,shorten <=2pt,shorten >=6pt] (e) to node[left,pos=0.31,xshift={0.015*\mylen pt}]{$\aact$} 
                                                        node[right,pos=0.45,xshift={-0.015*\mylen pt}]{$\darkcyan{\loopnsteplab{2}}$} (e'-1);
\draw[->,very thick,shorten <=2pt,shorten >=6pt] (e) to node[right,pos=0.3,xshift={-0.015*\mylen pt}]{$\bact$} 
                                                        node[left,pos=0.45,xshift={-0.015*\mylen pt}]{$\darkcyan{\loopnsteplab{2}}$} (e'-2);

\path(e-1) ++ ({-0.45*\mylen pt},{0.05*\mylen pt}) node{$\asstexpi{1}$}; 
\draw[->,densely dotted,thick,shorten <=0pt,shorten >=2pt,out=135,in=190,distance={0.85*\mylen pt}] (e-1) to node[left,pos=0.4]{$\sone$} (e);
\draw[->,very thick] (e-1) to node[right,pos=0.3,xshift={-0.05*\mylen pt}]{$\aact$}
                              node[left,pos=0.55,xshift={0.1*\mylen pt}]{$\darkcyan{\loopnsteplab{1}}$} (e'-1);
\draw[->,shorten >=2pt] (e-1) to node[below,pos=0.5,xshift={0.05*\mylen pt}]{$\bact$} (e'-2);

\path(e-2) ++ ({0.45*\mylen pt},{0.05*\mylen pt}) node{$\asstexpi{2}$};  
\draw[->,densely dotted,thick,shorten <=0pt,shorten >=2pt,out=45,in=-10,distance={0.85*\mylen pt}] (e-2) to node[right,pos=0.4]{$\sone$} (e);
\draw[->,very thick] (e-2) to node[left,pos=0.3,xshift={0.05*\mylen pt}]{$\bact$} 
                              node[right,pos=0.55,xshift={-0.1*\mylen pt}]{$\darkcyan{\loopnsteplab{1}}$} (e'-2);

\path(e'-1) ++ ({0*\mylen pt},{-0.35*\mylen pt}) node{$\asstexpacci{1}$};
\draw[->,densely dotted,thick,shorten <=0pt,shorten >=0pt,out=180,in=210,distance={1*\mylen pt}] (e'-1) to node[left,pos=0.4,xshift={0.05*\mylen pt}]{$\sone$} (e-1);  
  
\path(e'-2) ++ ({0*\mylen pt},{-0.35*\mylen pt}) node{$\asstexpacci{2}$};
\draw[->,densely dotted,thick,shorten <=0pt,shorten >=0pt,out=0,in=-30,distance={1*\mylen pt}] (e'-2) to node[right,pos=0.4,xshift={-0.05*\mylen pt}]{$\sone$} (e-2);

\matrix[anchor=north,row sep=1cm,column sep=1.2cm,every node/.style={draw,very thick,circle,minimum width=2.5pt,fill,inner sep=0pt,outer sep=2pt}] at (5,0) {
                 &  \node[color=chocolate](e--ind){};
  \\[-0.2cm]
  \node[color=chocolate](e-1--ind){};  &  \node[draw=none,fill=none](dummy--ind){};  
                                   & \node[color=chocolate](e-2--ind){}; 
  \\
  \node[color=chocolate](e'-1--ind){}; &                 & \node[color=chocolate](e'-2--ind){};
  \\
};
\calcLength(e--ind,dummy--ind){mylen}
\draw[<-,very thick,>=latex,chocolate,shorten <=2pt](e--ind) -- ++ (90:{0.55*\mylen pt});
\path(e--ind) ++ ({0.25*\mylen pt},{0.4*\mylen pt}) node{$\astexp$};  
\path(e--ind) ++ ({-0.3*\mylen pt},{0.4*\mylen pt}) node[left]{\Large $\indscchartof{\onechartof{\astexp}}$};  

\draw[thick,chocolate] (e--ind) circle (0.12cm);
\draw[->,shorten <=2pt,shorten >=6pt] (e--ind) to node[left,pos=0.31,xshift={0.015*\mylen pt}]{$\aact$} (e'-1--ind);
\draw[->,shorten <=2pt,shorten >=6pt] (e--ind) to node[right,pos=0.225,xshift={-0.015*\mylen pt}]{$\bact$} (e'-2--ind);

\draw[thick,chocolate] (e-1--ind) circle (0.12cm);
\path(e-1--ind) ++ ({-0.45*\mylen pt},{0.05*\mylen pt}) node{$\asstexpi{1}$}; 
\draw[->,shorten <=2pt,shorten >= 2pt,densely dashed] (e-1--ind) to 
                                                              (e'-1--ind);
\draw[->,shorten <=2pt,shorten >=2pt,densely dashed,out=-20,in=150] (e-1--ind) to 
                                                              (e'-2--ind);

\draw[thick,chocolate] (e-2--ind) circle (0.12cm);
\path(e-2--ind) ++ ({0.45*\mylen pt},{0.05*\mylen pt}) node{$\asstexpi{2}$};  
\draw[->,densely dashed,shorten <= 2pt,shorten >= 2pt] (e-2--ind) to 
                                                        (e'-2--ind); 
\draw[->,shorten <=2pt,shorten >=8pt,densely dashed,out=200,in=37.5] (e-2--ind) to 
                                                                     (e'-1--ind);

\draw[thick,chocolate] (e'-1--ind) circle (0.12cm);
\path(e'-1--ind) ++ ({0*\mylen pt},{-0.4*\mylen pt}) node{$\asstexpacci{1}$};
\draw[->,out=30,in=150,distance={0.7*\mylen pt},shorten <=2pt,shorten >=2pt] (e'-1--ind) to node[below,pos=0.725,yshift={0.1*\mylen pt}]{$\bact$} (e'-2--ind); 
\draw[->,distance={1*\mylen pt},shorten <=2pt,shorten >=2pt,out=210,in=150] (e'-1--ind) to node[above,pos=0.8]{$\aact$} (e'-1--ind);

\draw[thick,chocolate] (e'-2--ind) circle (0.12cm);   
\path(e'-2--ind) ++ ({0*\mylen pt},{-0.4*\mylen pt}) node{$\asstexpacci{2}$};
\draw[->,bend left,distance={0.7*\mylen pt},shorten <=2pt,shorten >=2pt] (e'-2--ind) to node[above,pos=0.725,yshift={-0.05*\mylen pt}]{$\aact$} (e'-1--ind);
\draw[->,distance={1*\mylen pt},shorten <=2pt,shorten >=2pt,out=-30,in=30] (e'-2--ind) to node[above,pos=0.8]{$\bact$} (e'-2--ind);

\matrix[anchor=north,row sep=1cm,column sep=0.7cm,every node/.style={draw,very thick,circle,minimum width=2.5pt,fill,inner sep=0pt,outer sep=2pt}] at (10,0) {
                 &  \node[color=chocolate](e){};
  \\
  \node[color=chocolate](e-1){};  &  \node[draw=none,fill=none](dummy){};  
                                   & \node[color=chocolate](e-2){}; 
  \\
};
%
\draw[<-,very thick,>=latex,chocolate,shorten <=2pt](e) -- ++ (90:{0.55*\mylen pt});
\path(e) ++ ({0.25*\mylen pt},{0.25*\mylen pt}) node{$\astexp$};  
\path(e) ++ ({0.45*\mylen pt},{0.4*\mylen pt}) node[right] {\Large $\chartof{\astexp}$};

\draw[chocolate,thick] (e) circle (0.12cm);
\draw[->,shorten <=2pt,shorten >=2pt,out=200,in=90] (e) to node[left,pos=0.4]{$\aact$} (e-1);
\draw[->,shorten <=2pt,shorten >=2pt,out=-20,in=90] (e) to node[right,pos=0.225,xshift={0.1*\mylen pt}]{$\bact$} (e-2);

\draw[chocolate,thick] (e-1) circle (0.12cm);
\path(e-1) ++ ({-0.35*\mylen pt},{-0.035*\mylen pt}) node{$\astexpi{1}$}; 
\draw[->,shorten <=2pt,shorten >=2pt,out=220,in=140,distance={1.25*\mylen pt}] (e-1) to node[left]{$\aact$} (e-1);
\draw[->,shorten <=2pt,shorten >=2pt,out=-20,in=200,distance={0.6*\mylen pt}] (e-1) to node[below]{$\bact$} (e-2);

\draw[thick,chocolate] (e-2) circle (0.12cm);
\path(e-2) ++ ({0.35*\mylen pt},{-0.035*\mylen pt}) node{$\astexpi{2}$};  
\draw[->,shorten <=2pt,shorten >=2pt,out=-40,in=40,distance={1.25*\mylen pt}] (e-2) to node[above,yshift={0*\mylen pt},pos=0.75]{$\bact$} (e-2);
\draw[->,shorten <=2pt,shorten >=2pt,out=160,in=20,distance={0.6*\mylen pt}] (e-2) to node[above]{$\aact$} (e-1);

\draw[|->,thick,magenta,densely dashed,shorten <= 3pt,shorten >= 2pt,bend left,distance={1.25*\mylen pt}
                                                                    ] (e--ind) to node[above,pos=0.5,yshift={-0.05*\mylen pt}]{$\sproj$} (e);    
\draw[|->,thick,magenta,densely dashed,shorten <= 3pt,shorten >= 3pt,distance={1.7*\mylen pt},
          out=40,in=130] (e-1--ind) to node[above,pos=0.5,yshift={-0.05*\mylen pt}]{$\sproj$} (e-1);    
\draw[|->,thick,magenta,densely dashed,shorten <= 3pt,shorten >= 2pt,
          out=-30,in=210] (e-2--ind) to node[below,pos=0.32,yshift={0.025*\mylen pt}]{$\sproj$} (e-2);    
\draw[|-,thick,magenta,densely dashed,out=37.5,in=130,shorten <= 3pt,shorten >= 3pt] (e'-1--ind) to node[above,pos=0.425,yshift={-0.05*\mylen pt}]{$\sproj$} (e-1);    
\draw[|-,thick,magenta,densely dashed,out=0,in=210,shorten <= 3pt,shorten >= 2pt] (e'-2--ind) to node[below,pos=0.5,yshift={0.025*\mylen pt}]{$\sproj$} (e-2);

\end{tikzpicture}\vspace*{-1ex}
  \end{center}
  The transitions of the induced chart $\indscchartof{\onechartof{\astexp}}$ of the \onechart\ interpretation~$\onechartof{\astexp}$ of $\astexp$ correspond 
  to paths in $\onechartof{\astexp}$ that start with a (potentially empty) \onetransition\ path and have a final proper action transition, which also provides the label
  of the induced transition.
  For example the $\bact$\nb-tran\-si\-tion from $\asstexpacci{1}$ to $\asstexpacci{2}$ in $\indscchartof{\onechartof{\astexp}}$ arises
  as the induced transition in $\onechartof{\astexp}$ that is the path that consists of the \onetransitions\ from $\asstexpacci{1}$ to $\asstexpi{1}$,
  and from $\asstexpi{1}$ to $\astexp$,
  followed by the final $\bact$\nb-transition from $\astexp$ to $\asstexpacci{2}$.
  The vertices with immediate termination in $\indscchartof{\onechartof{\astexp}}$ 
  are all those that permit \onetransition\ paths in $\onechartof{\astexp}$ to vertices with immediate termination.
  Therefore in $\onechartof{\astexp}$ only $\astexp$ needs to permit immediate termination
  in order to get induced termination in $\indscchartof{\onechartof{\astexp}}$ also at all other vertices (like in $\chartof{\astexp}$).
  Now clearly the projection function $\sproj$ that maps $\astexp \mapsto \astexp$, and
  $\asstexpi{i},\asstexpacci{i} \mapsto \astexpi{i}$ for $i\in\setexp{1,2}$
  defines a bisimulation from $\indscchartof{\onechartof{\astexp}}$ to $\chartof{\astexp}$.
  Provided that the unreachable vertices $\asstexpi{1}$ and $\asstexpi{2}$ in $\indscchartof{\onechartof{\astexp}}$ 
  are removed by garbage collection, 
  $\sproj$ defines an isomorphism.
  What is more, the \entrybodylabeling~$\onecharthatof{\astexp}$ of $\onechartof{\astexp}$ 
  can be readily checked to be a \LLEEwitness\ of $\onechartof{\astexp}$.
  
  Below we illustrate the \onechart\ interpretation~$\onechartof{\bstexp}$ of $\bstexp$ 
  together with the \entrybodylabeling~$\onecharthatof{\bstexp}$ of $\onechartof{\bstexp}$,
  the induced chart $\indscchartof{\onechartof{\bstexp}}$ of $\onechartof{\bstexp}$,
  and its connection via $\sproj$ to the chart interpretation~$\chartof{\bstexp}$ of $\bstexp$.
  The stacked star expressions at \noninitial\ vertices in $\onechartof{\bstexp}$ are
  $\bsstexpi{i} = 
    \stexpprod{(\stexpstackprod{(\stexpprod{\stexpone}{(\stexpsum{\stexpone}{\stexpprod{\bacti{i}}{\stexpzero}})})}{\stexpit{\bstexpi{0}}})}{\stexpzero}$ \mbox{} for $i\in\setexp{1,2,3}$,
  and $\textit{Sink} = \stexpprod{(\stexpstackprod{(\stexpprod{\stexpone}{\stexpzero})}{\stexpit{\bstexpi{0}}})}{\stexpzero}$. 
  Here the projection function $\sproj$ maps $\bstexp \mapsto \bstexp$, $\textit{Sink} \mapsto \textit{sink}$,
  and $\bsstexpi{i} \mapsto \bstexpi{i}$ for $i\in\setexp{1,2,3}$,
  defining a functional bisimulation from $\indscchartof{\onechartof{\bstexp}}$ to $\chartof{\bstexp}$, which is a isomorphism.
  The \entrybodylabeling~$\onecharthatof{\bstexp}$ of $\onechartof{\bstexp}$ 
  can again readily be checked to be a \LLEEwitness\ of $\onechartof{\bstexp}$.
  \begin{flushleft}
    \hspace*{-1.75ex}\begin{tikzpicture}[scale=0.97]

\matrix[anchor=north,row sep=1cm,column sep=0.85cm,every node/.style={draw,very thick,circle,minimum width=2.5pt,fill,inner sep=0pt,outer sep=2pt}] at (0,0) {
                 &  \node(f){};
  \\[-0.1cm]
  \node(F-1){};  &  \node[draw=none,fill=none](dummy){};  
                                   & \node(F-2){}; 
  \\[0cm]
                 &  \node(F-3){};  &                      & \node[draw=none,fill=none](helper){};
  \\
};
\calcLength(f,dummy){mylen}
\path (helper) ++ ({0*\mylen pt},{-0.35*\mylen pt}) node[draw,very thick,circle,minimum width=2.5pt,fill,inner sep=0pt,outer sep=2pt](sink){}; 
\draw[<-,very thick,>=latex,chocolate,shorten <=2pt](f) -- ++ (90:{0.5*\mylen pt});
\path(f) ++ ({0.25*\mylen pt},{0.3*\mylen pt}) node{$\bstexp$};  
\path(f) ++ ({-0.25*\mylen pt},{0.5*\mylen pt}) node[left]{\Large $\onechartof{\bstexp}$};
\path(f) ++ ({0.5*\mylen pt},{0.58*\mylen pt}) node[right]{\Large $\widehat{\onechartof{\bstexp}}$};

\draw[->,very thick] (f) to node[left,pos=0.4,xshift={0.05*\mylen pt}]{$\aacti{1}$} node[right,pos=0.7,xshift={0*\mylen pt}]{$\darkcyan{\loopnsteplab{1}}$} (F-1);
\draw[->,very thick] (f) to node[right,pos=0.4]{$\aacti{2}$} node[left,pos=0.7,xshift={0*\mylen pt}]{$\darkcyan{\loopnsteplab{1}}$} (F-2);
\draw[->,very thick] (f) to node[left,pos=0.53,xshift={0.1*\mylen pt}]{$\aacti{3}$}
                            node[right,pos=0.55,xshift={-0.12*\mylen pt}]{$\darkcyan{\loopnsteplab{1}}$} (F-3);

\path(F-1) ++ ({-0.35*\mylen pt},{-0.01*\mylen pt}) node{$\bsstexpi{1}$}; 
\draw[->] (F-1) to node[above,pos=0.6,yshift={-0*\mylen pt},xshift={0*\mylen pt}]{$\bacti{1}$} (sink);
\draw[->,densely dotted,thick,shorten <=0pt,shorten >=0pt,out=135,in=190,distance={0.75*\mylen pt}] (F-1) to node[pos=0.3,right,xshift={-0.05*\mylen pt}]{$\sone$} (f); 

\path(F-2) ++ ({0.4*\mylen pt},{-0.01*\mylen pt}) node{$\bsstexpi{2}$}; 
\draw[->] (F-2) to node[above,pos=0.7,yshift={0.075*\mylen pt},xshift={0.05*\mylen pt}]{$\bacti{2}$} (sink);
\draw[->,densely dotted,thick,shorten <=0pt,shorten >=0pt,out=45,in=-10,distance={0.75*\mylen pt}] (F-2) to node[right,pos=0.4]{$\sone$} (f); 

\path(F-3) ++ ({0*\mylen pt},{-0.3*\mylen pt}) node{$\bsstexpi{3}$};  
\draw[->] (F-3) to node[below,pos=0.5,yshift={0.05*\mylen pt}]{$\bacti{3}$} (sink);
\draw[-,densely dotted,thick,shorten <=0pt,shorten >=0pt,out=170,in=187,distance={2*\mylen pt}] (F-3) to node[left,pos=0.4]{$\sone$} (f);
  
\path(sink) ++ ({0*\mylen pt},{0*\mylen pt}) node[right]{$\textit{Sink}$};

\matrix[anchor=north,row sep=1cm,column sep=0.9cm,every node/.style={draw,very thick,circle,minimum width=2.5pt,fill,inner sep=0pt,outer sep=2pt}] at (5.25,0) {
                 &  \node(f-ind){};
  \\[-0.1cm]
  \node(F-1-ind){};  &  \node[draw=none,fill=none](dummy-ind){};  
                                   & \node(F-2-ind){}; 
  \\[0.3cm]
                 &  \node(F-3-ind){};  &                      & \node[draw=none,fill=none](helper){};
  \\
};
\calcLength(f-ind,dummy-ind){mylen}
\path (helper) ++ ({0.5*\mylen pt},{-0.35*\mylen pt}) node[draw,very thick,circle,minimum width=2.5pt,fill,inner sep=0pt,outer sep=2pt](sink-ind){}; 
\draw[<-,very thick,>=latex,chocolate](f-ind) -- ++ (90:{0.45*\mylen pt});
\path(f-ind) ++ ({0.23*\mylen pt},{0.32*\mylen pt}) node{$\bstexp$};  
\path(f-ind) ++ 
                ({-0.2*\mylen pt},{0.48*\mylen pt}) node[left] {\Large $\indscchartof{\onechartof{\bstexp}}$};

\draw[->
         ] (f-ind) to node[left,pos=0.4]{$\aacti{1}$} (F-1-ind);
\draw[->
        ] (f-ind) to node[right,pos=0.4]{$\aacti{2}$} (F-2-ind);
\draw[->,out=180,in=180,distance={2.95*\mylen pt},shorten >=4pt] (f-ind) to node[left,pos=0.5,xshift={0.05*\mylen pt}]{$\aacti{3}$} (F-3-ind);

\path(F-1-ind) ++ ({-0.35*\mylen pt},{-0.01*\mylen pt}) node{$\bsstexpi{1}$};  
\draw[->,out=270,in=160] (F-1-ind) to node[left]{$\aacti{3}$} (F-3-ind);
\draw[->,out=220,in=140,distance={1.25*\mylen pt}]  (F-1-ind) to node[left,xshift={0.1*\mylen pt}]{$\aacti{1}$} (F-1-ind);
\draw[->,out=-30,in=210,distance={0.65*\mylen pt}]  (F-1-ind) to node[above,yshift={-0.065*\mylen pt}]{$\aacti{2}$} (F-2-ind);
\draw[->,out=-80,in=162.5,distance={1.3*\mylen pt}] (F-1-ind) to node[above,pos=0.75,yshift={-0.05*\mylen pt},xshift={0*\mylen pt}]{$\bacti{1}$} (sink-ind);

\path(F-2-ind) ++ ({0.4*\mylen pt},{-0.01*\mylen pt}) node{$\bsstexpi{2}$};  
\draw[->,out=270,in=20,distance={0.65*\mylen pt}] (F-2-ind) to node[right]{$\aacti{3}$} (F-3-ind);
\draw[->,out=-40,in=40,distance={1.25*\mylen pt}] (F-2-ind) to node[right]{$\aacti{2}$} (F-2-ind);
\draw[->,out=150,in=30,distance={0.65*\mylen pt}] (F-2-ind) to node[above]{$\aacti{1}$} (F-1-ind);
\draw[->] (F-2-ind) to node[above,pos=0.7,yshift={0.075*\mylen pt},xshift={0.05*\mylen pt}]{$\bacti{2}$} (sink-ind);

\path(F-3-ind) ++ ({0*\mylen pt},{-0.4*\mylen pt}) node{$\bsstexpi{3}$};  
\draw[->,out=230,in=310,distance={1.25*\mylen pt}] (F-3-ind) to node[left,pos=0.2,xshift={0.1*\mylen pt}]{$\aacti{3}$} (F-3-ind);
\draw[->,out=90,in=-30,distance={0.65*\mylen pt},shorten >=10pt] (F-3-ind) to node[pos=0.42,left,xshift={0.11*\mylen pt}]{$\aacti{1}$} (F-1-ind);
\draw[->,out=90,in=210,distance={0.65*\mylen pt}] (F-3-ind) to node[right,pos=0.5]{$\aacti{2}$} (F-2-ind);
\draw[->] (F-3-ind) to node[below,pos=0.5,yshift={0.05*\mylen pt}]{$\bacti{3}$} (sink-ind);

\path(sink-ind) ++ ({0*\mylen pt},{0.25*\mylen pt}) node[right]{$\textit{sink}$};

\matrix[anchor=north,row sep=1cm,column sep=0.9cm,every node/.style={draw,very thick,circle,minimum width=2.5pt,fill,inner sep=0pt,outer sep=2pt}] at (10.75,0) {
                 &  \node(f){};
  \\[-0.1cm]
  \node(f-1){};  &  \node[draw=none,fill=none](dummy){};  
                                   & \node(f-2){}; 
  \\[0.3cm]
                 &  \node(f-3){};  &                      & \node[draw=none,fill=none](helper){};
  \\
};
\calcLength(f,dummy){mylen}
\path (helper) ++ ({0.5*\mylen pt},{-0.35*\mylen pt}) node[draw,very thick,circle,minimum width=2.5pt,fill,inner sep=0pt,outer sep=2pt](sink){}; 
\draw[<-,very thick,>=latex,chocolate](f) -- ++ (90:{0.45*\mylen pt});
\path(f) ++ ({0.3*\mylen pt},{0.2*\mylen pt}) node{$\bstexp$};  
\path(f) ++ 
            ({0.5*\mylen pt},{0.4*\mylen pt}) node[right] {\Large $\chartof{\bstexp}$};

\draw[->
         ] (f) to node[left,pos=0.4]{$\aacti{1}$} (f-1);
\draw[->
        ] (f) to node[right,pos=0.4]{$\aacti{2}$} (f-2);
\draw[->,out=180,in=180,distance={2.95*\mylen pt},shorten >=4pt] (f) to node[left,pos=0.35,xshift={0.05*\mylen pt}]{$\aacti{3}$} (f-3);

\path(f-1) ++ ({-0.35*\mylen pt},{-0.01*\mylen pt}) node{$\bstexpi{1}$};  
\draw[->,out=270,in=160] (f-1) to node[left]{$\aacti{3}$} (f-3);
\draw[->,out=220,in=140,distance={1.25*\mylen pt}]  (f-1) to node[left,xshift={0.1*\mylen pt}]{$\aacti{1}$} (f-1);
\draw[->,out=-30,in=210,distance={0.65*\mylen pt}]  (f-1) to node[above,yshift={-0.065*\mylen pt}]{$\aacti{2}$} (f-2);
\draw[->,out=-80,in=162.5,distance={1.3*\mylen pt}] (f-1) to node[above,pos=0.75,yshift={-0.05*\mylen pt},xshift={0*\mylen pt}]{$\bacti{1}$} (sink);

\path(f-2) ++ ({0.4*\mylen pt},{-0.01*\mylen pt}) node{$\bstexpi{2}$};  
\draw[->,out=270,in=20,distance={0.65*\mylen pt}] (f-2) to node[right]{$\aacti{3}$} (f-3);
\draw[->,out=-40,in=40,distance={1.25*\mylen pt}] (f-2) to node[right]{$\aacti{2}$} (f-2);
\draw[->,out=150,in=30,distance={0.65*\mylen pt}] (f-2) to node[above]{$\aacti{1}$} (f-1);
\draw[->] (f-2) to node[above,pos=0.7,yshift={0.075*\mylen pt},xshift={0.05*\mylen pt}]{$\bacti{2}$} (sink);

\path(f-3) ++ ({0*\mylen pt},{-0.4*\mylen pt}) node{$\bstexpi{3}$};  
\draw[->,out=230,in=310,distance={1.25*\mylen pt}] (f-3) to node[left,pos=0.2,xshift={0.1*\mylen pt}]{$\aacti{3}$} (f-3);
\draw[->,out=90,in=-30,distance={0.65*\mylen pt},shorten >=10pt] (f-3) to node[pos=0.42,left,xshift={0.11*\mylen pt}]{$\aacti{1}$} (f-1);
\draw[->,out=90,in=210,distance={0.65*\mylen pt}] (f-3) to node[right,pos=0.5]{$\aacti{2}$} (f-2);
\draw[->] (f-3) to node[below,pos=0.5,yshift={0.05*\mylen pt}]{$\bacti{3}$} (sink);

\path(sink) ++ ({0*\mylen pt},{0.25*\mylen pt}) node[right]{$\textit{sink}$};

\draw[|->,thick,magenta,densely dashed,shorten <= 2pt,shorten >= 2pt,out=10,in=170] (f-ind) to node[above,pos=0.5,yshift={-0.05*\mylen pt}]{$\sproj$} (f);    
\draw[|->,thick,magenta,densely dashed,shorten <= 4pt,shorten >= 3pt,distance={1.5*\mylen pt},
          out=40,in=130] (F-1-ind) to node[below,pos=0.55,yshift={0.05*\mylen pt}]{$\sproj$} (f-1);    
\draw[|->,thick,magenta,densely dashed,shorten <= 2pt,shorten >= 3pt,
          out=-30,in=210] (F-2-ind) to node[above,pos=0.5,yshift={-0.05*\mylen pt}]{$\sproj$} (f-2);   
\draw[|->,thick,magenta,densely dashed,shorten <= 4pt,shorten >= 2pt,
          out=-35,in=210] (F-3-ind) to node[below,pos=0.5,yshift={0.025*\mylen pt}]{$\sproj$} (f-3);    
\draw[|->,thick,magenta,densely dashed,shorten <= 2pt,shorten >= 2pt,
          out=-20,in=210] (sink-ind) to node[above,pos=0.655,yshift={-0.05*\mylen pt}]{$\sproj$} (sink);    

\end{tikzpicture}\vspace*{-1ex}
  \end{flushleft}\vspace{-1ex}
\end{exa}
     
Thus we have verified the joint claims of Theorem~\ref{thm:onechart-int:funbisim:chart-int} and of Theorem~\ref{thm:onechart-int:LLEEw}  
for the two examples $\astexp$ and $\bstexp$ of star expressions from Ex.~\ref{ex:chart:interpretation}
for which, as we saw in Section~\ref{LLEE:fail}, \LEE\ fails for their chart interpretations $\chartof{\astexp}$ and $\chartof{\bstexp}$:
the property \LEE\ can be recovered for the \onechart\ interpretations $\onechartof{\astexp}$ of~$\astexp$, and $\onechartof{\bstexp}$ of~$\bstexp$
(by Theorem~\ref{thm:onechart-int:LLEEw}),
and also, the induced charts $\indscchartof{\onechartof{\astexp}}$ of $\onechartof{\astexp}$, and $\indscchartof{\onechartof{\bstexp}}$ of $\onechartof{\bstexp}$
map to the original process interpretations $\chartof{\astexp}$ of $\astexp$, and $\chartof{\bstexp}$ of $\bstexp$, respectively,
via a functional bisimulation (by Theorem~\ref{thm:onechart-int:funbisim:chart-int}). 

\enlargethispage{4ex}

\section{Proofs}%
  \label{proofs}

In this section we give the proofs of the properties~\ref{property:1} and \ref{property:2} (see in the Introduction)
of the variant process semantics $\onechartof{\cdot}$ in relation to the process semantics~$\chartof{\cdot}$. 
In Section~\ref{property:1::proofs} we prove Lemma~\ref{lem:StExpLTS:funbisim:StExpindLTS}, and Theorem~\ref{thm:onechart-int:funbisim:chart-int}.
In this way we demonstrate property~\ref{property:1}.
Then in Section~\ref{property:2::proofs} 
we provide the details of the proofs of Lemma~\ref{lem:oneLTShat:stackStExps:is:LLEEw}, 
and of Theorem~\ref{thm:onechart-int:LLEEw}, thereby demonstrating property~\ref{property:2}. 

\subsection{Proof of property~\protect\ref{property:1} of $\protect\onechartof{\protect\cdot}$ and $\protect{\protect\chartof{\cdot}}$}
  \label{property:1::proofs}
%


We first develop the proof of Lemma~\ref{lem:StExpLTS:funbisim:StExpindLTS},
and then prove Theorem~\ref{thm:onechart-int:funbisim:chart-int} from it. In doing so we will then have demonstrated property~\ref{property:1}.

Lemma~\ref{lem:StExpLTS:funbisim:StExpindLTS} states that the projection function $\sproj \funin \stackStExpover{\actions} \to \StExpover{\actions}$ 
defines a bisimulation between the induced \LTS\ $\indLTSof{\oneLTSof{\stackStExpover{\actions}}}$ 
  of the stacked star expressions \oneLTS~$\oneLTSof{\stackStExpover{\actions}}$ 
  and the star expressions \LTS~$\LTSof{\StExpover{\actions}}$.
We will show this by using statements (see Lemma~\ref{lem:stackStExpTSS:proj:StExpTSS}, Lemma~\ref{lem:1:lem:indTSS:stackStExpTSS:2:StExpTSS}, and crucially, Lemma~\ref{lem:indTSS:stackStExpTSS:2:StExpTSS})
  that relate derivability in the \TSS~$\StExpTSSover{\actions}$ that defines transitions in $\LTSof{\StExpover{\actions}}$
to derivability in a \TSS\ $\stackStExpindTSSover{\actions}$ that defines transitions in the induced \LTS\ $\indLTSof{\oneLTSof{\stackStExpover{\actions}}}$.

For this purpose we define the TSS $\stackStExpindTSSover{\actions}$ as an extension of $\stackStExpTSSover{\actions}$
by rules\vspace*{-1pt} that produce induced transitions and induced termination
(see Definition~\ref{def:indTSS:stackStExpTSS}).
Then we show that $\stackStExpindTSSover{\actions}$ does in fact generate 
  the induced $\indLTSof{\oneLTSof{\stackStExpover{\actions}}}$ of $\oneLTSof{\stackStExpover{\actions}}$
(see Lemma~\ref{lem:indLTS:oneLTS:stackStExps}, which will be shown via Lemma~\ref{lem:prf:lem:indLTS:oneLTS:stackStExps}). 


\begin{defi}\label{def:indTSS:stackStExpTSS}\nf
  The LTS $\LTSdefdby{\stackStExpindTSSover{\actions}} = \tuple{\stackStExpover{\actions},\actions,\soneterminates,\silt{\cdot}}$ 
  is generated by derivations in the TSS~$\stackStExpindTSSover{\actions}$\vspace*{-2pt}
  that in addition to the axioms and rules of $\stackStExpTSSover{\actions}$ also contains the following rules:\vspace*{-1ex}
  \begin{center}
  $
  \begin{aligned}
    &
    \AxiomC{$ \terminates{\asstexp}\rule[-4pt]{0pt}{15pt} $}
    \UnaryInfC{$ \oneterminates{\asstexp}\rule[-4pt]{0pt}{11.5pt} $}
    \DisplayProof
    & \hspace*{1ex} &
    \AxiomC{$ \asstexp \lt{\sone} \asstexptilde $}
    \AxiomC{$ \oneterminates{\asstexptilde} $}
    \BinaryInfC{$ \oneterminates{\asstexp} $}
    \DisplayProof
    & & \hspace*{2ex} & &
    \AxiomC{$ \asstexp \lt{\aact} \asstexpacc \rule[-3pt]{0pt}{19.5pt} $}
    \UnaryInfC{$ \asstexp \ilt{\aact} \asstexpacc $}
    \DisplayProof
    & \hspace*{1ex} &
    \AxiomC{$ \asstexp \lt{\sone} \asstexptilde \rule[-3pt]{0pt}{19.5pt}$}
    \AxiomC{$ \asstexptilde \ilt{\aact} \asstexpacc \rule[-3pt]{0pt}{19.5pt}$}
    \BinaryInfC{$ \asstexp \ilt{\aact} \asstexpacc $}
    \DisplayProof
  \end{aligned}
  $
\end{center}
\end{defi}

\begin{lem}\label{lem:prf:lem:indLTS:oneLTS:stackStExps}
  For  all $\asstexp,\asstexpacc\in\stackStExpover{\actions}$,
  and $\aact\in\actions$, 
  provability in $\stackStExpindTSS$ of induced termination statements\vspace*{-3.5pt} $\oneterminates{\asstexp}$,
  and of induced transition statements $\asstexp \ilt{\aact} \asstexpacc$
  can be characterized as follows:\vspace{-0.75ex}
  \begin{enumerate}[label={(\roman{*})},leftmargin=*,align=right,itemsep=0ex]
    \item{}\label{stmt:1:prf:lem:indLTS:oneLTS:stackStExps}
      $\derivablein{\stackStExpindTSS} \oneterminates{\asstexp}$
        if and only if
      $\asstexp = \asstexptildei{0}$,   
      $\derivablein{\stackStExpTSS} \asstexptildei{0} \lt{\sone} \asstexptildei{1}$,
      $\derivablein{\stackStExpTSS} \asstexptildei{1} \lt{\sone} \asstexptildei{2}$,
      \ldots,
      $\derivablein{\stackStExpTSS} \asstexptildei{n-1} \lt{\sone} \asstexptildei{n}$,
      and
      $\derivablein{\stackStExpTSS} \terminates{\asstexptildei{n}}$,
      for\vspace*{-2pt} some $n\in\nat$, and $\asstexptildei{0},\asstexptildei{1},\ldots,\asstexptildei{n}\in\stackStExpover{\actions}$. 
      
    \item{}\label{stmt:2:prf:lem:indLTS:oneLTS:stackStExps}
      $\derivablein{\stackStExpindTSS} \asstexp \ilt{\aact} \asstexpacc$
        if and only if
      $\asstexp = \asstexptildei{0}$, 
      $\derivablein{\stackStExpTSS} \asstexptildei{0} \lt{\sone} \asstexptildei{1}$,
      $\derivablein{\stackStExpTSS} \asstexptildei{1} \lt{\sone} \asstexptildei{2}$,
      \ldots,
      $\derivablein{\stackStExpTSS} \asstexptildei{n-1} \lt{\sone} \asstexptildei{n}$,
      and
      $\derivablein{\stackStExpTSS} \asstexptildei{n} \lt{\aact} \asstexpacc$,\vspace*{-2pt}
      for some $n\in\nat$, and $\asstexptildei{0},\asstexptildei{1},\ldots,\asstexptildei{n}\in\stackStExpover{\actions}$.  
  \end{enumerate}
\end{lem}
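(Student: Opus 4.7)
My plan is to prove both biconditionals by induction, exploiting the fact that the additional rules of $\stackStExpindTSS$ beyond those of $\stackStExpTSS$ are precisely tailored to capture ``optional $\sone$-prefixing''. Since $\stackStExpindTSS$ extends $\stackStExpTSS$, any judgement about a $\aact$-transition or an immediate-termination property that is derivable in $\stackStExpTSS$ carries over verbatim. Hence the main work is to relate the two new judgement forms $\soneterminates{\cdot}$ and $\silt{\cdot}$ to their defining rule schemes.

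For the ``if'' direction of \ref{stmt:1:prf:lem:indLTS:oneLTS:stackStExps}, I would argue by induction on~$n$. In the base case $n=0$, from $\derivablein{\stackStExpTSS}\terminates{\asstexp}$ I apply the first new rule $\smash{\displaystyle\frac{\terminates{\asstexp}}{\oneterminates{\asstexp}}}$ to obtain $\derivablein{\stackStExpindTSS}\oneterminates{\asstexp}$. In the induction step, assuming $\derivablein{\stackStExpindTSS}\oneterminates{\asstexptildei{1}}$ (by IH applied to the tail $\asstexptildei{1},\ldots,\asstexptildei{n}$), I combine this with the premise $\derivablein{\stackStExpTSS}\asstexp\lt{\sone}\asstexptildei{1}$ using the second new rule to get $\derivablein{\stackStExpindTSS}\oneterminates{\asstexp}$. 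The ``if'' direction of \ref{stmt:2:prf:lem:indLTS:oneLTS:stackStExps} proceeds analogously, using the third new rule (lifting a single action transition) in the base case and the fourth new rule (prepending a $\sone$-step to an induced transition) in the inductive step.

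For the ``only if'' direction, I would induct on the structure (height) of a $\stackStExpindTSS$-derivation of the conclusion. Given a derivation of $\oneterminates{\asstexp}$, inspection of the rules of $\stackStExpindTSS$ shows that its final rule must be one of the two new rules introducing $\soneterminates{\cdot}$, because no rule from $\stackStExpTSS$ has $\soneterminates{\cdot}$ in its conclusion. In the first case the derivation ends in $\terminates{\asstexp}$, yielding the $n=0$ witness directly. In the second case, the derivation has immediate subderivations of $\asstexp\lt{\sone}\asstexptilde$ (in $\stackStExpTSS$, since that is the only place where $\lt{\sone}$-conclusions are produced) and of $\oneterminates{\asstexptilde}$ (to which the IH applies, yielding a chain $\asstexptilde=\asstexptildei{1}\lt{\sone}\cdots\lt{\sone}\asstexptildei{n}$ with $\terminates{\asstexptildei{n}}$); prepending $\asstexp$ produces the desired witness. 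The case $\asstexp\ilt{\aact}\asstexpacc$ is entirely parallel, using the two new rules defining $\silt{\cdot}$.

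The main subtlety (rather than obstacle) is the conservativity observation: every rule of $\stackStExpindTSS$ whose conclusion does \emph{not} mention $\soneterminates{\cdot}$ or $\silt{\cdot}$ is a rule of $\stackStExpTSS$, and no rule of $\stackStExpTSS$ has a premise involving $\soneterminates{\cdot}$ or $\silt{\cdot}$. Hence $\stackStExpindTSS$-derivations of $\lt{\sone}$-transitions and of $\terminates{\cdot}$-judgements coincide with $\stackStExpTSS$-derivations, which is exactly what is needed to use the induction hypothesis on the subderivation of $\asstexp\lt{\sone}\asstexptilde$ as a $\stackStExpTSS$-derivation. Once this is noted, both inductions are straightforward rule inspections with no further complications.
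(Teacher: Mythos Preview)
Your proposal is correct and follows essentially the same approach as the paper: the paper proves the ``$\Rightarrow$'' directions by induction on derivations in $\stackStExpindTSS$ (analyzing the bottommost rule from Def.~\ref{def:indTSS:stackStExpTSS}), and the ``$\Leftarrow$'' directions by induction on~$n$, just as you do. Your explicit conservativity observation (that $\stackStExpindTSS$-derivations of $\slt{\sone}$ and $\sterminates$ judgements are already $\stackStExpTSS$-derivations) is exactly the point underlying the paper's terse phrase ``with bottommost applications of rules from Def.~\ref{def:indTSS:stackStExpTSS}''.
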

  
\begin{proof}  
  The directions ``$\Rightarrow$'' in \ref{stmt:1:prf:lem:indLTS:oneLTS:stackStExps} and \ref{stmt:2:prf:lem:indLTS:oneLTS:stackStExps}
  can be established by straightforward induction on derivations in $\stackStExpindTSSover{\actions}$ with bottommost applications of rules from Def.~\ref{def:indTSS:stackStExpTSS}.
  The directions ``$\Leftarrow$'' by can be shown by a straightforward proof by inductions~on~$n\in\nat$. 
\end{proof}

Now we can show that the \LTS\ that is generated by $\stackStExpTSSover{\actions}$ coincides with $\indLTSof{\oneLTSof{\stackStExpover{\actions}}}$.

\begin{lem}\label{lem:indLTS:oneLTS:stackStExps}
  $\indLTSof{\oneLTSof{\stackStExpover{\actions}}}
     =
   \LTSdefdby{\stackStExpTSSover{\actions}}\,$. 
\end{lem}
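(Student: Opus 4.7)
The plan is to prove the equality of the two LTSs by showing that each component (states, actions, transition relation, termination predicate) coincides. The state set $\stackStExpover{\actions}$ and the action set $\actions$ match on both sides trivially by the definitions of $\indLTSof{\cdot}$ (Def.~\ref{def:indLTS}) and of the LTS generated by $\stackStExpindTSSover{\actions}$ (Def.~\ref{def:indTSS:stackStExpTSS}). So the real content is to show that (i)~the induced transition relation $\silt{\cdot}$ in $\indLTSof{\oneLTSof{\stackStExpover{\actions}}}$ coincides with the transition relation generated by $\stackStExpindTSSover{\actions}$, and (ii)~the induced termination predicate $\soneterminates$ in $\indLTSof{\oneLTSof{\stackStExpover{\actions}}}$ coincides with the termination predicate generated by $\stackStExpindTSSover{\actions}$.

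Both (i) and (ii) will follow by a chain of ``if and only if''s that essentially unfold definitions and apply Lemma~\ref{lem:prf:lem:indLTS:oneLTS:stackStExps}. Concretely, for (i), given $\asstexp,\asstexpacc\in\stackStExpover{\actions}$ and $\aact\in\actions$, I would argue: $\asstexp \ilt{\aact} \asstexpacc$ holds in $\indLTSof{\oneLTSof{\stackStExpover{\actions}}}$ iff (by Def.~\ref{def:indLTS}\,\ref{it:indtrans::def:indLTS}) there exist $\asstexptildei{0},\ldots,\asstexptildei{n}\in\stackStExpover{\actions}$ with $\asstexp = \asstexptildei{0} \lt{\sone} \asstexptildei{1} \lt{\sone} \ldots \lt{\sone} \asstexptildei{n} \lt{\aact} \asstexpacc$ in $\oneLTSof{\stackStExpover{\actions}}$ iff (because the transition relation of $\oneLTSof{\stackStExpover{\actions}}$ is by Def.~\ref{def:stackStExpTSS} precisely what is derivable in $\stackStExpTSSover{\actions}$) there exist such $\asstexptildei{0},\ldots,\asstexptildei{n}$ with the corresponding judgments derivable in $\stackStExpTSSover{\actions}$ iff (by Lemma~\ref{lem:prf:lem:indLTS:oneLTS:stackStExps}\,\ref{stmt:2:prf:lem:indLTS:oneLTS:stackStExps}) $\asstexp \ilt{\aact} \asstexpacc$ is derivable in $\stackStExpindTSSover{\actions}$.

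For (ii), given $\asstexp\in\stackStExpover{\actions}$, the analogous chain runs: $\oneterminates{\asstexp}$ holds in $\indLTSof{\oneLTSof{\stackStExpover{\actions}}}$ iff (by Def.~\ref{def:indLTS}\,\ref{it:indterm::def:indLTS}) there exist $\asstexptildei{0},\ldots,\asstexptildei{n}$ with $\asstexp = \asstexptildei{0} \lt{\sone} \ldots \lt{\sone} \asstexptildei{n}$ in $\oneLTSof{\stackStExpover{\actions}}$ and $\terminates{\asstexptildei{n}}$ iff (by the generation of $\oneLTSof{\stackStExpover{\actions}}$ from $\stackStExpTSSover{\actions}$) the same sequence of judgments is derivable in $\stackStExpTSSover{\actions}$ iff (by Lemma~\ref{lem:prf:lem:indLTS:oneLTS:stackStExps}\,\ref{stmt:1:prf:lem:indLTS:oneLTS:stackStExps}) $\oneterminates{\asstexp}$ is derivable in $\stackStExpindTSSover{\actions}$.

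Since there is no genuinely hard combinatorial step here — the inductive work has already been carried out in Lemma~\ref{lem:prf:lem:indLTS:oneLTS:stackStExps} and all that remains is to unfold the definitions of $\indLTSof{\cdot}$ and of the generation of a (1-)LTS by a TSS — the only ``obstacle'' is purely notational: being careful to distinguish $\lt{}$-transitions (generated by $\stackStExpTSSover{\actions}$, inhabiting $\oneLTSof{\stackStExpover{\actions}}$) from $\silt{\cdot}$-transitions (generated by $\stackStExpindTSSover{\actions}$, inhabiting $\indLTSof{\oneLTSof{\stackStExpover{\actions}}}$), and likewise $\sterminates$ from $\soneterminates$. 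With that care, the proof reduces to the two three-way chains of equivalences above.
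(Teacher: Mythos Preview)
Your proposal is correct and follows essentially the same approach as the paper: both reduce the equality to the coincidence of the induced transition relation and the induced termination predicate with their counterparts generated by $\stackStExpindTSSover{\actions}$, and both obtain this directly from the two parts of Lemma~\ref{lem:prf:lem:indLTS:oneLTS:stackStExps}. Your version simply spells out the intermediate unfolding of Def.~\ref{def:indLTS} and of the generation of $\oneLTSof{\stackStExpover{\actions}}$ from $\stackStExpTSSover{\actions}$ more explicitly than the paper does.
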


\begin{proof}
  The statements \ref{stmt:1:prf:lem:indLTS:oneLTS:stackStExps} and \ref{stmt:2:prf:lem:indLTS:oneLTS:stackStExps} in Lemma~\ref{lem:prf:lem:indLTS:oneLTS:stackStExps}
  together demonstrate that 
  1-termination $\soneterminates$ defined by\vspace*{-2pt} $\stackStExpindTSSover{\actions}$ 
  coincides with induced termination of the \oneLTS\ defined by $\stackStExpTSSover{\actions}$, 
  and transitions $\silt{\aact}$ defined by $\stackStExpindTSSover{\actions}$, for $\aact\in\actions$,
  coincide with induced transitions of the \oneLTS\ defined by $\stackStExpTSSover{\actions}$.
\end{proof}

In order to transform derivations in $\StExpTSSover{\actions}$ into derivations in $\stackStExpindTSSover{\actions}$ in Lemma~\ref{lem:indTSS:stackStExpTSS:2:StExpTSS} below,
it will be expedient to have a number of additional rules available 
that are \emph{admissible for $\stackStExpindTSS$},
that is, their instances can be eliminated from derivations
that contain them in addition to axioms and rules of $\stackStExpindTSS$.

\begin{lem}\label{lem:2:lem:indTSS:stackStExpTSS:2:StExpTSS}
  The following rules are admissible for $\stackStExpindTSS$
  :\vspace*{-2.5ex}
  \begin{center}
    $  
    \begin{aligned}
      &
      \AxiomC{$ \oneterminates{\asstexpi{1}} $}
      \AxiomC{$ \terminates{\astexpi{2}} $}
      \insertBetweenHyps{\hspace*{0.5ex}}
      \BinaryInfC{$ \oneterminates{ (\stexpprod{\asstexpi{1}}{\astexpi{2}}) } $}
      \DisplayProof
      & \hspace*{-0.8ex} &   
      \AxiomC{$ \oneterminates{\asstexpi{1}} $}
      \UnaryInfC{$ \oneterminates{ (\stexpstackprod{\asstexpi{1}}{\stexpit{\astexpi{2}}}) } $}
      \DisplayProof
      & \hspace*{-0.8ex} &
      \AxiomC{$ \asstexpi{1} \ilt{\aact} \asstexpacci{1} $}
      \UnaryInfC{$ \stexpprod{\asstexpi{1}}{\astexpi{2}} \ilt{\aact} \stexpprod{\asstexpacci{1}}{\astexpi{2}} $}
      \DisplayProof
      & \hspace*{-0.8ex} &
      \AxiomC{$ \asstexpi{1} \ilt{\aact} \asstexpacci{1} $}
      \UnaryInfC{$ \stexpstackprod{\asstexpi{1}}{\stexpit{\astexpi{2}}} \ilt{\aact} \stexpstackprod{\asstexpacci{1}}{\stexpit{\astexpi{2}}} $}
      \DisplayProof
      & \hspace*{-0.8ex} &   
      \AxiomC{$ \oneterminates{\asstexpi{1}} $}
      \AxiomC{$ \astexpi{2} \ilt{\aact} \asstexpacci{2} $}
      \insertBetweenHyps{\hspace*{0.5ex}}
      \BinaryInfC{$ \stexpprod{\asstexpi{1}}{\astexpi{2}} \ilt{\aact} \asstexpacci{2} $}
      \DisplayProof
      & \hspace*{-0.8ex} &  
      \AxiomC{$ \oneterminates{\asstexpi{1}} $}
      \AxiomC{$ \stexpit{\astexpi{2}} \ilt{\aact} \asstexpacci{2} $}
      \insertBetweenHyps{\hspace*{0.5ex}}
      \BinaryInfC{$ \stexpstackprod{\asstexpi{1}}{\stexpit{\astexpi{2}}} \ilt{\aact} \asstexpacci{2} $}
      \DisplayProof
    \end{aligned}
    $
  \end{center}
\end{lem}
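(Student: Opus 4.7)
The plan is to reduce each of the six claimed admissible rules to a combination of the rules of $\stackStExpTSS$ together with the characterization of $\stackStExpindTSS$-provability supplied by Lemma~\ref{lem:prf:lem:indLTS:oneLTS:stackStExps}. The key insight is that both sides of the ``iff'' in that lemma give us a canonical form: an induced-termination statement is the same as a sequence of $\sone$\nb-transitions followed by ordinary termination in $\stackStExpTSS$, and an induced transition is the same as such a sequence followed by a single proper-action transition. So throughout we would translate the premises into these canonical forms, lift the constituent $\sone$\nb- and $\aact$\nb-transitions into the desired outer context using existing $\stackStExpTSS$\nb-rules, and then appeal to Lemma~\ref{lem:prf:lem:indLTS:oneLTS:stackStExps} in the converse direction to repackage the result.

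Concretely, for the two termination rules: from $\derivablein{\stackStExpindTSS}\oneterminates{\asstexpi{1}}$ I would obtain, by Lemma~\ref{lem:prf:lem:indLTS:oneLTS:stackStExps}\ref{stmt:1:prf:lem:indLTS:oneLTS:stackStExps}, a chain $\asstexpi{1} = \asstexptildei{0} \lt{\sone} \asstexptildei{1} \lt{\sone} \cdots \lt{\sone} \asstexptildei{n}$ with $\terminates{\asstexptildei{n}}$. The first product rule of $\stackStExpTSS$ lifts each $\sone$\nb-step to $\stexpprod{\asstexptildei{k}}{\astexpi{2}} \lt{\sone} \stexpprod{\asstexptildei{k+1}}{\astexpi{2}}$, and then the product-termination rule combined with the hypothesis $\terminates{\astexpi{2}}$ yields $\terminates{\stexpprod{\asstexptildei{n}}{\astexpi{2}}}$; invoking Lemma~\ref{lem:prf:lem:indLTS:oneLTS:stackStExps}\ref{stmt:1:prf:lem:indLTS:oneLTS:stackStExps} in the converse direction gives rule~1. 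For rule~2 I would lift each $\sone$\nb-step through the $\sstexpstackprod$\nb-context via the stacked-product $\sone$\nb-rule, and then apply the distinguished stacked-product rule $\stexpstackprod{\asstexptildei{n}}{\stexpit{\astexpi{2}}} \lt{\sone} \stexpit{\astexpi{2}}$, using $\terminates{\asstexptildei{n}}$, to reach $\stexpit{\astexpi{2}}$, which always terminates; the backward direction of the same lemma concludes.

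Rules~3 and~4 are pure context-lifting: from $\asstexpi{1} \ilt{\aact} \asstexpacci{1}$ I would use Lemma~\ref{lem:prf:lem:indLTS:oneLTS:stackStExps}\ref{stmt:2:prf:lem:indLTS:oneLTS:stackStExps} to expose a chain $\asstexpi{1} \lt{\sone}^{\ast} \asstexptildei{n} \lt{\aact} \asstexpacci{1}$, then propagate the entire chain through $\stexpprod{(\cdot)}{\astexpi{2}}$ (respectively $\stexpstackprod{(\cdot)}{\stexpit{\astexpi{2}}}$) using the first product rule (respectively the stacked-product $\sone$\nb-rule together with the analogous stacked-product rule for $\alert{\aoneact} = \aact$), and close by Lemma~\ref{lem:prf:lem:indLTS:oneLTS:stackStExps}\ref{stmt:2:prf:lem:indLTS:oneLTS:stackStExps}. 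Rules~5 and~6 combine the two patterns: from $\oneterminates{\asstexpi{1}}$ I would extract the $\sone$\nb-chain to a terminating $\asstexptildei{n}$ and lift it through the context, reaching $\stexpprod{\asstexptildei{n}}{\astexpi{2}}$ (respectively $\stexpstackprod{\asstexptildei{n}}{\stexpit{\astexpi{2}}}$); then from $\astexpi{2} \ilt{\aact} \asstexpacci{2}$ (respectively $\stexpit{\astexpi{2}} \ilt{\aact} \asstexpacci{2}$) I would peel off the final $\aact$\nb-transition via Lemma~\ref{lem:prf:lem:indLTS:oneLTS:stackStExps}\ref{stmt:2:prf:lem:indLTS:oneLTS:stackStExps} and apply the product-with-termination rule, using $\terminates{\asstexptildei{n}}$, to splice the two halves.

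The only subtle point, and the place where I would be most careful, is in rules~5 and~6 where the $\aact$\nb-transition ``jumps past'' $\asstexptildei{n}$: here one must verify that the intermediate $\sone$\nb-steps of the decomposition of $\astexpi{2} \ilt{\aact} \asstexpacci{2}$ (which could in principle pass through stacked derivatives of $\astexpi{2}$) can be composed with the product-termination rule correctly—that is, that the final proper-action step is produced by the rule requiring $\terminates{\asstexptildei{n}}$ applied to a transition of $\astexpi{2}$ (respectively $\stexpit{\astexpi{2}}$), rather than a transition already inside the right component alone. The structure of the rules for $\stackStExpTSS$ makes this straightforward, but it is the one spot where mechanical symbol-pushing could mislead.
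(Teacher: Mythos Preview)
Your approach is correct and takes a genuinely different route from the paper. The paper argues admissibility by explicit proof-theoretic transformation: for each rule it shows, by induction on the depth of an immediate subderivation, that an instance at the bottom of a derivation can be permuted upward and eventually eliminated (worked out in detail for the sixth rule). You instead exploit the normal-form characterization of Lemma~\ref{lem:prf:lem:indLTS:oneLTS:stackStExps} to unpack the induced premises into explicit $\sone$-chains in $\stackStExpTSS$, lift those chains through the relevant context with the $\stackStExpTSS$ rules, and repackage via the converse direction of the same lemma. Your argument is more direct and avoids derivation surgery; the paper's is more self-contained in that it does not invoke Lemma~\ref{lem:prf:lem:indLTS:oneLTS:stackStExps}. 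Two remarks: your worry about intermediate $\sone$-steps in the decomposition of $\astexpi{2} \ilt{\aact} \asstexpacci{2}$ and $\stexpit{\astexpi{2}} \ilt{\aact} \asstexpacci{2}$ is unfounded, since $\astexpi{2}$ and $\stexpit{\astexpi{2}}$ are star expressions and no rule of $\stackStExpTSS$ yields a $\sone$-transition from a star-expression source (the paper makes exactly this observation when reducing the sixth rule to one with the stronger premise $\stexpit{\astexpi{2}} \lt{\aact} \asstexpacci{2}$); and for rule~6 the splice is not a product-with-termination rule but the stacked-product $\sone$-rule $\stexpstackprod{\asstexptildei{n}}{\stexpit{\astexpi{2}}} \lt{\sone} \stexpit{\astexpi{2}}$ followed by the $\aact$-transition from $\stexpit{\astexpi{2}}$.
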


\begin{proof}
  Admissibility of the six rules in the lemma 
  can be established by straightforward proof-theoretic arguments
  that describe how instances of any of these rules 
  can be eliminated effectively provided that they have immediate subderivations in $\stackStExpindTSS$. 
  More specifically, instances of these rules with immediate subderivations in $\stackStExpindTSS$
    can be permuted upwards to get closer to axioms,
    and they can eventually be removed when they are close enough to axioms.
  
  As an example we demonstrate that for the sixth rule in Lemma~\ref{lem:2:lem:indTSS:stackStExpTSS:2:StExpTSS}, which we denote here by $\arulei{6}$.
  This rule concerns a slightly more involved case. 
  We first note that no \onetransitions\ are possible from\vspace*{-2pt} star expressions like $\stexpit{\astexpi{2}}$ in $\stackStExpTSS$ and $\stackStExpindTSS$,
  and that
  therefore the right premise $\stexpit{\astexpi{2}} \ilt{\aact} \asstexpacci{2}$ in $\arulei{6}$\vspace*{-2pt}
  can only be derived from $\stexpit{\astexpi{2}} \lt{\aact} \asstexpacci{2}$.
  Consequently it suffices to show that the rule $\aruleacci{6}$ that results from $\arulei{6}$ 
  by using\vspace*{-2pt} the stronger form $\stexpit{\astexpi{2}} \lt{\aact} \asstexpacci{2}$ of its right premise
  is admissible in $\stackStExpindTSS$. 
  For this we need to show that every instance of $\aruleacci{6}$
  can be eliminated from the bottom of every derivation $\thplus{\stackStExpindTSS}{\arulei{6}}$ of the form
  as below left:
  \begin{center}
    $
    \aDeriv \, \left\{ \quad
    \begin{aligned}
      \AxiomC{$ \aDerivi{1} $}
      \noLine
      \UnaryInfC{$ \oneterminates{\asstexpi{1}} $}
      \AxiomC{$ \aDerivi{2} $}
      \noLine
      \UnaryInfC{$ \stexpit{\astexpi{2}} \lt{\aact} \asstexpacci{2} $}
      \RightLabel{$\aruleacci{6}$}
      \BinaryInfC{$ \stexpstackprod{\asstexpi{1}}{\stexpit{\astexpi{2}}} \ilt{\aact} \astexpacci{2} $}
      \DisplayProof
    \end{aligned}
    \quad\right.
    \quad\Longrightarrow\qquad
    \begin{aligned}
      \AxiomC{$\aDerivacc$}
      \noLine
      \UnaryInfC{$ \stexpstackprod{\asstexpi{1}}{\stexpit{\astexpi{2}}} \ilt{\aact} \astexpacci{2} $}
      \DisplayProof
    \end{aligned}
    $
  \end{center}
  where the immediate subderivations $\aDerivi{1}$ and $\aDerivi{2}$ of $\aDeriv$ are derivations in $\stackStExpindTSS$,
  with as result a derivation $\aDerivacc$ in $\stackStExpindTSS$ with the same conclusion as $\aDeriv$, as indicated above on the right. 
  We proceed by induction on the depth $\depth{\aDerivi{1}}$ of the left immediate subderivation $\aDerivi{1}$ of $\aDeriv$.
  
  The case $\depth{\aDerivi{1}} = 0$ is not possible, since there is no axiom of $\stackStExpindTSS$ with induced termination $\soneterminates$.
  
  If $\depth{\aDerivi{1}} = 1$, then $\oneterminates{\asstexpi{1}}$ must be derived from an axiom of $\stackStExpTSS$ and $\stackStExpindTSS$ of the form $\terminates{\asstexpi{1}}$. 
  Then $\aDeriv$ is of the form as on the left below, and it can be transformed into a derivation $\aDerivacc$ in $\stackStExpindTSS$
  as on the right:
  \begin{center}
    $
    \aDeriv \left\{\, \quad
    \begin{aligned}
      \AxiomC{}
      \UnaryInfC{$ \terminates{\asstexpi{1}} $}
      \UnaryInfC{$ \oneterminates{\asstexpi{1}} $}
      \AxiomC{$ \aDerivi{2} $}
      \noLine
      \UnaryInfC{$ \stexpit{\astexpi{2}} \lt{\aact} \asstexpacci{2} $}
      \RightLabel{$\aruleacci{6}$}
      \BinaryInfC{$ \stexpstackprod{\asstexpi{1}}{\stexpit{\astexpi{2}}} \ilt{\aact} \astexpacci{2} $}
      \DisplayProof
    \end{aligned}
    \quad \right.
    \quad\Longrightarrow\quad
    \quad \left.
    \begin{aligned}
      \AxiomC{}
      \UnaryInfC{$ \terminates{\asstexpi{1}} $}
      \UnaryInfC{$ \stexpstackprod{\asstexpi{1}}{\stexpit{\astexpi{2}}} \lt{\sone} \stexpit{\astexpi{2}} $}
      \AxiomC{$ \aDerivi{2} $}
      \noLine
      \UnaryInfC{$ \stexpit{\astexpi{2}} \lt{\aact} \asstexpacci{2} $}
      \UnaryInfC{$ \stexpit{\astexpi{2}} \ilt{\aact} \asstexpacci{2} $}
      \BinaryInfC{$ \stexpstackprod{\asstexpi{1}}{\stexpit{\astexpi{2}}} \ilt{\aact} \asstexpacci{2} $} 
      \DisplayProof
    \end{aligned}
    \quad \,\right\} \aDerivacc
    $
  \end{center}
  In this way we have transformed $\aDeriv$ into a derivation $\aDerivacc$ in $\stackStExpindTSS$ with the same conclusion.
 
  If $\depth{\aDerivi{1}} > 1$, then $\oneterminates{\asstexpi{1}}$ must be derived from 
  the premises $\asstexpi{1} \lt{\sone} \asstexptildei{1}$ and $\oneterminates{\asstexptildei{1}}$, neither of which can be an axiom of $\stackStExpTSS$ and $\stackStExpindTSS$. 
  In this case $\aDeriv$ is of the form as on the left below. Then $\aDeriv$ can be transformed,\vspace*{-2pt}
  by permuting the instance of $\aruleacci{6}$ upwards, into a derivation $\aDerivtildeacc$ in $\thplus{\stackStExpTSS}{\aruleacci{6}}\,$:
  \begin{equation*}
    \aDeriv \left\{\, \;
    \begin{aligned}
      \AxiomC{$ \aDerivi{11} $}
      \noLine
      \UnaryInfC{$ \asstexpi{1} \lt{\sone} \asstexptildei{1} $}
      \AxiomC{$ \aDerivi{12} $}
      \noLine
      \UnaryInfC{$ \oneterminates{\asstexptildei{1}} $}
      \BinaryInfC{$ \oneterminates{\asstexpi{1}} $}
      \AxiomC{$ \aDerivi{2} $}
      \noLine
      \UnaryInfC{$ \stexpit{\astexpi{2}} \lt{\aact} \asstexpacci{2} $}
      \RightLabel{$\aruleacci{6}$}
      \BinaryInfC{$ \stexpstackprod{\asstexpi{1}}{\stexpit{\astexpi{2}}} \ilt{\aact} \asstexpacci{2} $}
      \DisplayProof
    \end{aligned}
    \;\,\right. 
    \Longrightarrow\;\;
    \left.\,\;
    \begin{aligned}
      \AxiomC{$ \aDerivi{11} $}
      \noLine
      \UnaryInfC{$ \asstexpi{1} \lt{\sone} \asstexptildei{1} $}
      \UnaryInfC{$ \stexpprod{\asstexpi{1}}{\stexpit{\astexpi{2}}}  \lt{\sone}  \stexpstackprod{\asstexptildei{1}}{\stexpit{\astexpi{2}}} $}
      \AxiomC{$ \aDerivi{12} $}
      \noLine
      \UnaryInfC{$ \oneterminates{\asstexptildei{1}} $}
      \AxiomC{$ \aDerivi{2} $}
      \noLine
      \UnaryInfC{$ \stexpit{\astexpi{2}} \lt{\aact} \asstexpacci{2} $}
      %
      \RightLabel{$\aruleacci{6}$}
      \BinaryInfC{$ \stexpstackprod{\asstexptildei{1}}{\stexpit{\astexpi{2}}} \ilt{\aact} \asstexpacci{2} $}
      \BinaryInfC{$ \stexpstackprod{\asstexpi{1}}{\stexpit{\astexpi{2}}} \ilt{\aact} \asstexpacci{2} $}
      \DisplayProof
    \end{aligned}
    \;\,\right\} \aDerivtildeacc
  \end{equation*}
  Note that $\aDerivtildeacc$ still contains an occurrence of $\aruleacci{6}$ that we need to eliminate in order to obtain a derivation in $\stackStExpindTSS$.
  Now since $\depth{\aDerivi{12}} < \depth{\aDerivi{1}}$ holds because $\aDerivi{12}$ is a subderivation of $\aDerivi{1}$, 
  we can apply the induction hypothesis to the subderivation of the instance of $\aruleacci{6}$ in $\aDerivtildeacc$.
  We denote\vspace*{-2pt} that subderivation of $\aDerivtildeacc$ by $\aDerivtildeacci{1}$. 
  We get a derivation $\aDerivacci{1}$ in $\stackStExpindTSS$
    with the same conclusion $\stexpstackprod{\asstexptildei{1}}{\stexpit{\astexpi{2}}} \ilt{\aact} \asstexpacci{2}$ as $\aDerivtildeacci{1}$. 
  By replacing\vspace*{-2pt} $\aDerivtildeacci{1}$ in $\aDerivtildeacc$
    with the derivation $\aDerivacci{1}$
  we obtain a derivation $\aDerivacc$ in $\stackStExpindTSS$ 
  with conclusion $\stexpstackprod{\asstexpi{1}}{\stexpit{\astexpi{2}}} \ilt{\aact} \asstexpacci{2}$.
  
  In this way we have completed the proof by induction that the rule $\aruleacci{6}$ as on the bottom of a derivation $\aDeriv$ 
  with immediate subderivations in $\stackStExpindTSS$ can always be eliminated effectively.   
  This shows admissibility of $\aruleacci{6}$ in $\stackStExpindTSS$. As we have argued above that admissibility of $\aruleacci{6}$ in $\stackStExpindTSS$ 
  implies admissibility in $\stackStExpindTSS$ of $\arulei{6}$,
  we have now also established admissibility of the sixth (and rightmost) rule in Lemma~\ref{lem:2:lem:indTSS:stackStExpTSS:2:StExpTSS}.
  
  \smallskip
  Admissibility of the fifth (rightmost but one) rule in Lemma~\ref{lem:2:lem:indTSS:stackStExpTSS:2:StExpTSS}
  can be demonstrated analogously as above.
  Admissibility of the second, the third, and the forth rule in Lemma~\ref{lem:2:lem:indTSS:stackStExpTSS:2:StExpTSS}
  can be shown by similar upward-permutation transformations by induction on the depth of the immediate subderivations of instances of these rules.
  Finally, admissibility of the first (leftmost) rule in Lemma~\ref{lem:2:lem:indTSS:stackStExpTSS:2:StExpTSS}
  can also be shown by upward-permutation of this rule, using induction on the depth of the immediate subderivation of its left premise $\oneterminates{\asstexpi{1}}$. 
\end{proof}

\medskip

Now we turn our attention to the relationship between the \LTSs\
  $\oneLTSof{\stackStExpover{\actions}}$ and $\LTSof{\StExpover{\actions}}$,
which are generated by the \TSSs~$\stackStExpTSSover{\actions}$, and $\StExpTSSover{\actions}$, respectively.
We formulate two lemmas 
that concern the projection under the projection function $\sproj$ of derivations in the TSS~$\stackStExpTSSover{\actions}$
to derivations in the TSS~$\StExpTSSover{\actions}$. 
The first lemma, Lemma~\ref{lem:stackStExpTSS:proj:StExpTSS},
concerns statements with immediate termination or a proper transition in the conclusion,
and the second lemma, Lemma~\ref{lem:1:lem:indTSS:stackStExpTSS:2:StExpTSS},
concerns also statements with \onetransitions\ in the conclusion.

\begin{lem}\label{lem:stackStExpTSS:proj:StExpTSS}
  Derivations in the TSS~$\stackStExpTSS$ project to derivations in the TSS~$\StExpTSS$, as follows:\vspace*{0ex} 
  \begin{enumerate}[label={(\roman{*})},itemsep=0ex]
    \item{}\label{it:1:lem:stackStExpTSS:proj:StExpTSS}   
      Every derivation $\aDeriv$ in the TSS~$\stackStExpTSS$ with conclusion $\terminates{\asstexp}$
      is also a derivation in the TSS~$\StExpTSS$,
      and consequently $\asstexp$ and all other occurring stacked star expressions are star expressions in $\StExpover{\actions}$. 
      
    \item{}\label{it:2:lem:stackStExpTSS:proj:StExpTSS}   
      Every derivation $\aDeriv$ in the TSS~$\stackStExpTSS$ with conclusion $\asstexp \lt{a} \asstexpacc$
      projects, via the projection function $\sproj$ (applied to all stacked star expressions in the derivation),
      to a derivation $\proj{\aDeriv}$ in the TSS~$\StExpTSS$ with conclusion $\proj{\asstexp} \lt{a} \proj{\asstexpacc}$. 
    
%
%
  \end{enumerate}      
\end{lem}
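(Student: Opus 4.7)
The plan is to prove both parts by straightforward structural induction on derivations in $\stackStExpTSS$, with the key observation being that $\stackStExpTSS$ and $\StExpTSS$ have the same structure of rules once $\sproj$ is applied and the production of $\sone$-transitions is ruled out.

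For part~\ref{it:1:lem:stackStExpTSS:proj:StExpTSS}, I would first inspect the termination axioms and rules of $\stackStExpTSS$ listed in Definition~\ref{def:stackStExpTSS}: they are literally $\terminates{\stexpone}$, the two rules for $\terminates{(\stexpsum{\astexpi{1}}{\astexpi{2}})}$, the rule for $\terminates{(\stexpprod{\astexpi{1}}{\astexpi{2}})}$, and the axiom $\terminates{(\stexpit{\astexp})}$, where $\astexpi{1},\astexpi{2},\astexp\in\StExpover{\actions}$. Crucially, \emph{there is no termination rule whose conclusion has $\sstexpstackprod$ as its top-level operator}. Hence, by induction on the depth of a derivation $\aDeriv$ of $\terminates{\asstexp}$, every stacked star expression appearing in $\aDeriv$ must already lie in $\StExpover{\actions}$, and each rule application of $\stackStExpTSS$ used in $\aDeriv$ coincides with the corresponding rule of $\StExpTSS$. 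So $\aDeriv$ itself is a derivation in $\StExpTSS$.

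For part~\ref{it:2:lem:stackStExpTSS:proj:StExpTSS}, I would proceed by induction on the depth of a derivation $\aDeriv$ of $\asstexp \lt{a} \asstexpacc$ in $\stackStExpTSS$, defining $\proj{\aDeriv}$ by applying $\sproj$ to every stacked star expression occurring in $\aDeriv$ (and leaving termination subderivations unchanged, which is justified by part~\ref{it:1:lem:stackStExpTSS:proj:StExpTSS}). The initial observation is that every transition rule of $\stackStExpTSS$ preserves the action label from premise to conclusion; so since the conclusion of $\aDeriv$ has the proper action $a\in\actions$, \emph{no $\sone$-transitions occur anywhere in $\aDeriv$}. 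In particular, the rule with conclusion $\stexpstackprod{\astexpi{1}}{\stexpit{\astexpi{2}}} \lt{\sone} \stexpit{\astexpi{2}}$ is never used in $\aDeriv$. It then remains to verify, case by case on the bottommost rule of $\aDeriv$, that the projected instance is an instance of a corresponding rule of $\StExpTSS$.

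For most cases this is immediate because the rules look syntactically identical after $\sproj$ is applied (e.g., the rules for $\stexpsum$ and the rule with premise $\terminates{\astexpi{1}}$ for $\stexpprod$). The two cases that genuinely use the definition of $\sproj$ are: the rule $\astexp \lt{a} \asstexpacc \Rightarrow \stexpit{\astexp} \lt{a} \stexpstackprod{\asstexpacc}{\stexpit{\astexp}}$, whose projected conclusion is $\stexpit{\astexp} \lt{a} \stexpprod{\proj{\asstexpacc}}{\stexpit{\astexp}}$ by the clause $\proj{\stexpstackprod{\asstexp}{\stexpit{\astexp}}} = \stexpprod{\proj{\asstexp}}{\stexpit{\astexp}}$ in Definition~\ref{def:stackStExp}, matching exactly the $\StExpTSS$ rule for $\stexpit{}$; and the rule propagating a transition through $\sstexpstackprod$, which also becomes the $\StExpTSS$ rule for $\sstexpprod$ after projection. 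I would expect no real obstacle here: the whole argument is a bookkeeping exercise that exploits the design of $\sproj$ as a homomorphism collapsing $\sstexpstackprod$ onto $\sstexpprod$, together with the absence of $\sone$-transitions enforced by action-label preservation. The only minor subtlety to state carefully is the side condition in the $\stexpprod$-rule with premise $\terminates{\astexpi{1}}$, which lies in $\StExpover{\actions}$ and is handled by invoking part~\ref{it:1:lem:stackStExpTSS:proj:StExpTSS}.
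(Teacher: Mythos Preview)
Your proposal is correct and follows essentially the same approach as the paper: for (i) you both observe that the termination rules of $\stackStExpTSS$ coincide with those of $\StExpTSS$ and involve only star expressions, and for (ii) you both argue by induction on derivation depth that every rule instance without a $\sone$-transition projects to a rule instance of $\StExpTSS$. Your explicit action-label-preservation argument for why no $\sone$-transitions occur in a derivation of a proper-action conclusion is a clean way of phrasing what the paper leaves implicit, and the paper in turn spells out the $\sstexpstackprod$-rule example that you describe in words; otherwise the two arguments match.
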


\begin{proof}
  For statement~\ref{it:1:lem:stackStExpTSS:proj:StExpTSS} 
  we notice, by inspecting the rules of $\stackStExpTSS$,
  that every instance of an axiom of the TSS~$\stackStExpTSS$ with immediate termination,
  and every instance of a rule of $\stackStExpTSS$ with immediate termination in its conclusion
  contains only star expressions, and is also an axiom or a rule instance, respectively,
  of the TSS~$\StExpTSS$. This entails that every derivation in $\stackStExpTSS$ with immediate termination in the conclusion
  consists only of star expressions, and is also a derivation with the same conclusion in $\StExpTSS$.
  
  For statement~\ref{it:2:lem:stackStExpTSS:proj:StExpTSS}
  we notice analogously that  
  every instance of an axiom or of a rule of the TSS~$\stackStExpTSS$ that does not have an occurrence of a \onetransition\
  projects, via the projection function $\sproj$ (applied to all stacked star expressions in the instance),
  to an instance of an axiom or a rule, respectively, of the TSS~$\StExpTSS$.
  As an example we consider an instance of the rule of $\stackStExpTSS$ that permits to preserve a transition
  by putting it into a context $\acxt = \stexpstackprod{\Box}{\stexpit{\astexpi{2}}}$ involving the stacked product symbol $\sstexpstackprod\,$:
  \begin{center}
     $
     \AxiomC{$ \asstexpi{1} \:\lt{\aact}\: \asstexpacci{1} $}
     \UnaryInfC{$ \stexpstackprod{\asstexpi{1}}{\stexpit{\astexpi{2}}} \:\lt{\aact}\: \stexpstackprod{\asstexpacci{1}}{\stexpit{\astexpi{2}}} $}
     \DisplayProof
     $
  \end{center}
  By applying $\sproj$ to the stacked star expressions in this instance, we obtain the instance:
  \begin{center}
    $
    \begin{aligned}[c]
       \AxiomC{$ \proj{\asstexpi{1}} \:\lt{\aact}\: \proj{\asstexpacci{1}} $}
       \UnaryInfC{$ \proj{\stexpstackprod{\asstexpi{1}}{\stexpit{\astexpi{2}}}} \:\lt{\aact}\: \proj{\stexpstackprod{\asstexpacci{1}}{\stexpit{\astexpi{2}}}} $}
       \DisplayProof
    \end{aligned} 
    \quad\text{ which by definition of $\sproj$ is equal to }\quad 
    \begin{aligned}[c]
       \AxiomC{$ \proj{\asstexpi{1}} \:\lt{\aact}\: \proj{\asstexpacci{1}} $}
       \UnaryInfC{$ \stexpprod{\proj{\asstexpi{1}}}{\stexpit{\astexpi{2}}} 
                      \:\lt{\aact}\: 
                    \stexpprod{\proj{\asstexpacci{1}}}{\stexpit{\astexpi{2}}} $}
       \DisplayProof
    \end{aligned} 
    $
  \end{center}
  and as such is an instance of the rule of the TSS~$\StExpTSS$ that permits putting transitions into the context~$\acxt$.
  It is straightforward to check this also for all other instances of rules of $\stackStExpTSS$ without \onetransitions,
  where the rule that produces a proper transition and has $\terminates{\astexpi{1}}$ as its left premise 
  we use statement~\ref{it:1:lem:stackStExpTSS:proj:StExpTSS} of the lemma. 
  Then statement~\ref{it:2:lem:stackStExpTSS:proj:StExpTSS} of the lemma 
  can be established by induction on the depth of derivations in $\stackStExpTSS$, 
  thereby using the projection property for axioms and rules of $\stackStExpTSS$,
  in the induction step. 
\end{proof}

%
%

\begin{lem}\label{lem:1:lem:indTSS:stackStExpTSS:2:StExpTSS}
  For all $\asstexp,\asstexpacc\in\stackStExpover{\actions}$, $\astexpacc\in\StExpover{\actions}$, and $\aact\in\actions$
  the following implications hold
  from derivability in the TSS $\stackStExpTSSover{\actions}$ 
  to  derivability in the TSS $\StExpTSSover{\actions}$
  (again we drop $\actions$ from their designations):
  \begin{align}
    \derivablein{\stackStExpTSS}
      \terminates{\asstexp}
          \;\;\; & \Longrightarrow \;\;\;
    \asstexp\in\StExpover{\actions}
      \;\logand\;\;
    \derivablein{\StExpTSS}
      \terminates{\asstexp} \punc{,}
    \label{eq:1:lem:1:lem:indTSS:stackStExpTSS:2:StExpTSS}
    \displaybreak[0]\\
    \derivablein{\stackStExpTSS}
      \asstexp \lt{\aact} \asstexpacc
          \;\;\; & \Longrightarrow \;\;\;
    \derivablein{\StExpTSS}
      \proj{\asstexp} \lt{\aact} \proj{\asstexpacc} \punc{,}
    \label{eq:2:lem:1:lem:indTSS:stackStExpTSS:2:StExpTSS}
    \displaybreak[0]\\
    \derivablein{\stackStExpTSS}
      \asstexp \lt{\sone} \asstexptilde
          \;\;\; & \Longrightarrow \;\;\;
    \bigl(\,
      \derivablein{\StExpTSS}
        \terminates{\proj{\asstexptilde}}
          \;\;\Longrightarrow\;\;
      \derivablein{\StExpTSS}    
        \terminates{\proj{\asstexp}}
    \,\bigr) \punc{,} 
    \label{eq:3:lem:1:lem:indTSS:stackStExpTSS:2:StExpTSS}
    \displaybreak[0]\\
    \derivablein{\stackStExpTSS}
      \asstexp \lt{\sone} \asstexptilde 
          \;\;\; & \Longrightarrow \;\;\;
    \bigl(\,
      \derivablein{\StExpTSS}
        \proj{\asstexptilde} \lt{\aact} \astexpacc
          \;\;\Longrightarrow\;\;
      \derivablein{\StExpTSS}    
        \proj{\asstexp} \lt{\aact} \astexpacc
    \,\bigr) \punc{.}
    \label{eq:4:lem:1:lem:indTSS:stackStExpTSS:2:StExpTSS}
  \end{align}
\end{lem}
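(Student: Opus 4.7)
The plan is to prove the four implications largely by appealing to Lemma~\ref{lem:stackStExpTSS:proj:StExpTSS} for \eqref{eq:1:lem:1:lem:indTSS:stackStExpTSS:2:StExpTSS} and \eqref{eq:2:lem:1:lem:indTSS:stackStExpTSS:2:StExpTSS}, and then proving \eqref{eq:3:lem:1:lem:indTSS:stackStExpTSS:2:StExpTSS} and \eqref{eq:4:lem:1:lem:indTSS:stackStExpTSS:2:StExpTSS} together by a simultaneous induction on the depth of the derivation of $\asstexp \lt{\sone} \asstexptilde$ in $\stackStExpTSS$. For \eqref{eq:1:lem:1:lem:indTSS:stackStExpTSS:2:StExpTSS}, statement~\ref{it:1:lem:stackStExpTSS:proj:StExpTSS} of Lemma~\ref{lem:stackStExpTSS:proj:StExpTSS} directly gives that any derivation of $\terminates{\asstexp}$ in $\stackStExpTSS$ is already a derivation in $\StExpTSS$ and involves only star expressions, so $\asstexp\in\StExpover{\actions}$. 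For \eqref{eq:2:lem:1:lem:indTSS:stackStExpTSS:2:StExpTSS}, statement~\ref{it:2:lem:stackStExpTSS:proj:StExpTSS} of Lemma~\ref{lem:stackStExpTSS:proj:StExpTSS} yields that applying $\sproj$ to the derivation produces a derivation in $\StExpTSS$ with conclusion $\proj{\asstexp}\lt{\aact}\proj{\asstexpacc}$.

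For \eqref{eq:3:lem:1:lem:indTSS:stackStExpTSS:2:StExpTSS} and \eqref{eq:4:lem:1:lem:indTSS:stackStExpTSS:2:StExpTSS}, I would inspect the rules of $\stackStExpTSS$ that can generate $\sone$\nobreakdash-transitions: the axiom $\stexpstackprod{\astexpi{1}}{\stexpit{\astexpi{2}}}\lt{\sone}\stexpit{\astexpi{2}}$ (requiring $\terminates{\astexpi{1}}$), and the two context rules propagating $\sone$-steps through $\stexpprod{\cdot}{\astexpi{2}}$ and $\stexpstackprod{\cdot}{\stexpit{\astexpi{2}}}$. In the base case (the axiom), $\proj{\asstexp}=\stexpprod{\astexpi{1}}{\stexpit{\astexpi{2}}}$ and $\proj{\asstexptilde}=\stexpit{\astexpi{2}}$: from $\terminates{\astexpi{1}}$ and either $\terminates{\stexpit{\astexpi{2}}}$ or $\stexpit{\astexpi{2}}\lt{\aact}\astexpacc$, the rules of $\StExpTSS$ for $\sstexpprod$ with left premise $\terminates{\astexpi{1}}$ immediately yield $\terminates{\stexpprod{\astexpi{1}}{\stexpit{\astexpi{2}}}}$, respectively $\stexpprod{\astexpi{1}}{\stexpit{\astexpi{2}}}\lt{\aact}\astexpacc$, as required.

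For the induction step, suppose the derivation of $\asstexp\lt{\sone}\asstexptilde$ ends with one of the two context rules, so $\asstexp$ and $\asstexptilde$ have the form $\acxtap{\asstexpi{1}}$ and $\acxtap{\asstexpacci{1}}$ for a one-hole context $\acxt$ of shape $\stexpprod{\ahole}{\astexpi{2}}$ or $\stexpstackprod{\ahole}{\stexpit{\astexpi{2}}}$, with $\asstexpi{1}\lt{\sone}\asstexpacci{1}$ as the immediate premise, and then $\proj{\asstexp}=\stexpprod{\proj{\asstexpi{1}}}{\bstexp}$ and $\proj{\asstexptilde}=\stexpprod{\proj{\asstexpacci{1}}}{\bstexp}$ for the appropriate $\bstexp\in\StExpover{\actions}$. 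For \eqref{eq:3:lem:1:lem:indTSS:stackStExpTSS:2:StExpTSS}, a derivation of $\terminates{\stexpprod{\proj{\asstexpacci{1}}}{\bstexp}}$ in $\StExpTSS$ must decompose into $\terminates{\proj{\asstexpacci{1}}}$ and $\terminates{\bstexp}$; applying the induction hypothesis for \eqref{eq:3:lem:1:lem:indTSS:stackStExpTSS:2:StExpTSS} to the first gives $\terminates{\proj{\asstexpi{1}}}$, from which $\terminates{\stexpprod{\proj{\asstexpi{1}}}{\bstexp}}=\terminates{\proj{\asstexp}}$ follows. For \eqref{eq:4:lem:1:lem:indTSS:stackStExpTSS:2:StExpTSS}, a derivation of $\stexpprod{\proj{\asstexpacci{1}}}{\bstexp}\lt{\aact}\astexpacc$ in $\StExpTSS$ arises via one of two rules: either from $\proj{\asstexpacci{1}}\lt{\aact}\astexpacctilde$ with $\astexpacc=\stexpprod{\astexpacctilde}{\bstexp}$, whence the induction hypothesis for \eqref{eq:4:lem:1:lem:indTSS:stackStExpTSS:2:StExpTSS} delivers $\proj{\asstexpi{1}}\lt{\aact}\astexpacctilde$ and hence $\proj{\asstexp}\lt{\aact}\astexpacc$; or from $\terminates{\proj{\asstexpacci{1}}}$ together with $\bstexp\lt{\aact}\astexpacc$, whence the induction hypothesis for \eqref{eq:3:lem:1:lem:indTSS:stackStExpTSS:2:StExpTSS} yields $\terminates{\proj{\asstexpi{1}}}$ and hence $\proj{\asstexp}\lt{\aact}\astexpacc$.

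The main obstacle, more bookkeeping than conceptual, is ensuring that the two context rules for $\sone$-transitions (through $\stexpprod{}{}$ and $\stexpstackprod{}{}$) are handled uniformly via the observation that under $\sproj$ both become the same $\StExpTSS$ concatenation context, and that the $\stexpsum{}{}$-rule and star-unfolding rules cannot produce $\sone$-transitions, so no further cases arise. Given that, the simultaneous induction closes neatly.
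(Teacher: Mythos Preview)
Your proposal is correct and takes essentially the same approach as the paper: statements \eqref{eq:1:lem:1:lem:indTSS:stackStExpTSS:2:StExpTSS} and \eqref{eq:2:lem:1:lem:indTSS:stackStExpTSS:2:StExpTSS} are dispatched via Lemma~\ref{lem:stackStExpTSS:proj:StExpTSS}, and \eqref{eq:3:lem:1:lem:indTSS:stackStExpTSS:2:StExpTSS} and \eqref{eq:4:lem:1:lem:indTSS:stackStExpTSS:2:StExpTSS} are shown jointly by case analysis on the last rule producing the $\sone$\nobreakdash-transition, with the axiom case and the two context cases handled exactly as you describe. The only cosmetic difference is that the paper phrases the induction as structural induction on~$\asstexp$ rather than on derivation depth, and treats the $\sstexpprod$- and $\sstexpstackprod$-context cases as two separate (analogous) cases rather than unifying them via~$\sproj$; since each $\sone$\nobreakdash-rule peels off exactly one outer context layer, the two induction measures coincide, so this is not a genuine methodological difference.
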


\begin{proof}
  Statement~\eqref{eq:1:lem:1:lem:indTSS:stackStExpTSS:2:StExpTSS}
  is guaranteed by Lemma~\ref{lem:stackStExpTSS:proj:StExpTSS}, \ref{it:1:lem:stackStExpTSS:proj:StExpTSS}.
  Statement~\eqref{eq:2:lem:1:lem:indTSS:stackStExpTSS:2:StExpTSS}, universally quantified,
  is guaranteed by Lemma~\ref{lem:stackStExpTSS:proj:StExpTSS}, \ref{it:2:lem:stackStExpTSS:proj:StExpTSS}.
  
  It remains to show the two implications~\eqref{eq:3:lem:1:lem:indTSS:stackStExpTSS:2:StExpTSS} and \eqref{eq:4:lem:1:lem:indTSS:stackStExpTSS:2:StExpTSS},
  universally quantified. 
  We will use statement~\eqref{eq:3:lem:1:lem:indTSS:stackStExpTSS:2:StExpTSS} in the proof of statement~\eqref{eq:4:lem:1:lem:indTSS:stackStExpTSS:2:StExpTSS}.
  This notwithstanding,
  we will demonstrate them in parallel, because they have the same assumption.
  We proceed by induction on the structure of $\asstexp\in\stackStExpover{\actions}$. 
  
  For this, we assume $\asstexp,\asstexptilde\in\stackStExpover{\actions}$
  such that $\derivablein{\stackStExpTSS} \asstexp \lt{\sone} \asstexptilde$.
  We will use a case distinction to show the two implications on the right-hand side of the outer implications 
  in \eqref{eq:3:lem:1:lem:indTSS:stackStExpTSS:2:StExpTSS} and \eqref{eq:4:lem:1:lem:indTSS:stackStExpTSS:2:StExpTSS}. 
  There are three rules of $\stackStExpTSS$ that introduce \onetransitions. 
  We distinguish the following three cases according to which\vspace*{-1.5pt} of these three possible rules
  is applied at the bottom of a derivation of $\asstexp \lt{\sone} \asstexptilde$ in $\stackStExpTSS$.
  
  \begin{description}[itemsep=1.25ex]
    \item{\emph{Case~1:}} \mbox{}
      $\asstexp = \stexpstackprod{\asstexpi{1}}{\stexpit{\astexpi{2}}}$,
      $\asstexptilde = \stexpstackprod{\asstexptildei{1}}{\stexpit{\astexpi{2}}}$,
      the bottommost rule application in a derivation of $\asstexp \lt{\sone} \asstexptilde$ in $\stackStExpTSS$ is:\vspace*{-0.5ex} 
      \begin{center}
        $
        \AxiomC{$ \asstexpi{1} \lt{\sone} \asstexptildei{1} $}
        \UnaryInfC{$ \stexpstackprod{\asstexpi{1}}{\stexpit{\astexpi{2}}} \lt{\sone} \stexpstackprod{\asstexptildei{1}}{\stexpit{\astexpi{2}}} $}
        \DisplayProof 
        $
      \end{center}
      In this case we have $\proj{\asstexp} = \proj{\stexpstackprod{\asstexpi{1}}{\stexpit{\astexpi{2}}}}
                                            = \stexpprod{\proj{\asstexpi{1}}}{\proj{\stexpit{\astexpi{2}}}}
                                            = \stexpprod{\proj{\asstexpi{1}}}{\stexpit{\astexpi{2}}}$,
      and                  $\proj{\asstexptilde} 
                                            = \stexpprod{\proj{\asstexptildei{1}}}{\stexpit{\astexpi{2}}}$,
      and\vspace*{-2pt} also that $\derivablein{\stackStExpTSS} \asstexpi{1} \lt{\sone} \asstexptildei{1}$ holds.  
      \vspace*{0.5ex}  
                   
      For showing \eqref{eq:1:lem:1:lem:indTSS:stackStExpTSS:2:StExpTSS} in this case,                                  
      suppose that $\derivablein{\StExpTSS} \terminates{\proj{\asstexptilde}}$ holds.
      This means $\derivablein{\StExpTSS} \terminates{(\stexpprod{\proj{\asstexptildei{1}}}{\stexpit{\astexpi{2}}})}$.
      Since this must be derived by the rule in $\StExpTSS$ for immediate termination of product expressions,
      it follows that also $\derivablein{\StExpTSS} \terminates{\proj{\asstexptildei{1}}}$ holds.  
      We have to show $\derivablein{\StExpTSS} \terminates{\proj{\asstexp}}$.
      Since $\asstexpi{1}$ is a proper subexpression\vspace*{-2pt} of $\stexpstackprod{\asstexpi{1}}{\stexpit{\astexpi{2}}}$,
      we can apply 
      the induction hypothesis for showing \eqref{eq:3:lem:1:lem:indTSS:stackStExpTSS:2:StExpTSS}
      to $\derivablein{\stackStExpTSS} \asstexpi{1} \lt{\sone} \asstexptildei{1}$ and $\derivablein{\StExpTSS} \terminates{\proj{\asstexptildei{1}}}$. 
      We obtain $\derivablein{\StExpTSS} \terminates{\proj{\asstexpi{1}}}$.
      Due to $\derivablein{\StExpTSS} \terminates{(\stexpit{\astexpi{2}})}$, and the rule for $\sterminates$ for product expressions in $\StExpTSS$ 
      we obtain $\derivablein{\StExpTSS} \terminates{(\stexpprod{\proj{\asstexpi{1}}}{\stexpit{\astexpi{2}}})}$.
      In this way we have shown $\derivablein{\StExpTSS} \terminates{\proj{\asstexp}}$,
      due to $\proj{\asstexp} = \stexpprod{\proj{\asstexpi{1}}}{\stexpit{\astexpi{2}}}$.
      
      \smallskip\enlargethispage{2ex}
      For showing \eqref{eq:2:lem:1:lem:indTSS:stackStExpTSS:2:StExpTSS} in this case,
      suppose that $\derivablein{\StExpTSS} \proj{\asstexptilde} \lt{\aact} \astexpacc$ holds, for some $\aact\in\actions$ and $\astexpacc\in\StExpover{\actions}$.  
      Hence $\derivablein{\StExpTSS} \stexpprod{\proj{\asstexptildei{1}}}{\stexpit{\astexpi{2}}} \lt{\aact} \astexpacc$           
      holds. We have to show that $\derivablein{\StExpTSS} \proj{\asstexp} \lt{\aact} \astexpacc$ holds as well.
      We distinguish the two possible cases in which the \transitionact{\aact} from $\stexpprod{\proj{\asstexptildei{1}}}{\stexpit{\astexpi{2}}}$
      arises via a step from $\proj{\asstexptildei{1}}$ or via a step from $\stexpit{\astexpi{2}}$.  
      
      If the step arises via a step from $\proj{\asstexptildei{1}}$,
      then $\derivablein{\StExpTSS} \proj{\asstexptildei{1}} \lt{\aact} \astexpacci{0}$ holds
      for some $\astexpacci{0}\in\StExpover{\actions}$ with \mbox{$\astexpacc = \stexpprod{\astexpacci{0}}{\stexpit{\astexpi{2}}}$}.
      As $\asstexpi{1}$ is a proper subexpression of $\stexpstackprod{\asstexpi{1}}{\stexpit{\astexpi{2}}}$,
      we can apply the induction hypothesis\vspace*{-1pt} to $\derivablein{\stackStExpTSS} \asstexpi{1} \lt{\sone} \asstexptildei{1}$
      and $\derivablein{\StExpTSS} \proj{\asstexptildei{1}} \lt{\aact} \astexpacci{0}$.
      We obtain that $\derivablein{\StExpTSS} \proj{\asstexpi{1}} \lt{\aact} \astexpacci{0}$ holds as well.
      Then by using\vspace*{-1pt} the rule of $\StExpTSS$ for putting this step into the context $\stexpprod{\Box}{\stexpit{\astexpi{2}}}$
      we obtain $\derivablein{\StExpTSS} \stexpprod{\proj{\asstexpi{1}}}{\stexpit{\astexpi{2}}} \lt{\aact} \stexpprod{\astexpacci{0}}{\stexpit{\astexpi{2}}}$,
      and hence $\derivablein{\StExpTSS} \proj{\asstexp} \lt{\aact} \astexpacc$.
      
      If the step arises via a step from $\stexpit{\astexpi{2}}$,
      then $\derivablein{\StExpTSS} \terminates{\proj{\asstexptildei{1}}}$,
      and  $\derivablein{\StExpTSS} \stexpit{\astexpi{2}} \lt{\aact} \astexpacc$.
      Now by applying statement~\eqref{eq:1:lem:1:lem:indTSS:stackStExpTSS:2:StExpTSS}
      to $\derivablein{\stackStExpTSS} \asstexpi{1} \lt{\sone} \asstexptildei{1}$ 
      and $\derivablein{\StExpTSS} \terminates{\proj{\asstexptildei{1}}}$
      we obtain $\derivablein{\StExpTSS} \terminates{\proj{\asstexpi{1}}}$.
      From this 
           and $\derivablein{\StExpTSS} \stexpit{\astexpi{2}} \lt{\aact} \astexpacc$
      we obtain
      $\derivablein{\StExpTSS} \stexpprod{\proj{\asstexpi{1}}}{\stexpit{\astexpi{2}}} \lt{\aact} \astexpacc$
      by a rule application in $\StExpTSS$. 
      By using the consequence
      $\proj{\asstexp} = \stexpprod{\proj{\asstexpi{1}}}{\stexpit{\astexpi{2}}}$
      of the assumption
      in this case we obtain again $\derivablein{\StExpTSS} \proj{\asstexp} \lt{\aact} \astexpacc$.
      
      As this case distinction was exhaustive, and we have shown the proof obligation here for \eqref{eq:2:lem:1:lem:indTSS:stackStExpTSS:2:StExpTSS}.
      
    \item{\emph{Case~2:}} \mbox{}   \mbox{}
      $\asstexp = \stexpprod{\asstexpi{1}}{\astexpi{2}}$,
      $\asstexptilde = \stexpprod{\asstexptildei{1}}{\astexpi{2}}$,
      the bottommost rule application in a derivation of $\asstexp \lt{\sone} \asstexptilde$ in $\stackStExpTSS$ is:\vspace*{-0.5ex} 
      \begin{center}
        $
        \AxiomC{$ \asstexpi{1} \lt{\sone} \asstexptildei{1} $}
        \UnaryInfC{$ \stexpprod{\asstexpi{1}}{\astexpi{2}} \lt{\sone} \stexpprod{\asstexptildei{1}}{\astexpi{2}} $}
        \DisplayProof 
        $
      \end{center}
      In this case we have $\proj{\asstexp} = \proj{\stexpprod{\asstexpi{1}}{\astexpi{2}}}
                                            = \stexpprod{\proj{\asstexpi{1}}}{\proj{\astexpi{2}}}
                                            = \stexpprod{\proj{\asstexpi{1}}}{\astexpi{2}}$,
      and                  $\proj{\asstexptilde} 
                                            = \stexpprod{\proj{\asstexptildei{1}}}{\astexpi{2}}$,
      and\vspace*{-2pt} we also find $\derivablein{\stackStExpTSS} \asstexpi{1} \lt{\sone} \asstexptildei{1}$.  
      \vspace*{0.5ex}
      
      This case can be settled in close analogy with our argumentation for Case~1.
      
    \item{\emph{Case~3:}} \mbox{}
      $\asstexp = \stexpprod{\astexpi{1}}{\stexpit{\astexpi{2}}}$,
      $\asstexptilde = \stexpit{\astexpi{2}}$,
      the bottommost rule application in a derivation of $\asstexp \lt{\sone} \asstexptilde$ in $\stackStExpTSS$ is:\vspace*{-0.5ex} 
      \begin{center}
        $
        \AxiomC{$ \terminates{\astexpi{1}} $}
        \UnaryInfC{$ \stexpprod{\astexpi{1}}{\stexpit{\astexpi{2}}} \lt{\sone} \stexpit{\astexpi{2}} $}
        \DisplayProof 
        $
      \end{center}
      In this case we have $\proj{\asstexp} = \stexpprod{\astexpi{1}}{\stexpit{\astexpi{2}}}$ and
                           $\proj{\asstexptilde} = \stexpit{\astexpi{2}}$,
      and we find $\derivablein{\stackStExpTSS} \terminates{\astexpi{1}}$. 
      From the latter we obtain $\derivablein{\StExpTSS} \terminates{\astexpi{1}}$ by Lemma~\ref{lem:stackStExpTSS:proj:StExpTSS}. 
      \vspace*{0.5ex}                   
      
      For showing \eqref{eq:1:lem:1:lem:indTSS:stackStExpTSS:2:StExpTSS} in this case, it suffices 
      to prove $\derivablein{\StExpTSS} \terminates{\proj{\asstexp}}$.
      We do not have to assume $\derivablein{\StExpTSS} \terminates{\proj{\asstexptilde}}$, because that holds anyway in this case
      due to $\derivablein{\StExpTSS} \terminates{(\stexpit{\astexpi{2}})}$ and the termination rule for iteration expressions in $\StExpTSS$.
      Now from $\derivablein{\StExpTSS} \terminates{\astexpi{1}}$ and $\derivablein{\StExpTSS} \terminates{(\stexpit{\astexpi{2}})}$
      we obtain $\derivablein{\StExpTSS} \terminates{\stexpprod{\astexpi{1}}{(\stexpit{\astexpi{2}})}}$,
      and hence $\derivablein{\StExpTSS} \terminates{\proj{\asstexp}}$.
      
      \smallskip
      For showing \eqref{eq:2:lem:1:lem:indTSS:stackStExpTSS:2:StExpTSS} in this case, 
      we suppose $\derivablein{\StExpTSS} \proj{\asstexptilde} \lt{\aact} \astexpacc$,
      and hence $\derivablein{\StExpTSS} \stexpit{\astexpi{2}} \lt{\aact} \astexpacc$,
      for some $\aact\in\actions$ and $\astexpacc\in\StExpover{\actions}$. 
      We have to show $\derivablein{\StExpTSS} \proj{\asstexp} \lt{\aact} \astexpacc$.
      Due to $\derivablein{\StExpTSS} \terminates{\astexpi{1}}$,
      and $\derivablein{\StExpTSS} \stexpit{\astexpi{2}} \lt{\aact} \astexpacc$
      we obtain $\derivablein{\StExpTSS} \stexpprod{\astexpi{1}}{\stexpit{\astexpi{2}}} \lt{\aact} \astexpacc$
      by a rule application in $\StExpTSS$.
      By the consequence $\proj{\asstexp} = \stexpprod{\astexpi{1}}{\stexpit{\astexpi{2}}}$ 
      of the assumption in this case we have established
      the proof obligation $\derivablein{\StExpTSS} \proj{\asstexp} \lt{\aact} \astexpacc$.
  \end{description}\enlargethispage{4ex}
  By having verified, in all of these three possible cases,
  the induction steps for the proofs by induction of \eqref{eq:1:lem:1:lem:indTSS:stackStExpTSS:2:StExpTSS} and \eqref{eq:2:lem:1:lem:indTSS:stackStExpTSS:2:StExpTSS},
  we have shown the universally quantified statements \eqref{eq:1:lem:1:lem:indTSS:stackStExpTSS:2:StExpTSS} and \eqref{eq:2:lem:1:lem:indTSS:stackStExpTSS:2:StExpTSS}.
\end{proof}

Now we formulate and prove a crucial lemma that relates derivability statements in $\stackStExpindTSSover{\actions}$ 
with derivability statements in $\StExpTSSover{\actions}$
in a way that will enable us to show that the projection function $\sproj$ 
defines a bisimulation 
  between the \LTSs\ $\indLTSof{\oneLTSof{\stackStExpTSSover{\actions}}} 
                        =
                      \indLTSof{\oneLTSof{\stackStExpover{\actions}}} 
                        = 
                      \LTSof{\stackStExpindTSSover{\actions}}$ 
                      (see Definition~\ref{def:stackStExpTSS}, and Lemma~\ref{lem:indLTS:oneLTS:stackStExps})
  and $\LTSof{\StExpTSSover{\actions}}$                     
  that are generated by these two \TSSs, respectively.

\begin{lem}\label{lem:indTSS:stackStExpTSS:2:StExpTSS}
  For all $\asstexp,\asstexpacc\in\stackStExpover{\actions}$, $\astexpacc\in\StExpover{\actions}$, and $\aact\in\actions$
  the following statements hold
  concerning derivability in the TSS $\stackStExpindTSSover{\actions}$ 
  and derivability in the TSS $\StExpTSSover{\actions}$
  (we drop $\actions$ from their designations):\vspace*{-0.5ex}
  \begin{align}
    \derivablein{\stackStExpindTSS}
      \oneterminates{\asstexp}
        \;\;\; & \:\Longrightarrow \;\;\;\; 
    \derivablein{\StExpTSS}
      \terminates{\proj{\asstexp}}\punc{,}
        \label{eq:1:lem:indTSS:stackStExpTSS:2:StExpTSS}
    \displaybreak[0]\\[-0.75ex] 
    \derivablein{\stackStExpindTSS}    
      \asstexp
        \ilt{\aact}
      \asstexpacc
        \;\;\; & \:\Longrightarrow \;\;\;\;
    \derivablein{\StExpTSS}
      \proj{\asstexp}
        \lt{\aact}
      \proj{\asstexpacc} \punc{,}
        \label{eq:2:lem:indTSS:stackStExpTSS:2:StExpTSS}
    \displaybreak[0]\\[-0.6ex] 
      \derivablein{\stackStExpindTSS}
        \oneterminates{\asstexp}
          \;\;\; & \Longleftarrow\: \;\;\;\; 
      \derivablein{\StExpTSS}
        \terminates{\proj{\asstexp}} \punc{,}
        \label{eq:3:lem:indTSS:stackStExpTSS:2:StExpTSS}
    \displaybreak[0]\\[-0.6ex]
    \existsstzero{\asstexpacc\in\stackStExp}
      \bigl[\,
        \proj{\asstexpacc} = \astexpacc
          \;\logand\;\; 
        {\derivablein{\stackStExpindTSS}
           \asstexp \ilt{\aact} \asstexpacc} 
      \,\bigr] 
        \;\;\; & \Longleftarrow\: \;\;\;\;
    \derivablein{\StExpTSS}
      \proj{\asstexp}
        \lt{\aact}
      \astexpacc\punc{.}
        \label{eq:4:lem:indTSS:stackStExpTSS:2:StExpTSS}
  \end{align}
\end{lem}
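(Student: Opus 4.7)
The plan is to establish the four implications by splitting them into two groups that use rather different techniques. For the two forward directions \eqref{eq:1:lem:indTSS:stackStExpTSS:2:StExpTSS} and \eqref{eq:2:lem:indTSS:stackStExpTSS:2:StExpTSS}, I would rely on the characterization supplied by Lemma~\ref{lem:prf:lem:indLTS:oneLTS:stackStExps}: every derivation in $\stackStExpindTSS$ of $\oneterminates{\asstexp}$ or $\asstexp \ilt{\aact} \asstexpacc$ unfolds to a chain of $\alert{\stexpone}$\nb-transitions $\asstexp = \asstexptildei{0} \lt{\sone} \cdots \lt{\sone} \asstexptildei{n}$ in $\stackStExpTSS$ that is finally capped off either by $\terminates{\asstexptildei{n}}$ or by $\asstexptildei{n} \lt{\aact} \asstexpacc$. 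Statements \eqref{eq:1:lem:1:lem:indTSS:stackStExpTSS:2:StExpTSS} or \eqref{eq:2:lem:1:lem:indTSS:stackStExpTSS:2:StExpTSS} of Lemma~\ref{lem:1:lem:indTSS:stackStExpTSS:2:StExpTSS} translate the cap into $\derivablein{\StExpTSS} \terminates{\proj{\asstexptildei{n}}}$ or into $\derivablein{\StExpTSS} \proj{\asstexptildei{n}} \lt{\aact} \proj{\asstexpacc}$, respectively; then a straightforward induction on $n$ using the backward-preservation statements \eqref{eq:3:lem:1:lem:indTSS:stackStExpTSS:2:StExpTSS} and \eqref{eq:4:lem:1:lem:indTSS:stackStExpTSS:2:StExpTSS} from the same lemma lifts that conclusion along the $\alert{\stexpone}$\nb-transition chain all the way back to $\proj{\asstexp}$.

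For the two backward directions \eqref{eq:3:lem:indTSS:stackStExpTSS:2:StExpTSS} and \eqref{eq:4:lem:indTSS:stackStExpTSS:2:StExpTSS} I would use induction on the structure of $\asstexp \in \stackStExpover{\actions}$, and within each case perform a sub-analysis on the last rule of a derivation in $\StExpTSS$ of $\terminates{\proj{\asstexp}}$ or of $\proj{\asstexp} \lt{\aact} \astexpacc$. The cases where $\asstexp$ is a star expression, or of the form $\stexpprod{\asstexpi{1}}{\astexpi{2}}$, are settled by mirroring the $\StExpTSS$ rule with the corresponding rule of $\stackStExpTSS$ (noting that the only real difference is that the $\stexpit{\astexp}$\nb-rule produces a target headed by $\sstexpstackprod$ rather than by $\sstexpprod$, and that these project equally under $\sproj$). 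The induction hypothesis, plus the admissible rules of Lemma~\ref{lem:2:lem:indTSS:stackStExpTSS:2:StExpTSS}, then deliver the required induced step or induced termination.

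The key case, and the one I expect to be the main obstacle, is $\asstexp = \stexpstackprod{\asstexpi{1}}{\stexpit{\astexpi{2}}}$, where $\proj{\asstexp} = \stexpprod{\proj{\asstexpi{1}}}{\stexpit{\astexpi{2}}}$. For \eqref{eq:3:lem:indTSS:stackStExpTSS:2:StExpTSS}, a derivation of $\terminates{\proj{\asstexp}}$ in $\StExpTSS$ forces $\terminates{\proj{\asstexpi{1}}}$; the induction hypothesis yields $\oneterminates{\asstexpi{1}}$ in $\stackStExpindTSS$, which by Lemma~\ref{lem:prf:lem:indLTS:oneLTS:stackStExps} unfolds to $\asstexpi{1} \lt{\sone} \cdots \lt{\sone} \asstexptildei{n}$ with $\terminates{\asstexptildei{n}}$; propagating each $\alert{\stexpone}$\nb-step under the context $\stexpstackprod{\Box}{\stexpit{\astexpi{2}}}$, then firing the dedicated rule $\stexpstackprod{\asstexptildei{n}}{\stexpit{\astexpi{2}}} \lt{\sone} \stexpit{\astexpi{2}}$ (which is licensed because $\terminates{\asstexptildei{n}}$ ensures $\asstexptildei{n} \in \StExpover{\actions}$ by inspection of the termination rules), and using $\terminates{\stexpit{\astexpi{2}}}$, yields $\oneterminates{\asstexp}$. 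For \eqref{eq:4:lem:indTSS:stackStExpTSS:2:StExpTSS}, the subcase in which the transition from $\proj{\asstexp}$ comes from $\proj{\asstexpi{1}}$ is handled by the induction hypothesis together with the propagation admissible rule; the harder subcase, where it comes from $\stexpit{\astexpi{2}}$ after $\proj{\asstexpi{1}}$ terminates, combines the above $\oneterminates{\asstexpi{1}}$\nb-argument with the straightforward star-expression case for $\stexpit{\astexpi{2}} \lt{\aact} \astexpacc$ to assemble an induced transition $\asstexp \ilt{\aact} \asstexpacc$ with the required projection.

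I would organize the write-up so that \eqref{eq:1:lem:indTSS:stackStExpTSS:2:StExpTSS} is proved before \eqref{eq:3:lem:indTSS:stackStExpTSS:2:StExpTSS} is used (the induction step of \eqref{eq:4:lem:indTSS:stackStExpTSS:2:StExpTSS} needs \eqref{eq:3:lem:indTSS:stackStExpTSS:2:StExpTSS} for its stacked-product subcase, in the same way that the proof of Lemma~\ref{lem:1:lem:indTSS:stackStExpTSS:2:StExpTSS} used \eqref{eq:1:lem:1:lem:indTSS:stackStExpTSS:2:StExpTSS} inside its parallel induction). The admissible rules from Lemma~\ref{lem:2:lem:indTSS:stackStExpTSS:2:StExpTSS}, together with the explicit $\alert{\stexpone}$\nb-transition rule from $\stexpstackprod{\astexpi{1}}{\stexpit{\astexpi{2}}}$, are precisely the ingredients that make the \protect\onetransition\ machinery strong enough to simulate whatever behavior the non-stacked projection exhibits.
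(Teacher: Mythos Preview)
Your proposal is correct and follows essentially the same approach as the paper: statements \eqref{eq:1:lem:indTSS:stackStExpTSS:2:StExpTSS} and \eqref{eq:2:lem:indTSS:stackStExpTSS:2:StExpTSS} are obtained from Lemma~\ref{lem:1:lem:indTSS:stackStExpTSS:2:StExpTSS} via the chain decomposition of Lemma~\ref{lem:prf:lem:indLTS:oneLTS:stackStExps}, and statements \eqref{eq:3:lem:indTSS:stackStExpTSS:2:StExpTSS} and \eqref{eq:4:lem:indTSS:stackStExpTSS:2:StExpTSS} are shown simultaneously by structural induction on $\asstexp$, with the proof of \eqref{eq:4:lem:indTSS:stackStExpTSS:2:StExpTSS} appealing to the induction hypothesis for \eqref{eq:3:lem:indTSS:stackStExpTSS:2:StExpTSS}.

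The only noteworthy difference is one of packaging in the case $\asstexp = \stexpstackprod{\asstexpi{1}}{\stexpit{\astexpi{2}}}$. The paper simply invokes the admissible rules of Lemma~\ref{lem:2:lem:indTSS:stackStExpTSS:2:StExpTSS} (the second rule for \eqref{eq:3:lem:indTSS:stackStExpTSS:2:StExpTSS}, the fourth and sixth rules for the two subcases of \eqref{eq:4:lem:indTSS:stackStExpTSS:2:StExpTSS}) once the induction hypothesis has produced $\oneterminates{\asstexpi{1}}$ or $\asstexpi{1} \ilt{\aact} \asstexpacci{1}$. You instead unfold $\oneterminates{\asstexpi{1}}$ explicitly via Lemma~\ref{lem:prf:lem:indLTS:oneLTS:stackStExps} into a $\sone$\nb-chain, propagate it under the context, and fire the $\stexpstackprod{\cdot}{\stexpit{\astexpi{2}}} \lt{\sone} \stexpit{\astexpi{2}}$ rule by hand. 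This is exactly what the admissibility proofs of those rules in Lemma~\ref{lem:2:lem:indTSS:stackStExpTSS:2:StExpTSS} encapsulate, so you are effectively re-deriving them inline; the paper's route is shorter precisely because it has already isolated that work. Either presentation is fine.
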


\begin{proof}
  For $\asstexp,\asstexpacc\in\stackStExpover{\actions}$, $\astexpacc\in\StExpover{\actions}$, and $\aact\in\actions$,
  statement~\eqref{eq:1:lem:indTSS:stackStExpTSS:2:StExpTSS} of the lemma
  can be shown from \eqref{eq:1:lem:1:lem:indTSS:stackStExpTSS:2:StExpTSS} and \eqref{eq:3:lem:1:lem:indTSS:stackStExpTSS:2:StExpTSS} 
               in Lemma~\ref{lem:1:lem:indTSS:stackStExpTSS:2:StExpTSS},
  and statement
  \eqref{eq:2:lem:indTSS:stackStExpTSS:2:StExpTSS} from \eqref{eq:2:lem:1:lem:indTSS:stackStExpTSS:2:StExpTSS} and \eqref{eq:4:lem:1:lem:indTSS:stackStExpTSS:2:StExpTSS} 
               in Lemma~\ref{lem:1:lem:indTSS:stackStExpTSS:2:StExpTSS}.
  
  \smallskip
  Statements~\eqref{eq:3:lem:indTSS:stackStExpTSS:2:StExpTSS} and \eqref{eq:4:lem:indTSS:stackStExpTSS:2:StExpTSS} 
  of the lemma can be shown by means of a proof by induction on the structure of $\asstexp$
  that makes crucial use of the admissible rules of $\stackStExpindTSS$ in Lemma~\ref{lem:2:lem:indTSS:stackStExpTSS:2:StExpTSS}. 
  For proving that 
  we proceed by structural induction on the stacked star expression $\asstexp\in\stackStExpover{\actions}$.
  
  For performing the induction step, we let $\asstexp\in\stackStExpover{\actions}$ be arbitrary. 
  We distinguish the seven possible cases that arise from the grammar of stacked star expressions in Def.~\ref{def:stackStExp},
  which recurs on the grammar for star expressions in Def.~\ref{def:StExp},
  provided that we subsume 
  the case $\asstexp = \stexpprod{\astexpi{1}}{\astexpi{2}}$ in Def.~\ref{def:StExp} with $\astexpi{1},\astexpi{2}\in\StExpover{\actions}$
  under the case $\asstexp = \stexpprod{\asstexpi{1}}{\astexpi{2}}$ in Def.~\ref{def:stackStExp} 
    with $\asstexpi{1}\in\stackStExpover{\actions}$, and $\astexpi{2}\in\StExpover{\actions}$.
  In each of these seven cases we establish the induction steps, restricted to the case under consideration, 
  for proofs of \eqref{eq:3:lem:indTSS:stackStExpTSS:2:StExpTSS} and \eqref{eq:4:lem:indTSS:stackStExpTSS:2:StExpTSS} 
  by structural induction on $\asstexp\,$:
  \begin{description}[itemsep=0.5ex]
    \item{\emph{Case~1:}} \mbox{}
      $\asstexp = \stexpzero$.   
      Then $\proj{\asstexp} = \stexpzero = \asstexp$.
      
      Since $\proj{\asstexp} = \stexpzero$ neither permits immediate termination nor a transition according to the TSS~$\StExpTSS$,
      the assumptions of both implications \eqref{eq:3:lem:indTSS:stackStExpTSS:2:StExpTSS} and \eqref{eq:4:lem:indTSS:stackStExpTSS:2:StExpTSS}
      are wrong, and so \eqref{eq:3:lem:indTSS:stackStExpTSS:2:StExpTSS} and \eqref{eq:4:lem:indTSS:stackStExpTSS:2:StExpTSS} hold in this case.
  
    \item{\emph{Case~2:}} \mbox{}
      $\asstexp = \stexpone$.  
      Then $\proj{\asstexp} = \stexpone = \asstexp$.
      
      Then $\derivablein{\stackStExpTSS} \terminates{\asstexp}$ holds, and hence also $\derivablein{\stackStExpindTSS} \oneterminates{\asstexp}$.
      This shows \eqref{eq:3:lem:indTSS:stackStExpTSS:2:StExpTSS}. 
      Since $\proj{\asstexp} = \stexpone$\vspace*{-1.5pt} does not permit a step in $\StExpTSS$, 
      \eqref{eq:4:lem:indTSS:stackStExpTSS:2:StExpTSS} holds trivially as well. 
      
    \item{\emph{Case~3:}} \mbox{}\enlargethispage{3ex}
      $\asstexp = \aact$ for some $\aact\in\actions$.  
      Then $\proj{\asstexp} = \aact = \asstexp$. 
      
      Since $\proj{\asstexp} = \aact$ does not permit immediate termination according to $\StExpTSS$, \eqref{eq:3:lem:indTSS:stackStExpTSS:2:StExpTSS} holds trivially.
      In order to show \eqref{eq:4:lem:indTSS:stackStExpTSS:2:StExpTSS}, 
      we consider a step $\derivablein{\StExpTSS} \proj{\asstexp} \lt{\aact} \astexpacc$ for some $\aact\in\actions$, and $\astexpacc\in\StExpover{\actions}$.
      We have to find $\asstexpacc\in\stackStExpover{\actions}$\vspace*{-4pt} 
        with $\derivablein{\stackStExpindTSS} \asstexp \ilt{\aact} \asstexpacc$ and $\proj{\asstexpacc} = \astexpacc$.   
      Now $\derivablein{\StExpTSS} \proj{\asstexp} \lt{\aact} \astexpacc$\vspace*{-2pt}
      means $\derivablein{\StExpTSS} \aact \lt{\aact} \stexpone$ by the rule for actions in $\StExpTSS$, and hence $\astexpacc = \stexpone$.
      Then with $\asstexpacc \defdby \stexpone$ we get $\derivablein{\stackStExpTSS} \asstexp \lt{\aact} \asstexpacc$
      due\vspace*{-4pt} to the rule for actions in $\stackStExpTSS$, and hence we obtain $\derivablein{\stackStExpindTSS} \asstexp \ilt{\aact} \asstexpacc$
      with $\proj{\asstexpacc} = \stexpone = \astexpacc$. 
      
    \item{\emph{Case~4:}} \mbox{}
      $\asstexp = \stexpsum{\astexpi{1}}{\astexpi{2}}$ for some $\astexpi{1},\astexpi{2}\in\StExpover{\actions}$.
      Then $\proj{\asstexp} = \stexpsum{\proj{\astexpi{1}}}{\proj{\astexpi{2}}} = \stexpsum{\astexpi{1}}{\astexpi{2}} = \asstexp$. 
      
      For showing \eqref{eq:3:lem:indTSS:stackStExpTSS:2:StExpTSS}, we suppose that $\derivablein{\StExpTSS} \terminates{\proj{\asstexp}}$ holds. 
      We have to show $\derivablein{\stackStExpTSS} \oneterminates{\asstexp}$.
      Since $\asstexp$ is a star expression in this case,
      and the rules for immediate termination of star expressions coincide in $\StExpTSS$ and $\stackStExpTSS$,
      in view of $\proj{\asstexp} = \asstexp$ from
      $\derivablein{\StExpTSS} \terminates{\proj{\asstexp}}$ 
      we conclude
      $\derivablein{\stackStExpTSS} \terminates{\asstexp}$.
      This entails $\derivablein{\stackStExpTSS} \oneterminates{\asstexp}$.
      
      For showing \eqref{eq:4:lem:indTSS:stackStExpTSS:2:StExpTSS},
      we suppose that $\derivablein{\StExpTSS} \proj{\asstexp} \lt{\aact} \astexpacc$, for some $\astexpacc\in\StExpover{\actions}$ and $\aact\in\actions$.
      Then due to the rule for $\sstexpsum$ in $\StExpTSS$
      there is $i\in\setexp{1,2}$ such that $\derivablein{\StExpTSS} \astexpi{i} \lt{\aact} \astexpacc$, which we pick accordingly.
      We have to find $\asstexpacc\in\stackStExpover{\actions}$
        such that $\derivablein{\stackStExpindTSS} \asstexp \ilt{\aact} \asstexpacci{i}$ and $\proj{\asstexpacc} = \astexpacc$.
      Since $\astexpi{i}$ is a subexpression~of~$\asstexp$,\vspace*{-4pt} we can apply the induction hypothesis for $\asstexpi{i}$.
      We obtain $\asstexpacci{i}\in\stackStExpover{\actions}$ 
        with\vspace*{-2.5pt} $\derivablein{\stackStExpindTSS} \astexpi{i} \ilt{\aact} \asstexpacci{i}$ and $\proj{\asstexpacci{i}} = \astexpi{i}$.
      Since no \onetransitions\ can depart from a star expression like $\astexpi{i}$ according to $\stackStExpTSS$,
        we also have $\derivablein{\stackStExpTSS} \astexpi{i} \lt{\aact} \asstexpacci{i}$.
      From this we get $\derivablein{\stackStExpTSS} \stexpsum{\astexpi{1}}{\astexpi{2}} \lt{\aact} \asstexpacci{i}$.
      Hence $\derivablein{\stackStExpindTSS} \asstexp \lt{\aact} \asstexpacci{i}$
      with $\proj{\asstexpacc} = \astexpacc$. 
      
    \item{\emph{Case~5:}} \mbox{}
      $\asstexp = \stexpprod{\asstexpi{1}}{\astexpi{2}}$ for some $\asstexpi{1}\in\stackStExpover{\actions}$, and $\astexpi{2}\in\StExpover{\actions}$.
      Then $\proj{\asstexp} 
                            = \stexpprod{\proj{\asstexpi{1}}}{\astexpi{2}}$.
      
      This case can be settled analogously to the slightly more complicated Case~7. Therefore we defer argumentation to that case.
      
    \item{\emph{Case~6:}} \mbox{}
      $\asstexp = \stexpit{\astexpi{0}}$ for some $\astexpi{0}\in\StExpover{\actions}$.
      Then $\proj{\asstexp} = \stexpit{\astexpi{0}} = \asstexp$.
      \vspace*{0.25ex}  
      
      For showing \eqref{eq:3:lem:indTSS:stackStExpTSS:2:StExpTSS}, it suffices to show $\derivablein{\stackStExpindTSS} \oneterminates{\asstexp}$.
      We get $\derivablein{\stackStExpTSS} \terminates{\asstexp}$ due to $\asstexp = \stexpit{\astexpi{0}}$ by the rule in $\stackStExpTSS$\vspace*{-2pt}
      for\vspace*{-2pt} immediate termination of iteration expressions. 
      From $\derivablein{\stackStExpTSS} \terminates{\asstexp}$ we obtain $\derivablein{\stackStExpindTSS} \oneterminates{\asstexp}$
      in $\stackStExpindTSS$.
      
      For showing \eqref{eq:4:lem:indTSS:stackStExpTSS:2:StExpTSS}, we suppose $\derivablein{\StExpTSS} \proj{\asstexp} \lt{\aact} \astexpacc$
      for $\aact\in\actions$, and $\astexpacc\in\StExpover{\actions}$. 
      This entails, due to $\proj{\asstexp} = \stexpit{\astexpi{0}}$, and the rule in $\StExpTSS$ for steps for iteration expressions,
      that $\derivablein{\StExpTSS} \astexpi{0} \lt{\aact} \astexpacci{0}$
      for some $\astexpacci{0}\in\StExpover{\actions}$ such that $\astexpacc = \stexpprod{\astexpacci{0}}{\stexpit{\astexpi{0}}}$.
      We have to find $\asstexpacc\in\stackStExpover{\actions}$
        such that $\derivablein{\stackStExpindTSS} \asstexp \ilt{\aact} \asstexpacc$ and\vspace*{-2pt} $\proj{\asstexpacc} = \astexpacc$.
      As $\astexpi{0}$ is a subexpression of $\stexpit{\astexpi{0}}$, and $\proj{\astexpi{0}} = \astexpi{0}$, 
      we can apply the induction hypothesis\vspace*{-2pt} to $\derivablein{\StExpTSS} \proj{\astexpi{0}} \lt{\aact} \astexpacci{0}$.
      We obtain that there is $\asstexpacci{0}\in\stackStExpover{\actions}$ 
      with $\derivablein{\stackStExpindTSS} \astexpi{0} \ilt{\aact} \asstexpacci{0}$ and $\proj{\asstexpacci{0}} = \astexpacci{0}$.\vspace*{-2pt}
      Now since there are\vspace*{-3pt} no \onetransitions\ possible in $\stackStExpTSS$ from star expressions like $\astexp$,
      we also get \mbox{$\derivablein{\stackStExpTSS} \astexpi{0} \lt{\aact} \asstexpacci{0}$}.
      From this we obtain, by applying the rule for transitions from iteration expressions\vspace*{-4pt} in $\stackStExpTSS$,
      that $\derivablein{\stackStExpTSS} \stexpit{\astexpi{0}} \lt{\aact} \stexpstackprod{\asstexpacci{0}}{\stexpit{\astexpi{0}}}$.
      Consequently we also get 
      $\derivablein{\stackStExpindTSS} \stexpit{\astexpi{0}} \ilt{\aact} \stexpstackprod{\asstexpacci{0}}{\stexpit{\astexpi{0}}}$.
      Then for\vspace*{-4pt} \mbox{$\asstexpacc \defdby \stexpstackprod{\asstexpacci{0}}{\astexpi{0}}$}
      we thus obtain 
      $\derivablein{\stackStExpTSS} \asstexp \ilt{\aact} \asstexpacc$
      with $\proj{\asstexpacc} = \stexpprod{\proj{\asstexpacci{0}}}{\proj{\stexpit{\astexpi{0}}}}
                              = \stexpprod{\astexpacci{0}}{\stexpit{\astexpi{0}}}
                              = \astexpacc$. 
      
    \item{\emph{Case~7:}} \mbox{}
      $\asstexp = \stexpstackprod{\asstexpi{1}}{\stexpit{\astexpi{2}}}$ for some $\asstexpi{1}\in\stackStExpover{\actions}$, and $\astexpi{2}\in\StExpover{\actions}$.
      Then $\proj{\asstexp} = \stexpprod{\proj{\asstexpi{1}}}{\stexpit{\astexpi{2}}}$.
      \vspace*{0.25ex}  
      
      For showing \eqref{eq:3:lem:indTSS:stackStExpTSS:2:StExpTSS}, we suppose that
      $\derivablein{\StExpTSS} \terminates{\proj{\asstexp}}$ holds, 
      and hence $\derivablein{\StExpTSS} \terminates{(\stexpprod{\proj{\asstexpi{1}}}{\stexpit{\astexpi{2}}})}\,$.
      By the rule of $\StExpTSS$ for immediate termination of product expressions this must be a consequence of
      $\derivablein{\StExpTSS} \terminates{\proj{\asstexpi{1}}}$ and $\derivablein{\StExpTSS} \terminates{\stexpit{\astexpi{2}}}\,$.
      We have to show that $\derivablein{\stackStExpindTSS} \asstexp$ holds.
      and $\derivablein{\stackStExpindTSS} \terminates{\stexpit{\astexpi{2}}}$. 
      Since $\asstexpi{1}$ is a proper subexpression\vspace*{-3pt} of $\asstexp$,
      we can apply the induction hypothesis for \eqref{eq:3:lem:indTSS:stackStExpTSS:2:StExpTSS} 
      to $\derivablein{\StExpTSS} \terminates{\proj{\asstexpi{1}}}$.
      We obtain that $\derivablein{\stackStExpindTSS} \oneterminates{\asstexpi{1}}$ holds.\vspace*{-3pt}
      By applying 
      the second admissible rule in Lemma~\ref{lem:2:lem:indTSS:stackStExpTSS:2:StExpTSS} 
      to $\derivablein{\stackStExpindTSS} \oneterminates{\asstexpi{1}}$ 
      we obtain
      $\derivablein{\stackStExpindTSS} \oneterminates{(\stexpstackprod{\asstexpi{1}}{\stexpit{\astexpi{2}}})}$,\vspace*{-3pt}
      and thus
      $\derivablein{\stackStExpindTSS} \oneterminates{\asstexp}$.
      
      For showing \eqref{eq:4:lem:indTSS:stackStExpTSS:2:StExpTSS}, we suppose $\derivablein{\StExpTSS} \proj{\asstexp} \lt{\aact} \astexpacc$
      for $\aact\in\actions$, and $\astexpacc\in\StExpover{\actions}$.
      We have to find\vspace*{-2pt} a stacked star expression $\asstexpacc\in\stackStExpover{\actions}$
        such that $\derivablein{\stackStExpindTSS} \asstexp \ilt{\aact} \asstexpacc$ and\vspace*{-3pt} $\proj{\asstexpacc} = \astexpacc$.
      As $\proj{\asstexp} = \stexpprod{\proj{\asstexpi{1}}}{\stexpit{\astexpi{2}}}$,
      the \transitionact{\aact} from $\proj{\asstexp}$ according to $\StExpTSS$
      can result by one of the two rules of $\StExpTSS$ with product on the left-hand side of the transition in the conclusion.
      We distinguish two subcases accordingly. 
      \begin{description}[itemsep=0.5ex]
        \item{\emph{Subcase~a:}} \mbox{}
          $\derivablein{\StExpTSS} \proj{\asstexp} \lt{\aact} \astexpacc$
          is derived in $\StExpTSS$ by a derivation whose immediate subderivations guarantee
          $\derivablein{\StExpTSS} \terminates{\proj{\asstexpi{1}}}$ and $\derivablein{\StExpTSS} \stexpit{\astexpi{2}} \lt{\aact} \astexpacc$.
          \vspace*{0.25ex}
          
          Since $\asstexpi{1}$ is a proper subexpression of $\asstexp$, we can apply \eqref{eq:3:lem:indTSS:stackStExpTSS:2:StExpTSS} 
          to $\derivablein{\StExpTSS} \terminates{\proj{\asstexpi{1}}}$. We get $\derivablein{\stackStExpindTSS} \oneterminates{\asstexpi{1}}$.
          As also $\stexpit{\astexpi{2}}$ is a proper subexpression of $\asstexp$,
          we can apply the induction hypothesis for \eqref{eq:4:lem:indTSS:stackStExpTSS:2:StExpTSS} 
          to\vspace*{-2pt} $\derivablein{\StExpTSS} \proj{\stexpit{\astexpi{2}}} \lt{\aact} \astexpacc$.
          We obtain that $\derivablein{\stackStExpindTSS} \stexpit{\astexpi{2}} \ilt{\aact} \asstexpacc$ 
          for some $\asstexpacc\in\stackStExpover{\actions}$ with $\proj{\asstexpacc} = \astexpacc$.\vspace*{-5pt}
          By applying the last admissible rule in Lemma~\ref{lem:2:lem:indTSS:stackStExpTSS:2:StExpTSS} 
          to $\derivablein{\stackStExpindTSS} \oneterminates{\asstexpi{1}}$ and $\derivablein{\stackStExpindTSS} \stexpit{\astexpi{2}} \ilt{\aact} \asstexpacc$
          we obtain\vspace*{-4pt}
          $\derivablein{\stackStExpindTSS} \stexpstackprod{\asstexpi{1}}{\stexpit{\astexpi{2}}} \ilt{\aact} \asstexpacc$.
          Thus we have found $\asstexpacc$ with 
          $\derivablein{\stackStExpindTSS} \asstexp \ilt{\aact} \asstexpacc$ with $\proj{\asstexpacc} = \astexpacc$.
        
        \item{\emph{Subcase~b:}} \mbox{}
          $\derivablein{\StExpTSS} \proj{\asstexp} \lt{\aact} \astexpacc$
          is derived in $\StExpTSS$ by a derivation whose immediate subderivation guarantees
          $\derivablein{\StExpTSS} \proj{\asstexpi{1}} \lt{\aact} \astexpacci{1}$
          for some $\astexpacci{1}\in\StExpover{\actions}$ with $\astexpacc = \stexpprod{\astexpacci{1}}{\stexpit{\astexpi{2}}}$.
          \vspace*{0.5ex}
          
          Since $\asstexpi{1}$ is a proper subexpression of $\asstexp$, 
          we can apply the induction hypothesis for \eqref{eq:4:lem:indTSS:stackStExpTSS:2:StExpTSS}
          to $\derivablein{\StExpTSS} \proj{\asstexpi{1}} \lt{\aact} \astexpacci{1}$.
          We obtain $\asstexpacci{1}\in\stackStExpover{\actions}$
          such that $\derivablein{\stackStExpindTSS} \asstexpi{1} \ilt{\aact} \asstexpacci{1}$
          with $\proj{\asstexpi{1}} = \astexpacci{1}$.
          Now\vspace*{-4pt} we can apply the forth admissible rule in Lemma~\ref{lem:2:lem:indTSS:stackStExpTSS:2:StExpTSS} 
          to $\derivablein{\stackStExpindTSS} \asstexpi{1} \ilt{\aact} \asstexpacci{1}$
          in order to get that also\vspace*{-5pt}
          $\derivablein{\stackStExpindTSS} \stexpstackprod{\asstexpi{1}}{\stexpit{\astexpi{2}}} \ilt{\aact} \stexpstackprod{\asstexpacci{1}}{\stexpit{\astexpi{2}}} $
          holds. 
          This shows that by $\asstexpacc \defdby \stexpstackprod{\asstexpi{1}}{\stexpit{\astexpi{2}}}$
          we have found an expression\vspace*{-4pt}
          with $\derivablein{\stackStExpindTSS} \asstexp \ilt{\aact} \asstexpacc$
          and $\proj{\asstexpacc} = \stexpprod{\proj{\asstexpacci{1}}}{\stexpit{\astexpi{2}}} 
                                  = \stexpprod{\astexpacci{1}}{\stexpit{\astexpi{2}}}
                                  = \astexpacc$.
      \end{description} 
      As we have successfully performed the induction step for \eqref{eq:4:lem:indTSS:stackStExpTSS:2:StExpTSS} in the two possible subcases,
      we have also done so for the case as a whole. 
  \end{description}      
  By having performed the induction steps for \eqref{eq:3:lem:indTSS:stackStExpTSS:2:StExpTSS} and for \eqref{eq:4:lem:indTSS:stackStExpTSS:2:StExpTSS}
  in each of the 7 cases that are possible for the structure of $\asstexp$,
  we have successfully performed it for the proofs of 
  \eqref{eq:3:lem:indTSS:stackStExpTSS:2:StExpTSS} and for \eqref{eq:4:lem:indTSS:stackStExpTSS:2:StExpTSS}.
  In this way we have established 
  the universally quantified statements \eqref{eq:3:lem:indTSS:stackStExpTSS:2:StExpTSS} and for \eqref{eq:4:lem:indTSS:stackStExpTSS:2:StExpTSS}
  in Lemma~\ref{lem:indTSS:stackStExpTSS:2:StExpTSS}.
\end{proof}

At long last we have gathered all auxiliary results that we need for proving Lemma~\ref{lem:StExpLTS:funbisim:StExpindLTS}.
Before finishing its proof, we repeat the formulation of this lemma.

\begin{repeatedlem}[= Lemma~\ref{lem:StExpLTS:funbisim:StExpindLTS}]
  The projection function $\sproj$ defines, as its graph, a (functional) bisimulation between the
  induced LTS $\indLTSof{\oneLTSof{\stackStExpover{\actions}}}$ of the stacked star expressions \oneLTS~$\stackStExpover{\actions}$
  and the star expressions LTS $\LTSof{\StExpover{\actions}}$.
\end{repeatedlem}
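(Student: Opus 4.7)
The plan is to assemble the lemma directly from the auxiliary results already established, namely Lemma~\ref{lem:indLTS:oneLTS:stackStExps} (which identifies the induced LTS of $\oneLTSof{\stackStExpover{\actions}}$ with the LTS generated by $\stackStExpindTSSover{\actions}$) and the four implications \eqref{eq:1:lem:indTSS:stackStExpTSS:2:StExpTSS}--\eqref{eq:4:lem:indTSS:stackStExpTSS:2:StExpTSS} of Lemma~\ref{lem:indTSS:stackStExpTSS:2:StExpTSS}. Nothing further by induction is required; the creative work has already been done in the proofs of those auxiliary statements. Since all the effort is buried in the lemmas above, no step is a serious obstacle---the only care needed is bookkeeping: checking that the three defining conditions of a bisimulation (forth, back, termination) from Def.~\ref{def:bisims:LTSs} each match up with the right implication.

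First I would consider the graph $\abisim \defdby \descsetexp{\pair{\asstexp}{\proj{\asstexp}}}{\asstexp\in\stackStExpover{\actions}}$ of the projection function $\sproj\funin\stackStExpover{\actions}\to\StExpover{\actions}$. This relation is clearly non-empty. By Lemma~\ref{lem:indLTS:oneLTS:stackStExps}, transitions $\asstexp \ilt{\aact} \asstexpacc$ and induced-termination statements $\oneterminates{\asstexp}$ of $\indLTSof{\oneLTSof{\stackStExpover{\actions}}}$ coincide with the corresponding statements derivable in $\stackStExpindTSSover{\actions}$. This means the three bisimulation conditions can be formulated in terms of derivability in $\stackStExpindTSSover{\actions}$ on one side and $\StExpTSSover{\actions}$ on the other.

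Then I would verify the three conditions of Def.~\ref{def:bisims:LTSs} in turn, for an arbitrary pair $\pair{\asstexp}{\proj{\asstexp}}\in\abisim$. For (forth), suppose $\asstexp \ilt{\aact} \asstexpacc$ in the induced LTS. Then \eqref{eq:2:lem:indTSS:stackStExpTSS:2:StExpTSS} yields $\proj{\asstexp} \lt{\aact} \proj{\asstexpacc}$ in $\LTSof{\StExpover{\actions}}$, and the matching pair $\pair{\asstexpacc}{\proj{\asstexpacc}}$ is in $\abisim$ by construction. For (back), suppose $\proj{\asstexp} \lt{\aact} \astexpacc$ in $\LTSof{\StExpover{\actions}}$. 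Then \eqref{eq:4:lem:indTSS:stackStExpTSS:2:StExpTSS} produces some $\asstexpacc\in\stackStExpover{\actions}$ with $\proj{\asstexpacc} = \astexpacc$ and $\asstexp \ilt{\aact} \asstexpacc$ in the induced LTS (via Lemma~\ref{lem:indLTS:oneLTS:stackStExps}), and again $\pair{\asstexpacc}{\astexpacc}\in\abisim$. For (termination), the biimplication $\oneterminates{\asstexp} \Leftrightarrow \terminates{\proj{\asstexp}}$ is exactly the conjunction of \eqref{eq:1:lem:indTSS:stackStExpTSS:2:StExpTSS} and \eqref{eq:3:lem:indTSS:stackStExpTSS:2:StExpTSS}.

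Thus $\abisim$ is a bisimulation between $\indLTSof{\oneLTSof{\stackStExpover{\actions}}}$ and $\LTSof{\StExpover{\actions}}$, and since it is by construction the graph of the (total) function $\sproj$, it is a functional bisimulation. The whole argument is essentially a one-paragraph assembly, so the proof itself will be short; all conceptual difficulty has already been addressed in the inductive proofs of Lemma~\ref{lem:indTSS:stackStExpTSS:2:StExpTSS} and the rule-permutation arguments of Lemma~\ref{lem:2:lem:indTSS:stackStExpTSS:2:StExpTSS}, which is where the real obstacle (the interaction of the $\sstexpstackprod$-rule creating \onetransition\ backlinks with the derivation structure for $\sterminates$) had to be overcome.
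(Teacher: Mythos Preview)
Your proposal is correct and follows essentially the same route as the paper: both proofs invoke Lemma~\ref{lem:indLTS:oneLTS:stackStExps} to pass from the induced LTS to derivability in $\stackStExpindTSSover{\actions}$, and then read off the forth, back, and termination conditions directly from implications \eqref{eq:2:lem:indTSS:stackStExpTSS:2:StExpTSS}, \eqref{eq:4:lem:indTSS:stackStExpTSS:2:StExpTSS}, and \eqref{eq:1:lem:indTSS:stackStExpTSS:2:StExpTSS}$+$\eqref{eq:3:lem:indTSS:stackStExpTSS:2:StExpTSS} of Lemma~\ref{lem:indTSS:stackStExpTSS:2:StExpTSS}, respectively. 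Your remark that the real work has already been absorbed into the auxiliary lemmas is exactly the point.
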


\begin{proof}[Proof (of Lemma~\ref{lem:StExpLTS:funbisim:StExpindLTS})]
  By transferring the four statements 
  of Lemma~\ref{lem:indTSS:stackStExpTSS:2:StExpTSS} from the \TSSs~$\stackStExpindTSSover{\actions}$ and $\StExpTSSover{\actions}$ 
  to their generated \LTSs~$\LTSdefdby{\stackStExpindTSSover{\actions}}$ and $\LTSdefdby{\StExpTSSover{\actions}}$,
  and by using 
  $\LTSdefdby{\stackStExpindTSSover{\actions}} = \LTSof{\stackStExpover{\actions}}$
  from Lemma~\ref{lem:indLTS:oneLTS:stackStExps},
  and 
  $\LTSdefdby{\StExpTSSover{\actions}} = \LTSof{\StExpover{\actions}}$ from Def.~\ref{def:stackStExpTSS}
  we obtain the following statements,
  for all stacked star expressions $\asstexp,\asstexpacc\in\stackStExpover{\actions}$, star expressions $\astexpacc\in\StExpover{\actions}$, and actions $\aact\in\actions\,$:
  \begin{align}
      \asstexp
        \ilt{\aact}
      \asstexpacc 
          \;\;\;\text{in $\indLTSof{\LTSof{\stackStExpTSSover{\actions}}}$}
        \;\;\; & \:\:\Longrightarrow \;\;\;\;
      \proj{\asstexp}
        \lt{\aact}
      \proj{\asstexpacc} 
        \;\;\;\text{in $\LTSof{\StExpover{\actions}}$} \punc{,}
          \label{eq:1:prf:lem:StExpLTS:funbisim:StExpindLTS}
    \\[-0.5ex]
    \begin{aligned}[b]
        \existsstzero{\asstexpacc\in\stackStExp}
          \bigl[\,
              & \proj{\asstexpacc} = \astexpacc
                  \;\logand\;\; 
              \\[-0.75ex]
              & \asstexp \ilt{\aact} \asstexpacc  
                  \;\;\;\text{in $\indLTSof{\LTSof{\stackStExpover{\actions}}}$}
            \,\bigr] 
    \end{aligned} 
        \;\;\; & 
      \begin{aligned}[b]
        \phantom{
         \Longleftarrow\: \;\;\;\;
        \proj{\asstexp}
          \lt{\aact}
        \astexpacc 
        \;\;\;\text{in $\LTSof{\stackStExpover{\actions}}$}
                 } 
        \\
         \Longleftarrow\: \;\;\;\;
        \proj{\asstexp}
          \lt{\aact}
        \astexpacc 
        \;\;\;\text{in $\LTSof{\StExpover{\actions}}$} \punc{,}
      \end{aligned}
          \label{eq:2:prf:lem:StExpLTS:funbisim:StExpindLTS}
    \\[0.25ex]
      \oneterminates{\asstexp} \;\;\;\text{in $\indLTSof{\LTSof{\stackStExpover{\actions}}}$}%
        \;\;\; & \Longleftrightarrow\: \;\;\;\;
      \terminates{\proj{\asstexp}} \;\;\;\text{in $\LTSof{\StExpover{\actions}}$}  \punc{.}
          \label{eq:3:prf:lem:StExpLTS:funbisim:StExpindLTS}
  \end{align}
  These statements follow from respective parts of the statement of Lemma~\ref{lem:indTSS:stackStExpTSS:2:StExpTSS},
  in particular:
  \eqref{eq:1:prf:lem:StExpLTS:funbisim:StExpindLTS} follows from \eqref{eq:2:lem:indTSS:stackStExpTSS:2:StExpTSS},
  \eqref{eq:2:prf:lem:StExpLTS:funbisim:StExpindLTS} from \eqref{eq:4:lem:indTSS:stackStExpTSS:2:StExpTSS},
  and    \eqref{eq:3:prf:lem:StExpLTS:funbisim:StExpindLTS} from \eqref{eq:1:lem:indTSS:stackStExpTSS:2:StExpTSS} and \eqref{eq:3:lem:indTSS:stackStExpTSS:2:StExpTSS}.
  Now \eqref{eq:1:prf:lem:StExpLTS:funbisim:StExpindLTS}, \eqref{eq:2:prf:lem:StExpLTS:funbisim:StExpindLTS}, and \eqref{eq:3:prf:lem:StExpLTS:funbisim:StExpindLTS} 
  witness the forth, the back, and the termination condition in Def.~\ref{def:bisims:LTSs}
  for the graph of the projection function $\sproj$, which also is \nonempty.
  Therefore $\sproj$ defines a bisimulation
  between $\indLTSof{\LTSof{\stackStExpover{\actions}}}$ and $\LTSof{\StExpover{\actions}}$.
\end{proof}

At last we are able to give the proof of Theorem~\ref{thm:onechart-int:funbisim:chart-int}.
We first repeat its formulation.

\begin{repeatedthm}[= Theorem~\ref{thm:onechart-int:funbisim:chart-int}]
  $\indscchartof{\onechartof{\astexp}} \funbisim \chartof{\astexp}$ holds for all $\astexp\in\StExpover{\actions}$:
  there is a functional bisimulation from the induced chart of the \onechart\ interpretation of a star expression~$\astexp$
  to the chart interpretation of $\astexp$.
\end{repeatedthm}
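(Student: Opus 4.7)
The plan is to derive Theorem~\ref{thm:onechart-int:funbisim:chart-int} as a corollary of Lemma~\ref{lem:StExpLTS:funbisim:StExpindLTS} by suitably restricting the global functional bisimulation $\sproj$ to the parts reachable from $\astexp$. First I observe that $\astexp \in \StExpover{\actions} \subseteq \stackStExpover{\actions}$, so $\astexp$ is a legitimate start vertex on both sides, and the definition of $\sproj$ gives $\proj{\astexp} = \astexp$. Hence the start vertices match under $\sproj$, which verifies the start-vertex condition of Def.~\ref{def:bisims:charts}.

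Next I would unpack the definitions of $\onechartof{\astexp}$ (Def.~\ref{def:onechart-int:funbisim:chart-int} and Def.~\ref{def:onechart:interpretation}) and $\chartof{\astexp}$ (Def.~\ref{def:chart:interpretation}): both arise by restricting their ambient (1-)LTS to vertices reachable from $\astexp$. Then I would verify that the induced chart of the $\setexp{\astexp}$-generated sub-1-LTS of $\oneLTSof{\stackStExpover{\actions}}$ coincides with the $\setexp{\astexp}$-generated sub-LTS of the induced LTS $\indLTSof{\oneLTSof{\stackStExpover{\actions}}}$, with $\astexp$ as start vertex; this is immediate from Def.~\ref{def:indLTS} and Def.~\ref{def:indchart}, since induced transitions and induced termination depend only on the local $\sone$-transition/proper-transition/termination structure inherited by sub-1-LTSs. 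Consequently, $\indscchartof{\onechartof{\astexp}}$ is exactly the sub-chart of $\indLTSof{\oneLTSof{\stackStExpover{\actions}}}$ with start vertex $\astexp$ and vertices those reachable from $\astexp$ via induced transitions.

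Now let $\abisim$ be the graph of $\sproj$ restricted to the vertices of $\indscchartof{\onechartof{\astexp}}$. I want to show $\abisim$ is a functional bisimulation onto $\chartof{\astexp}$. The forth and termination conditions transfer directly from Lemma~\ref{lem:StExpLTS:funbisim:StExpindLTS}: every induced transition $\asstexp \ilt{\aact} \asstexpacc$ in $\indscchartof{\onechartof{\astexp}}$ is matched by $\proj{\asstexp} \lt{\aact} \proj{\asstexpacc}$ in $\LTSof{\StExpover{\actions}}$, and by induction along the reachability structure from $\astexp$ the image $\proj{\asstexpacc}$ is reachable in $\chartof{\astexp}$; likewise for induced termination. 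For the back condition, given a transition $\proj{\asstexp} \lt{\aact} \astexpacc$ in $\chartof{\astexp}$, Lemma~\ref{lem:StExpLTS:funbisim:StExpindLTS} supplies $\asstexpacc$ with $\asstexp \ilt{\aact} \asstexpacc$ and $\proj{\asstexpacc} = \astexpacc$; since $\asstexp$ lies in $\indscchartof{\onechartof{\astexp}}$, so does $\asstexpacc$, and the match is obtained within the sub-chart. The termination back direction is handled analogously.

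The main obstacle I anticipate is bookkeeping around the equivalence of the two natural ways to form the induced sub-chart, namely ``induce and then restrict'' versus ``restrict and then induce''. This is conceptually routine but needs to be spelled out carefully, because the restriction in $\indscchartof{\onechartof{\astexp}}$ is taken with respect to $\oneLTSof{\stackStExpover{\actions}}$ (using $\sone$-paths followed by proper transitions) while the restriction of $\indLTSof{\oneLTSof{\stackStExpover{\actions}}}$ uses the already-induced transitions; one must check that these two notions of reachability produce the same vertex set from $\astexp$. Once this is verified, Theorem~\ref{thm:onechart-int:funbisim:chart-int} follows by transporting Lemma~\ref{lem:StExpLTS:funbisim:StExpindLTS} through the restriction.
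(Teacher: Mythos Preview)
Your approach is exactly the paper's: restrict the global bisimulation $\sproj$ from Lemma~\ref{lem:StExpLTS:funbisim:StExpindLTS} to the $\astexp$-generated parts, using $\proj{\astexp}=\astexp$ for the start-vertex condition. You are also right to single out the ``restrict then induce'' versus ``induce then restrict'' bookkeeping as the only nontrivial point.

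One caution about that point: your expectation that these two notions of reachability yield the \emph{same} vertex set is not borne out, and the paper's proof (which makes the same identification) glosses over this too. By Def.~\ref{def:indchart} the induced chart $\indscchartof{\onechartof{\astexp}}$ keeps \emph{all} vertices of $\onechartof{\astexp}$, including ones like $\asstexpi{1}$ in the paper's running example that are reached from $\astexp$ only as the target of a trailing $\sone$-transition and are therefore not reachable via induced transitions (the example text itself calls $\asstexpi{1},\asstexpi{2}$ ``unreachable''). For such a vertex $\proj{\asstexpi{1}} = \stexpprod{(\stexpprod{\stexpit{a}}{\stexpit{b}})}{\astexp}$ is not a vertex of $\chartof{\astexp}$, so taking $\abisim$ as the graph of $\sproj$ on all of $\vertsof{\onechartof{\astexp}}$ gives a relation not contained in $\vertsi{1}\times\vertsi{2}$. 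The fix is immediate once noticed: Def.~\ref{def:bisims:charts} allows a \emph{partial} function, so restrict the domain further to vertices reachable from $\astexp$ via induced transitions; for those, iterating the forth direction of Lemma~\ref{lem:StExpLTS:funbisim:StExpindLTS} shows the $\sproj$-image lands in $\vertsof{\chartof{\astexp}}$, and your verification of forth, back, and termination then goes through unchanged.
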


\begin{proof}[Proof of Theorem~\ref{thm:onechart-int:funbisim:chart-int}]
  Let $\astexp\in\StExpover{\actions}$ be a star expression.
  By Def.~\ref{def:onechart:interpretation}, the \onechart\ interpretation $\onechartof{\astexp}$ of $\astexp$ with start vertex $\astexp$
  is the \generatedby{\setexp{\astexp}} \subonechart\ of the stacked star expressions \oneLTS~$\oneLTSof{\stackStExpover{\actions}}$. 
  It follows by the definition of induced transitions in Def.~\ref{def:indLTS} that
  the induced chart $\indLTSof{\onechartof{\astexp}}$ of $\onechartof{\astexp}$ is the \generatedby{\setexp{\astexp}} subchart
  of the induced \LTS~$\indLTSof{\oneLTSof{\stackStExpover{\actions}}}$ of $\oneLTSof{\stackStExpover{\actions}}$.
  On the other hand, the chart interpretation $\chartof{\astexp}$ of $\astexp$
  is the \generatedby{\setexp{\astexp}} subchart of the star expressions \LTS~$\LTSof{\StExpover{\actions}}$ by Def.~\ref{def:chart:interpretation}.
  Now since, due to Lemma~\ref{lem:StExpLTS:funbisim:StExpindLTS}, 
  the graph of the projection function $\sproj$ defines a bisimulation from 
  $\indLTSof{\oneLTSof{\stackStExpover{\actions}}}$ to the star expressions \LTS~$\LTSof{\StExpover{\actions}}$
  which due to $\proj{\astexp} = \astexp$ relates $\astexp$ with itself,
  it follows that the restriction of $\sproj$ to the \generatedby{\setexp{\astexp}} part
  $\indscchartof{\onechartof{\astexp}}$ of $\indLTSof{\oneLTSof{\StExpover{\actions}}}$
  defines a bisimulation to the \generatedby{\setexp{\astexp}} part $\chartof{\astexp}$ of $\LTSof{\StExpover{\actions}}$. 
  From this we conclude that $\indscchartof{\onechartof{\astexp}} \funbisim \chartof{\astexp}$ holds.
\end{proof}

\subsection{Proof of property~\ref{property:2} of $\onechartof{\cdot}$}
  \label{property:2::proofs}

To demonstrate property~\ref{property:2}
we first develop the proof of Lemma~\ref{lem:oneLTShat:stackStExps:is:LLEEw}
via three auxiliary statements 
  (Lemma~\ref{lem:steps:appcxt}, Lemma~\ref{lem:bodypaths:stexpstackprod:stexpprod}, and Lemma~\ref{lem:lem:oneLTShat:stackStExps:is:LLEEw} below), 
and then use Lemma~\ref{lem:oneLTShat:stackStExps:is:LLEEw} to show Theorem~\ref{thm:onechart-int:LLEEw}.

Lemma~\ref{lem:oneLTShat:stackStExps:is:LLEEw}
states that the \entrybodylabeling\ $\oneLTShatof{\stackStExpover{\actions}}$ from Def.~\ref{def:stackStExpTSShat} 
is a \LLEEwitness\ for the stacked star expression \oneLTS\ $\oneLTSof{\stackStExpover{\actions}}$.
As a preparation for verifying that the properties of a \LLEEwitness\ are fulfilled, 
we prove a number of technical statements in Lemma~\ref{lem:bodypaths:stexpstackprod:stexpprod}, and Lemma~\ref{lem:lem:oneLTShat:stackStExps:is:LLEEw} below. 
For their formulation, we first introduce the set $\AppCxtover{\actions}$ of \emph{applicative contexts of stacked star expressions} over $\actions$ 
by which we mean the set of contexts that are defined by the grammar:
\begin{center}
  $
  \acxtwh
    \;\;\;\BNFdefdby\;\;\;
      \Box
        \;\BNFor\;
      \stexpprod{\acxtwh}{\astexp}
        \;\BNFor\;
      \stexpstackprod{\acxtwh}{\stexpit{\astexp}}
        \qquad\text{(where $\astexp\in\StExpover{\actions}$)} \punc{.}   
  $
\end{center}
In view of Def.~\ref{def:StExp} every stacked star expression $\asstexp\in\stackStExpover{\actions}$ 
  can be parsed uniquely as of the form $\asstexp = \acxtap{\astexp}$ for some star expression~$\astexp\in\StExpover{\actions}$, 
  and an applicative context $\acxtwh\in\AppCxtover{\actions}$.
  
The following lemma states that transitions of the \entrybodylabeling~$\oneLTShatof{\stackStExpover{\actions}}$
are preserved under the operation of filling stacked star expressions into applicative contexts.

\begin{lem}\label{lem:steps:appcxt}
  If $\asstexp \redi{\alab} \asstexpacc$ in $\oneLTShatof{\stackStExpover{\actions}}$, 
  then also $\acxtap{\asstexp} \redi{\alab} \acxtap{\asstexpacc}$,
  for each $\darkcyan{\alab} \in \setexp{\bodylabcol} \cup \descsetexp{ \darkcyan{\loopnsteplab{\aLname}} }{ \aLname\in\natplus }$.
\end{lem}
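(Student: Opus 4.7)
The plan is a straightforward structural induction on the applicative context $\acxtwh \in \AppCxtover{\actions}$, and the critical observation that must be confirmed beforehand is that the two context-forming rules of $\stackStExpTSShatover{\actions}$ (the ones with $\stexpprod{\asstexpi{1}}{\astexpi{2}}$ and $\stexpstackprod{\asstexpi{1}}{\stexpit{\astexpi{2}}}$ at their conclusion) copy the marking label $\darkcyan{\alab}$ of the premise verbatim into the conclusion. This is in deliberate contrast to the rules for sums, for termination-enabled products, and for non-\txtnormedplus\ iterations, all of which overwrite whatever label appears above with a body label $\bodylabcol$. Hence both rules along which the context is built preserve not only the action label $\alert{\aoneact}$ but also any entry label $\darkcyan{\loopnsteplab{\aLname}}$.

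In the base case $\acxtwh = \Box$ one has $\acxtap{\asstexp} = \asstexp$ and $\acxtap{\asstexpacc} = \asstexpacc$, so the conclusion is the assumption. In the inductive case $\acxtwh = \stexpprod{\acxtwhi'}{\astexp}$, the induction hypothesis applied to $\acxtwhi'$ yields $\acxtwhi'\hspace*{-1pt}[\asstexp] \redi{\darkcyan{\alab}} \acxtwhi'\hspace*{-1pt}[\asstexpacc]$ in $\oneLTShatof{\stackStExpover{\actions}}$; applying to this the third rule of the second row of Def.~\ref{def:stackStExpTSShat} (the $\sstexpprod$\nobreakdash-con\-text rule) with $\astexpi{2} \defdby \astexp$ gives $\stexpprod{\acxtwhi'\hspace*{-1pt}[\asstexp]}{\astexp} \redi{\darkcyan{\alab}} \stexpprod{\acxtwhi'\hspace*{-1pt}[\asstexpacc]}{\astexp}$, which is the desired $\acxtap{\asstexp} \redi{\darkcyan{\alab}} \acxtap{\asstexpacc}$. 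The case $\acxtwh = \stexpstackprod{\acxtwhi'}{\stexpit{\astexp}}$ is entirely analogous, using instead the seventh rule of Def.~\ref{def:stackStExpTSShat} (the $\sstexpstackprod$\nobreakdash-con\-text rule) with $\astexpi{2} \defdby \astexp$.

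I do not expect any genuine obstacle here: the lemma is essentially a re-reading of two specific rules of $\stackStExpTSShatover{\actions}$. The only issue to watch for is a notational one, namely that the rule schema uses the variable $\alert{\aoneact}$ which ranges over $\oneactions = \actions \cup \setexp{\alert{\stexpone}}$, so that the argument covers both proper-action transitions and $\sone$\nobreakdash-transitions uniformly; and that the case distinction on the marking label $\darkcyan{\alab} \in \setexp{\bodylabcol} \cup \descsetexp{\darkcyan{\loopnsteplab{\aLname}}}{\aLname\in\natplus}$ need not be carried out separately, since in both context rules $\darkcyan{\alab}$ is a schematic variable and is passed through unchanged. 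The proof will fit in a handful of lines.
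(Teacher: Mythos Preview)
Your proposal is correct and follows exactly the approach of the paper, which also proceeds by induction on the structure of applicative contexts using the label-preserving rules for $\sstexpprod$ and $\sstexpstackprod$ in $\stackStExpTSShatover{\actions}$. Your version simply spells out in detail what the paper compresses into a single sentence.
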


\begin{proof}
  By induction on the structure of applicative contexts, using the rules for $\sstexpprod$ and $\sstexpstackprod$ of $\stackStExpTSShatover{\actions}$.
\end{proof}

\begin{lem}\label{lem:bodypaths:stexpstackprod:stexpprod}
  \begin{enumerate}[label={(\alph{*})},leftmargin=*,align=left,labelsep=0.25ex,itemsep=0ex]
    \item{}\label{it:1:lem:bodypaths:stexpstackprod:stexpprod}
      Every maximal $\sredi{\bodylab}$ path from $\acxtap{\stexpstackprod{\asstexp}{\stexpit{\astexp}}}$ in $\oneLTShatof{\stackStExpover{\actions}}$,
      where $\asstexp\in\stackStExpover{\actions}$, $\astexp\in\StExpover{\actions}$, and $\acxtwh\in\AppCxtover{\actions}$,
      is of either of the following two forms: 
      \begin{enumerate}[label={(\roman{*})},leftmargin=*,align=right,labelsep=1ex]
        \item{}\label{path:1:it:1:lem:bodypaths:stexpstackprod:stexpprod}
          $ \acxtap{\stexpstackprod{\asstexp}{\stexpit{\astexp}}}
              =
            \acxtap{\stexpstackprod{\asstexpi{0}}{\stexpit{\astexp}}}  
              \redi{\bodylab}
            \acxtap{\stexpstackprod{\asstexpi{1}}{\stexpit{\astexp}}}  
              \redi{\bodylab}
            \ldots
              \redi{\bodylab}
            \acxtap{\stexpstackprod{\asstexpi{n}}{\stexpit{\astexp}}}
              \redi{\bodylab}
            \ldots  
          $  
          is finite or infinite with  $\asstexpi{0},\asstexpi{1},\ldots,\asstexpi{n},\ldots\in\stackStExpover{\actions}$
          such that
          $\asstexpi{0}
             \redi{\bodylab}
           \asstexpi{1}
             \redi{\bodylab}
           \ldots
             \redi{\bodylab}
           \asstexpi{n}
             \redi{\bodylab}
           \ldots  
          \,$, 
          
        \item{}\label{path:2:it:1:lem:bodypaths:stexpstackprod:stexpprod}
          $ \acxtap{\stexpstackprod{\asstexp}{\stexpit{\astexp}}}
              =
            \acxtap{\stexpstackprod{\asstexpi{0}}{\stexpit{\astexp}}}  
              \redi{\bodylab}
            \acxtap{\stexpstackprod{\asstexpi{1}}{\stexpit{\astexp}}}  
              \redi{\bodylab}
            \ldots
              \redi{\bodylab}
            \acxtap{\stexpstackprod{\asstexpi{n}}{\stexpit{\astexp}}}
              \redi{\bodylab}
            \acxtap{\stexpit{\astexp}}
              \redi{\bodylab}  
            \ldots  
          $
          for $n\in\nat$, $\asstexpi{0},\asstexpi{1},\ldots,\asstexpi{n}\in\stackStExpover{\actions}$     
          with 
          $\asstexpi{0}
             \redi{\bodylab}
           \asstexpi{1}
             \redi{\bodylab}
           \ldots
             \redi{\bodylab}
           \asstexpi{n}
          $,
          $\terminates{\asstexpi{n}}$, and $\stexpstackprod{\asstexpi{n}}{\stexpit{\astexp}} \redi{\bodylab} \stexpit{\astexp}$. 
      \end{enumerate}
  
    \item{}\label{it:2:lem:bodypaths:stexpstackprod:stexpprod}
      Every maximal $\sredi{\bodylab}$ path from $\stexpprod{\asstexp}{\astexp}$ in $\oneLTShatof{\stackStExpover{\actions}}$, 
      for $\asstexp\in\stackStExpover{\actions}$, and $\astexp\in\StExpover{\actions}$, 
      where now no filling in an applicative context is permitted, 
      is of either of the following two forms: 
      \begin{enumerate}[label={(\roman{*})},leftmargin=*,align=right,labelsep=1ex]
        \item{}\label{path:1:it:2:lem:bodypaths:stexpstackprod:stexpprod}
          $ {\stexpprod{\asstexp}{\astexp}}
              =
            {\stexpprod{\asstexpi{0}}{\astexp}}  
              \redi{\bodylab}
            {\stexpprod{\asstexpi{1}}{\astexp}}  
              \redi{\bodylab}
            \ldots
              \redi{\bodylab}
            {\stexpprod{\asstexpi{n}}{\astexp}}
              \redi{\bodylab}
            \ldots  
          $  
          is finite or infinite
          with stacked star expressions 
              $\asstexpi{0},\asstexpi{1},\ldots,\asstexpi{n},\ldots\in\stackStExpover{\actions}$
          such that
          $\asstexpi{0}
             \redi{\bodylab}
           \asstexpi{1}
             \redi{\bodylab}
           \ldots
             \redi{\bodylab}
           \asstexpi{n}
             \redi{\bodylab}
           \ldots  
          \;$, 
        \item{}\label{path:2:it:2:lem:bodypaths:stexpstackprod:stexpprod}
          $ {\stexpprod{\asstexp}{\astexp}}
              =
            {\stexpprod{\asstexpi{0}}{\astexp}}  
              \redi{\bodylab}
            {\stexpprod{\asstexpi{1}}{\astexp}}  
              \redi{\bodylab}
            \ldots
              \redi{\bodylab}
            {\stexpprod{\asstexpi{n}}{\astexp}}
              \redi{\bodylab}
            {\asstexpacc}
              \redi{\bodylab}  
            \ldots  
          $
          with $n\in\nat$, and stacked star expressions $\asstexpi{0},\asstexpi{1},\ldots,\asstexpi{n},\bsstexp\in\stackStExpover{\actions}$     
          such that
          $\asstexpi{0}
             \redi{\bodylab}
           \asstexpi{1}
             \redi{\bodylab}
           \ldots
             \redi{\bodylab}
           \asstexpi{n}
          $,
          $\terminates{\asstexpi{n}}$ 
          and $\astexp \redi{\alab} \asstexpacc$.
      \end{enumerate}
      
  \end{enumerate}      
\end{lem}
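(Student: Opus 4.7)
The plan is to argue both parts by tracing the maximal body path step by step, using a case analysis of the rules of $\stackStExpTSShatover{\actions}$ that can produce a body step from the current term. The key structural insight underpinning both parts is that any expression in $\stackStExpover{\actions}$ containing the stacked product~$\sstexpstackprod$ as a subterm cannot permit immediate termination, because inspection of the $\terminates{\cdot}$ rules in $\stackStExpTSShatover{\actions}$ shows that they derive termination only for plain star expressions (those built from $\stexpzero,\stexpone,\aact,\sstexpsum,\sstexpprod,\stexpit{(\cdot)}$).

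For part~(b), where we examine maximal body paths from $\stexpprod{\asstexp}{\astexp}$, I would inspect the two rules of $\stackStExpTSShatover{\actions}$ whose conclusion has shape $\stexpprod{\asstexpi{1}}{\astexpi{2}}\redi{\bodylab}(\cdot)$. The first lifts a body step $\asstexpi{k}\redi{\bodylab}\asstexpi{k+1}$ of the left component to $\stexpprod{\asstexpi{k}}{\astexp}\redi{\bodylab}\stexpprod{\asstexpi{k+1}}{\astexp}$, preserving the $\stexpprod{\cdot}{\astexp}$ wrapper. The second rule fires only when $\terminates{\asstexpi{k}}$ (which in particular forces $\asstexpi{k}\in\StExpover{\actions}$) combined with an arbitrary step $\astexp\redi{\alab}\asstexpacc$ of either label, and it escapes the wrapper via $\stexpprod{\asstexpi{k}}{\astexp}\redi{\bodylab}\asstexpacc$. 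A straightforward induction on the path length then forces the maximal path to have form~(i) if the escape is never triggered and form~(ii) otherwise.

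For part~(a), the analysis is the same in spirit but needs to take the applicative context~$\acxtwh$ into account. I would first establish, by induction on the structure of~$\acxtwh$, that every body step from $\acxtap{\stexpstackprod{\asstexpi{k}}{\stexpit{\astexp}}}$ is derived by threading the congruence rules for $\sstexpprod$ and $\sstexpstackprod$ down through~$\acxtwh$ to a body step of the innermost subterm $\stexpstackprod{\asstexpi{k}}{\stexpit{\astexp}}$; such an innermost step lifts back to the top level by Lemma~\ref{lem:steps:appcxt}. The two possible body steps at the innermost level are the congruence step $\stexpstackprod{\asstexpi{k}}{\stexpit{\astexp}}\redi{\bodylab}\stexpstackprod{\asstexpi{k+1}}{\stexpit{\astexp}}$ coming from $\asstexpi{k}\redi{\bodylab}\asstexpi{k+1}$, and the escape step $\stexpstackprod{\asstexpi{k}}{\stexpit{\astexp}}\redi{\bodylab}\stexpit{\astexp}$, available precisely when $\terminates{\asstexpi{k}}$. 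Tracing the maximal path stepwise then gives form~(i) if the escape is never taken and form~(ii) otherwise, with the continuation past $\acxtap{\stexpit{\astexp}}$ not constrained by the lemma.

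The main obstacle is the innermost-reduction claim for part~(a): ruling out that a body step from $\acxtap{\stexpstackprod{\asstexpi{k}}{\stexpit{\astexp}}}$ is produced by applying the product-escape rule at some outer occurrence of $\sstexpprod$ within~$\acxtwh$. Such an application would require the left subterm at that outer position to permit immediate termination, but this subterm always inherits an occurrence of $\sstexpstackprod$ from the hole-filling, so the termination observation above rules it out. Making this precise amounts to a short induction on~$\acxtwh$ paired with the listed inspection of the $\terminates{\cdot}$ rules.
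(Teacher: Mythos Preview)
Your proposal is correct and follows essentially the same route as the paper's proof. The paper also treats part~(b) first via the two-case analysis of the $\sstexpprod$ rules and a left-to-right parsing of the path, and then handles part~(a) by observing (from $\notterminates{(\stexpstackprod{\asstexpi{k}}{\stexpit{\astexp}})}$) that each body step from $\acxtap{\stexpstackprod{\asstexpi{k}}{\stexpit{\astexp}}}$ must originate from the innermost subterm, reducing to the empty-context case and reusing the same two-case step analysis; your inductive formulation of the innermost-reduction claim is just a slightly more explicit phrasing of the paper's one-line observation.
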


\begin{proof}
  We first argue for statement \ref{it:2:lem:bodypaths:stexpstackprod:stexpprod}. 
  Due to the rules for $\sstexpprod$ in $\stackStExpTSShat$
  it holds for every step $\stexpprod{\asstexp}{\astexp} \redi{\bodylab} \csstexp$   
  either (1) $\asstexp \redi{\bodylab} \asstexpi{1}$, and $\csstexp = \stexpprod{\asstexpi{1}}{\astexp}$, for some $\asstexpi{1}$,
  or     (2) $\terminates{\asstexp}$, and $\astexp \redi{\alab} \bsstexp$, and $\csstexp = \bsstexp$,  for some $\bsstexp$ and a marking label~$\alab$.
  In case (2) we have recognized any maximal path $\stexpprod{\asstexp}{\astexp} \redi{\bodylab} \csstexp \redi{\bodylab} \ldots$ 
  as of the form~\ref{path:2:it:2:lem:bodypaths:stexpstackprod:stexpprod} of item \ref{it:2:lem:bodypaths:stexpstackprod:stexpprod} for $n=0$.
  In case (1) such a maximal path is parsed as 
    $\stexpprod{\asstexp}{\astexp} \redi{\bodylab} \stexpprod{\asstexpi{1}}{\astexp} = \csstexpi{1} \redi{\bodylab} \ldots$.
  Then we can use the same argument again for the second step $\stexpprod{\asstexpi{1}}{\astexp} \redi{\bodylab} \csstexpi{2}$ of that path.
  So by parsing a (finite or infinite) maximal path
  $\stexpprod{\asstexp}{\astexp} \redi{\bodylab} \csstexpi{1} \redi{\bodylab} \ldots \redi{\bodylab} \csstexpi{n} \redi{\bodylab} \ldots $ 
  with this argument over its steps from left to right,
  we either discover at some finite stage that one of its steps is of case~(2), 
  and then can conclude that the path is of form~\ref{path:2:it:2:lem:bodypaths:stexpstackprod:stexpprod},
  or we only encounter steps of case~(1) and thereby parse the path as of~form~\ref{path:1:it:2:lem:bodypaths:stexpstackprod:stexpprod}.
  
  We turn to statement \ref{it:1:lem:bodypaths:stexpstackprod:stexpprod}. 
  We observe that any first step 
  $ \acxtap{\stexpstackprod{\asstexp}{\stexpit{\astexp}}} \redi{\bodylab} \csstexpi{1}$ 
  can, in view the TSS rules 
                             for $\sstexpprod$ and $\sstexpstackprod$
       and since $\stexpstackprod{\asstexp}{\stexpit{\astexp}}$ does not terminate immediately,
  only arise from a step $\stexpstackprod{\asstexp}{\stexpit{\astexp}} \redi{\bodylab} \bsstexp$
  via context filling as $\acxtap{\stexpstackprod{\asstexp}{\stexpit{\astexp}}} \redi{\bodylab} \acxtap{\bsstexp} = \csstexpi{1}$.
  Therefore it suffices to show the statement for \ref{it:2:lem:bodypaths:stexpstackprod:stexpprod} for the empty context $\acxt = \Box$.
  For this restricted statement we can argue analogously as for item~\ref{it:1:lem:bodypaths:stexpstackprod:stexpprod}. 
  Here we use that due to the TSS rules for $\sstexpstackprod$ 
  it holds for every step $\stexpstackprod{\asstexp}{\stexpit{\astexp}} \redi{\bodylab} \csstexp$ that
  either (1) $\asstexp \redi{\bodylab} \asstexpi{1}$, and $\csstexp = \stexpstackprod{\asstexpi{1}}{\astexp}$, for some $\asstexpi{1}$,
  or     (2) $\terminates{\asstexp}$, and $\stexpstackprod{\asstexp}{\stexpit{\astexp}} \redi{\bodylab} \stexpit{\astexp}$, and $\csstexp = \stexpit{\astexp}$.
  By parsing the steps of a maximal $\sredi{\bodylab}$ path 
  $\stexpstackprod{\asstexp}{\stexpit{\astexp}} \redi{\bodylab} \csstexpi{1} \redi{\bodylab} \ldots \redi{\bodylab} \csstexpi{n} \redi{\bodylab} \ldots $ 
  from left to right we either encounter a step of case (2), and then find that the path is of form~\ref{path:2:it:1:lem:bodypaths:stexpstackprod:stexpprod},
  or otherwise we find that the path is of form~\ref{path:1:it:1:lem:bodypaths:stexpstackprod:stexpprod} in statement~\ref{it:1:lem:bodypaths:stexpstackprod:stexpprod}.
\end{proof}

\begin{lem}\label{lem:lem:oneLTShat:stackStExps:is:LLEEw}
  The following statements hold for paths of transitions in $\oneLTShatof{\StExpover{\actions}}$:
  \begin{enumerate}[label={(\roman{*})},align=right,leftmargin=*,itemsep=0ex]
    \item{}\label{it:1:lem:lem:oneLTShat:stackStExps:is:LLEEw}
      There are no infinite $\sredi{\bodylab}$ paths in $\oneLTShatof{\StExpover{\actions}}$.
    
    \item{}\label{it:2:lem:lem:oneLTShat:stackStExps:is:LLEEw}
      If $\asstexp\in\stackStExpover{\actions}$ is normed,
      then $\asstexp \redrtci{\bodylab} \bstexp$ for some $\bstexp\in\StExpover{\actions}$ with $\terminates{\bstexp}$. 
      
    \item{}\label{it:3:lem:lem:oneLTShat:stackStExps:is:LLEEw}
      If $\asstexp \redi{\looplab{n}} \asstexpacc$ with $n>0$ and $\asstexp,\asstexpacc\in\stackStExpover{\actions}$,
      then
      $\asstexp = \acxtap{\stexpit{\astexp}}$, 
      $\astexp$ \txtnormedplus, 
      $\astexp \redi{\alab} \asstexpacci{0}$,
      $\asstexpacc = \acxtap{ \stexpstackprod{\asstexpacci{0}}{\stexpit{\astexp}} }$,
      and $n = \sth{\astexp} + 1$,
      for some $\astexp\in\StExpover{\actions}$, $\acxtwh\in\AppCxtover{\actions}$, and $\astexpacci{0}\in\stackStExpover{\actions}$. 
      
      
    \item{}\label{it:4:lem:lem:oneLTShat:stackStExps:is:LLEEw}
      Neither $\sredi{\bodylab}$ and $\sredi{\looplab{n}}$ steps in $\oneLTShatof{\StExpover{\actions}}$,
      where $n\ge 1$, increase the star height of expressions.

\end{enumerate}
\end{lem}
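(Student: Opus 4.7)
The plan is to settle the four items in the order (iv), (iii), (i), (ii), in which items (iv) and (iii) are purely syntactic and depend only on the rule shapes of $\stackStExpTSShatover{\actions}$, item (i) is the main obstacle, and item (ii) then follows by combining (i) with the path-shape analysis from Lemma~\ref{lem:bodypaths:stexpstackprod:stexpprod}.

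First I would establish (iv) by induction on the derivation of $\asstexp \redi{\alab} \asstexpacc$ in $\stackStExpTSShatover{\actions}$ with a case distinction on the last rule. The axiom $\aact \redi{\bodylab} \stexpone$ trivially preserves star height $0$; the sum, product, and stacked-product rules preserve or lower star height immediately from the induction hypothesis; the $\sone$-transition axiom $\stexpstackprod{\astexpi{1}}{\stexpit{\astexpi{2}}} \redi{\bodylab} \stexpit{\astexpi{2}}$ preserves or lowers star height; and the crucial iteration rule yields $\stexpstackprod{\asstexpacc}{\stexpit{\astexp}}$ whose star height equals that of $\stexpit{\astexp}$, using the induction hypothesis $\sth{\asstexpacc} \le \sth{\astexp} < \sth{\stexpit{\astexp}}$. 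Item (iii) is then read off by inspecting which rules can produce a conclusion whose marking label is $\looplab{n}$ with $n \ge 1$: only the iteration rule with \txtnormedplus\ premise introduces such a marking (at level $\sth{\stexpit{\astexp}}$), while the product and stacked-product rules merely propagate markings, and all remaining rules (action axiom, sum, not-\txtnormedplus\ iteration, $\sone$-transition) conclude with $\bodylab$. A straightforward induction on the derivation then shows that any step $\asstexp \redi{\looplab{n}} \asstexpacc$ arises from an application of the iteration rule to some $\stexpit{\astexp}$ with $\astexp$ \txtnormedplus, whose conclusion is lifted through a chain of product and stacked-product rule applications. This chain assembles exactly the applicative context $\acxtwh \in \AppCxtover{\actions}$ of the statement, and gives $n = \sth{\stexpit{\astexp}} = 1 + \sth{\astexp}$.

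For (i) the hard part is to exhibit a well-founded measure strictly decreasing along every body step. The key observation I would exploit is that the body-step variant of the iteration rule requires $\astexp$ to be \emph{not} \txtnormedplus; consequently no derivative of $\astexp$ can reach an immediately terminating expression via a positive-length induced-transition path, so in every descendant $\stexpstackprod{\bsstexp}{\stexpit{\astexp}}$ produced by subsequent body steps the body $\sone$-transition rule $\stexpstackprod{\bsstexp}{\stexpit{\astexp}} \redi{\bodylab} \stexpit{\astexp}$ can never fire. Hence, after such an unfolding, further body steps stay confined to the left component of the stacked product and mirror body-step reductions in derivatives of $\astexp$, one level down in the star-height nesting. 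Symmetrically, for a \txtnormedplus\ iteration $\stexpit{\astexp}$, every step from $\stexpit{\astexp}$ carries a $\looplab{n}$ marking and is not a body step at all, so $\stexpit{\astexp}$ is then itself a body-step normal form. With these two observations in place, I plan to define a well-founded measure on $\stackStExpover{\actions}$ by structural recursion through $\sstexpsum$, $\sstexpprod$, $\sstexpstackprod$ and $\stexpit{\cdot}$, calibrated by star height at the iteration clauses, and then use Lemma~\ref{lem:steps:appcxt} and Lemma~\ref{lem:bodypaths:stexpstackprod:stexpprod} to verify case-by-case that the measure strictly decreases along each kind of body step. The hard part will be calibrating this measure so that it decreases uniformly under reductions inside sum summands, under unfolding steps from a non-\txtnormedplus\ iteration, and under the $\sone$-transition from a stacked product whose left component immediately terminates.

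Finally, for (ii) the termination from (i) guarantees that every maximal $\sredi{\bodylab}$ path from $\asstexp$ is finite. I would then argue by structural induction on $\asstexp \in \stackStExpover{\actions}$, using Lemma~\ref{lem:bodypaths:stexpstackprod:stexpprod} to control body-step paths through products and stacked products, that if $\asstexp$ is normed then some body-step reduct of $\asstexp$ is a plain star expression $\bstexp \in \StExpover{\actions}$ with $\terminates{\bstexp}$. The cases $\stexpone$, $\aact$, $\stexpsum{\astexpi{1}}{\astexpi{2}}$, and $\stexpit{\astexp}$ are routine; for $\stexpprod{\asstexpi{1}}{\astexpi{2}}$, normedness excludes the infinite shape \ref{path:1:it:2:lem:bodypaths:stexpstackprod:stexpprod} and forces a maximal path of shape \ref{path:2:it:2:lem:bodypaths:stexpstackprod:stexpprod} of Lemma~\ref{lem:bodypaths:stexpstackprod:stexpprod}, so I can peel off $\asstexpi{1}$ (by induction hypothesis on the normed $\asstexpi{1}$) and then continue with the exposed derivative of $\astexpi{2}$ (again by induction hypothesis); analogously for $\stexpstackprod{\asstexpi{1}}{\stexpit{\astexpi{2}}}$ the shape \ref{path:2:it:1:lem:bodypaths:stexpstackprod:stexpprod} triggers the body $\sone$-transition rule and exits into the plain star expression $\stexpit{\astexpi{2}}$, which satisfies $\terminates{\stexpit{\astexpi{2}}}$ by the termination axiom for iteration.
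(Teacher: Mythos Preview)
Your arguments for (iii) and (iv) match the paper's: both proceed by induction on derivation depth in $\stackStExpTSShatover{\actions}$, with (iii) identifying the normed$^+$ iteration rule as the sole source of entry markings and observing that only the product and stacked-product rules propagate markings unchanged, thereby assembling the applicative context. For (ii) both you and the paper use structural induction; the paper treats only the stacked-product case in detail, and your argument there coincides with it. One minor slip in your product case: you propose applying the induction hypothesis to ``the exposed derivative of $\astexpi{2}$'', but a derivative is not in general a structural subexpression. The clean version (parallel to the paper's $\sstexpstackprod$ case) first shows that normedness of $\stexpprod{\asstexpi{1}}{\astexpi{2}}$ forces $\asstexpi{1}$ normed, applies the IH to $\asstexpi{1}$ to reach $\stexpprod{f_1}{\astexpi{2}}$ with $\terminates{f_1}$, and then either stops there (if $\terminates{\astexpi{2}}$) or applies the IH to the subexpression $\astexpi{2}$ itself.

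Where you genuinely diverge from the paper is (i). The paper argues by what it calls structural induction on $\bsstexp$, invoking Lemma~\ref{lem:bodypaths:stexpstackprod:stexpprod} in the product and stacked-product cases to reduce to the IH on the left component and on the right component. You instead propose an explicit well-founded measure, driven by the two observations that a normed$^+$ iteration has no body steps at all, and that after a body-step unfolding of a non-normed$^+$ iteration the $\sone$-back-transition can never fire. These observations are correct and essential---and in fact the paper's case list omits precisely the case $\bsstexp = \stexpit{\astexp}$: a body step $\stexpit{\astexp} \redi{\bodylab} \stexpstackprod{\asstexpacc}{\stexpit{\astexp}}$ (for $\astexp$ not normed$^+$) strictly increases structural size, so a plain structural IH does not apply to the target. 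What is really needed there is a lexicographic descent with star height primary (since $\sth{\asstexpacc} \le \sth{\astexp} < \sth{\stexpit{\astexp}}$ by (iv), and your second observation confines subsequent body steps to the $\asstexpacc$-component), which is exactly what your measure-based plan would encode. So your route is heavier to set up but makes the induction order explicit; the paper's route is terser and leans directly on Lemma~\ref{lem:bodypaths:stexpstackprod:stexpprod}, at the cost of leaving the iteration case and the star-height descent it requires implicit.
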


\begin{proof}
  For statement~\ref{it:1:lem:lem:oneLTShat:stackStExps:is:LLEEw} 
  we prove that there is no infinite $\sredi{\bodylab}$ path from any $\bsstexp\in\stackStExpover{\actions}$,
  by structural induction on $\bsstexp\in\stackStExpover{\actions}$.
  For $\bsstexp = \stexpzero$ and $\bsstexp = \stexpone$ there are no steps possible at all, 
  because there\vspace*{-2pt} is no rule with conclusion $\bsstexp \lti{\aoneact}{\alab} \bsstexpacc$ in the TSS~$\stackStExpTSShatover{\actions}$. 
  For $\bsstexp = \aact$, where $\aact\in\actions$, the single step possible is $\bsstexp \lti{\aact}{\bodylab} \stexpone$,
  which cannot be extended by any further steps. 
  For $\bsstexp = \stexpsum{\astexpi{1}}{\astexpi{2}}$ it holds, due to the TSS rule for $\sstexpsum$, 
  that every $\sredi{\bodylab}$ path $\bsstexp \redi{\bodylab} \bsstexpacc \redi{\bodylab} \ldots$ from $\bsstexp$
  gives rise to also a path  $\astexpi{i} \redi{\bodylab} \bsstexpacc \redi{\bodylab} \ldots$ from $\astexpi{i}$, for $i\in\setexp{1,2}$. 
  Since by induction hypothesis the $\sredi{\bodylab}$ path from $\astexpi{i}$ cannot be infinite, this follows also for the $\sredi{\bodylab}$ path from $\bsstexp$.
  Now we let $\bsstexp = \stexpprod{\asstexp}{\astexp}$ for $\asstexp\in\stackStExpover{\actions}$ and $\astexp\in\StExpover{\actions}$. 
  Then every maximal $\sredi{\bodylab}$ path from $\bsstexp$ 
  is of the form \ref{path:1:it:2:lem:bodypaths:stexpstackprod:stexpprod} or of the form \ref{path:2:it:2:lem:bodypaths:stexpstackprod:stexpprod}
  in by Lemma~\ref{lem:bodypaths:stexpstackprod:stexpprod}, \ref{it:2:lem:bodypaths:stexpstackprod:stexpprod}.
  So it either has the length of a $\sredi{\bodylab}$ path from $\asstexp$,
  or $1$ plus the finite length of a $\sredi{\bodylab}$ path from $\asstexp$, plus the length of a $\sredi{\bodylab}$ path from $\astexp$.
  Since by induction hypothesis every $\sredi{\bodylab}$ path from $\asstexp$ or from $\astexp$ is finite, 
  it follows that every $\sredi{\bodylab}$ path from $\bsstexp$ must have finite length as well.
  For the remaining case $\bsstexp = \stexpprod{\asstexp}{\stexpit{\astexp}}$, for $\asstexp\in\stackStExpover{\actions}$ and $\astexp\in\StExpover{\actions}$,
  we can argue analogously by using Lemma~\ref{lem:bodypaths:stexpstackprod:stexpprod}, \ref{it:1:lem:bodypaths:stexpstackprod:stexpprod}. 
  This concludes the proof 
                           of item~\ref{it:1:lem:lem:oneLTShat:stackStExps:is:LLEEw}.
  
  Item~\ref{it:2:lem:lem:oneLTShat:stackStExps:is:LLEEw} can be shown by induction on the structure of $\asstexp\in\stackStExpover{\actions}$. 
  We only consider the \nontrivial\ case in which $\asstexp = \stexpstackprod{\asstexpi{0}}{\stexpit{\astexp}}$, for $\asstexp$ normed.
  We have to show that there is $\bstexp\in\StExpover{\actions}$ with $\asstexp \redrtci{\bodylab} \bstexp$ and $\terminates{\bstexp}$.
  We first observe that unless $\terminates{\asstexpi{0}}$ holds, 
  any step from $\asstexp = \stexpstackprod{\asstexpi{0}}{\stexpit{\astexp}}$ must, due to the TSS rules, be of the form
  $\stexpstackprod{\asstexpi{0}}{\stexpit{\astexp}} \redi{\alab} \stexpstackprod{\asstexpacci{0}}{\stexpit{\astexp}}$ 
  for some $\asstexpacci{0}\in\stackStExpover{\actions}$ such that also $\asstexpi{0} \redi{\alab} \asstexpacci{0}$.
  Therefore the assumption that $\asstexp$ is normed implies that also $\asstexpi{0}$ is normed.
  Then we can apply the induction hypothesis for $\asstexpi{0}$
  in order to obtain a path $\asstexpi{0} \redrtci{\bodylab} \astexpacci{0}$ and $\terminates{\astexpacci{0}}$ for some $\astexpacci{0}\in\StExpover{\actions}$.
  From this we get
  $ \stexpstackprod{\asstexpi{0}}{\stexpit{\astexp}}
      \redrtci{\bodylab}
    \stexpstackprod{\astexpacci{0}}{\stexpit{\astexp}} $
  by Lemma~\ref{lem:steps:appcxt},
  and 
  $\stexpstackprod{\astexpacci{0}}{\stexpit{\astexp}} \redi{\bodylab} \stexpit{\astexp}$
  due to $\terminates{\astexpacci{0}}$. 
  Consequently we find 
  $ \asstexp
      =
    \stexpstackprod{\asstexpi{0}}{\stexpit{\astexp}}
      \redrtci{\bodylab}
    \stexpit{\astexp}$ and $\terminates{\stexpit{\astexp}}$,
  and hence by letting $\bstexp \defdby \stexpit{\astexp}\in\StExpover{\actions}$ 
  we have obtained the proof obligation 
  $ \asstexp
      \redrtci{\bodylab}
    \bstexp$ and $\terminates{\bstexp}$
  in this case.  
 
  Statement~\ref{it:3:lem:lem:oneLTShat:stackStExps:is:LLEEw} can be proved by induction
  on the depth of derivations in the TSS~$\stackStExpTSShat$.
  This proof employs two crucial facts. 
  First, every \entrystep\ is created by a step $\stexpit{\astexp} \redi{\looplab{n}} \stexpprod{\asstexpacc}{\stexpit{\astexp}}$
  with $\astexp \redi{\alab} \asstexpacc$ and $n = \sth{\stexpit{\astexp}} = \sth{\astexp} + 1$, for some $\astexp\in\StExpover{\actions}$, $\asstexpacc\in\stackStExpover{\actions}$,
  and marking label $\alab$.
  And second, \entrysteps\ are preserved under 
  contexts $\stexpprod{\Box}{\astexp}$ and $\stexpstackprod{\Box}{\stexpit{\astexp}}$,
  and hence are preserved under applicative contexts.
  
  Statement~\ref{it:4:lem:lem:oneLTShat:stackStExps:is:LLEEw} can be proved by a straightforward induction
  on the depth of derivations in the TSS~$\stackStExpTSShat$\hspace*{-3pt}.
\end{proof}

On the basis of the statements of Lemma~\ref{lem:steps:appcxt}, Lemma~\ref{lem:bodypaths:stexpstackprod:stexpprod},  and Lemma~\ref{lem:lem:oneLTShat:stackStExps:is:LLEEw}
we can now demonstrate Lemma~\ref{lem:oneLTShat:stackStExps:is:LLEEw}. We first repeat its formulation.

\begin{repeatedlem}[= Lemma~\ref{lem:oneLTShat:stackStExps:is:LLEEw}]
  $\oneLTShatof{\stackStExpover{\actions}}$ is an \entrybodylabeling\ of $\oneLTSof{\stackStExpover{\actions}}$, 
  which is also a \LLEEwitness\ of $\oneLTSof{\stackStExpover{\actions}}$. 
\end{repeatedlem}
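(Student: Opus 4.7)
The plan is to verify the three conditions \ref{LLEEw:1}, \ref{LLEEw:2}, \ref{LLEEw:3} of a \LLEEwitness\ for $\oneLTShatof{\stackStExpover{\actions}}$. That it is an \entrybodylabeling\ of $\oneLTSof{\stackStExpover{\actions}}$ is immediate by inspection of $\stackStExpTSShatover{\actions}$: its rules are those of $\stackStExpTSSover{\actions}$ decorated with marking labels (either $\bodylab$ or a loop level), so erasing the marking labels recovers $\stackStExpTSSover{\actions}$ exactly. Condition \ref{LLEEw:1} is the content of Lemma~\ref{lem:lem:oneLTShat:stackStExps:is:LLEEw}\,(i).

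For \ref{LLEEw:2}, I would fix an entry-transition identifier $\pair{\asstexp}{n} \in \entriesof{\oneLTShatof{\stackStExpover{\actions}}}$ and use Lemma~\ref{lem:lem:oneLTShat:stackStExps:is:LLEEw}\,(iii) to decompose $\asstexp = \acxtap{\stexpit{\astexp}}$ with $\astexp$ \txtnormedplus\ and $n = \sth{\astexp}+1$. I then check the three loop-chart properties of Definition~\ref{def:loop:chart} for $\indsubchartinat{\aLTShatsubscript}{\asstexp,n}$. Property \ref{loop:3} holds because every non-start vertex of the subchart has the form $\acxtap{\stexpstackprod{\bsstexp}{\stexpit{\astexp}}}$, and the stacked product carries no immediate-termination rule in $\stackStExpTSShat$. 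Property \ref{loop:2} follows from \ref{LLEEw:1}: entry transitions in the subchart depart only from $\asstexp$, and body-only paths are finite, so any infinite path must return to $\asstexp$ infinitely often. The crux is \ref{loop:1}: because $\astexp \in \StExpover{\actions}$ emits no $\sone$\nb-transitions, normedness$^+$ of $\astexp$ immediately yields a proper transition $\astexp \lti{\aact}{\alab} \asstexpacci{0}$ whose target $\asstexpacci{0}$ is normed. The entry rule then gives $\stexpit{\astexp} \lti{\aact}{\looplab{n}} \stexpstackprod{\asstexpacci{0}}{\stexpit{\astexp}}$, lifting via Lemma~\ref{lem:steps:appcxt} to an entry transition from $\asstexp$, and Lemma~\ref{lem:lem:oneLTShat:stackStExps:is:LLEEw}\,(ii) combined with Lemma~\ref{lem:steps:appcxt} and the backlink rule supplies a body path $\acxtap{\stexpstackprod{\asstexpacci{0}}{\stexpit{\astexp}}} \redrtci{\bodylab} \acxtap{\stexpstackprod{\bstexp}{\stexpit{\astexp}}} \lti{\sone}{\bodylab} \asstexp$, where $\bstexp \in \StExpover{\actions}$ with $\terminates{\bstexp}$. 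Iterating this cycle yields the required infinite path.

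For \ref{LLEEw:3}, I would consider a body vertex $\bvert \ne \asstexp$ of $\indsubchartinat{\aLTShatsubscript}{\asstexp,n}$ from which an entry transition $\bvert \redi{\looplab{m}} \cdot$ departs, and show $m < n$. Such a body vertex has the form $\bvert = \acxtap{\stexpstackprod{\bsstexpi{0}}{\stexpit{\astexp}}}$ where $\bsstexpi{0}$ is reached from some entry-target $\asstexpacci{0}$ of $\asstexp$ by body transitions; two applications of Lemma~\ref{lem:lem:oneLTShat:stackStExps:is:LLEEw}\,(iv) give $\sth{\bsstexpi{0}} \le \sth{\asstexpacci{0}} \le \sth{\astexp}$. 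Lemma~\ref{lem:lem:oneLTShat:stackStExps:is:LLEEw}\,(iii) applied to $\bvert \redi{\looplab{m}} \cdot$ exposes the source iteration as some $\stexpit{\cstexp}$ at the leftmost spine of $\bvert$, with $m = \sth{\cstexp}+1$; but that leftmost spine of $\bvert$ sits inside $\bsstexpi{0}$, so $\stexpit{\cstexp}$ is a subexpression of $\bsstexpi{0}$. Hence $m = \sth{\stexpit{\cstexp}} \le \sth{\bsstexpi{0}} \le \sth{\astexp} = n-1 < n$, as required.

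The main obstacle is the construction of the infinite path for \ref{loop:1}. It requires marrying the normedness$^+$ hypothesis, which is phrased in the \emph{induced} LTS, with Lemma~\ref{lem:lem:oneLTShat:stackStExps:is:LLEEw}\,(ii), which concerns body paths in the \emph{labelled} 1-LTS, while correctly tracking the outer applicative context $\acxtwh$. The argument will go through because $\astexp$ is an ordinary star expression with no $\sone$\nb-transitions, so normedness$^+$ forces the first step to be a proper step whose target is automatically normed; the backlink $\stexpstackprod{\bstexp}{\stexpit{\astexp}} \lti{\sone}{\bodylab} \stexpit{\astexp}$ (for $\terminates{\bstexp}$) then closes the body loop and lifts through $\acxtwh$ by Lemma~\ref{lem:steps:appcxt}.
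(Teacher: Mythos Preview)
Your proposal is correct and follows essentially the same route as the paper: the paper merely repackages \ref{LLEEw:1}--\ref{LLEEw:3} into four pointwise conditions (LLEE-1)--(LLEE-4) before verifying them with exactly the lemmas you use (Lemma~\ref{lem:lem:oneLTShat:stackStExps:is:LLEEw}\,(i)--(iv) and Lemma~\ref{lem:steps:appcxt}), and your cycle construction for \ref{loop:1} is the paper's argument for (LLEE-1) verbatim. The one place where you should be more explicit is the repeated claim that every non-start body vertex has the shape $\acxtap{\stexpstackprod{\bsstexp}{\stexpit{\astexp}}}$ with $\bsstexp$ a body-descendant of an entry target; the paper justifies this via Lemma~\ref{lem:bodypaths:stexpstackprod:stexpprod}\,\ref{it:1:lem:bodypaths:stexpstackprod:stexpprod}, which you use implicitly but never cite.
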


\begin{proof}[Proof (of Lemma~\ref{lem:oneLTShat:stackStExps:is:LLEEw})]
  That $\oneLTShatof{\stackStExpover{\actions}}$ is an \entrybodylabeling\ of $\oneLTSof{\stackStExpover{\actions}}$\vspace*{-2.5pt}
  is a consequence of the fact that the rules of the TSS~$\stackStExpTSShatover{\actions}$ are marking labeled versions
  of the rules of the TSS~$\stackStExpTSSover{\actions}$.\vspace*{-0.5pt}
  
  It remains to show that $\aoneLTShat \defdby \oneLTShatof{\stackStExpover{\actions}}$ is a \LLEEwitness\ of the stacked star expressions 
  \oneLTS~$\aLTS = \LTSof{\stackStExpover{\actions}}$.
  Instead of verifying the \LLEEwitness\ conditions~\ref{LLEEw:1}, \ref{LLEEw:2}, \ref{LLEEw:3},
  we establish the following four equivalent conditions
  that are understood to be universally quantified 
  over all $\asstexp,\asstexpi{1},\bsstexp,\bsstexpi{1}\in\stackStExpover{\actions}$, and $n,m\in\nat$ with $n,m>0$:
  \begin{enumerate}[label={\mbox{\rm (LLEE\hspace{0.7pt}-\hspace{0.7pt}\arabic*)}},leftmargin=*,align=left,itemsep=-0.4ex]\vspace{-0.1ex}
    \item{}\label{LLEEw:alt:1}%
      $ \asstexp 
          \redi{\looplab{n}} 
        \asstexpi{1} 
          \;\Longrightarrow\;
        \asstexp \comprewrels{\sredi{\looplab{n}}}{\sredrtci{\bodylab}} \asstexp$,
        
    \item{}\label{LLEEw:alt:2}%
      $\sredi{\bodylab}$ is terminating from $\asstexp$,
      
    \item{}\label{LLEEw:alt:3}%
      $\asstexp
         \comprewrels{\sredtavoidsvi{\asstexp}{\looplab{n}}}{\sredtavoidsvrtci{\asstexp}{\bodylab}} 
       \bsstexp
           \;\Longrightarrow\;
       \notterminates{\bsstexp} $ \mbox{}
      (the premise means
       that $\bsstexp$ is in $\indsubchartinat{\aoneLTShat}{\asstexp,n}$ such that $\bsstexp\neq\asstexp$),
      
    \item{}\label{LLEEw:alt:4}
      $\asstexp
         \comprewrels{\sredtavoidsvi{\asstexp}{\looplab{n}}}{\sredtavoidsvrtci{\asstexp}{\bodylab}} 
       \bsstexp
         \redi{\looplab{m}}
       \bsstexpi{1}  
           \;\Longrightarrow\;
       n > m $,  
  \end{enumerate}\vspace{-0.1ex}    
  where $\sredtavoidsvi{\asstexp}{\looplab{n}}$ and $\sredtavoidsvi{\asstexp}{\bodylab}$ 
        mean $\sredi{\looplab{n}}$ and $\sredi{\bodylab}$ steps,
        respectively, that avoid $\asstexp$ as their \underline{t}argets.
  Hereby \ref{LLEEw:alt:2} obviously implies \ref{LLEEw:1}.
  For each entry identifier $\pair{\asstexp}{n}\in\entriesof{\aoneLTShat}$ it is not difficult to check that 
  \ref{LLEEw:alt:1}, \ref{LLEEw:alt:2}, and \ref{LLEEw:alt:3}
  imply that $\indsubchartinat{\aoneLTShat}{\asstexp,n}$ satisfies the loop properties \ref{loop:1}, \ref{loop:2}, and \ref{loop:3}, respectively,
  to obtain \ref{LLEEw:2}. 
  Finally, \ref{LLEEw:alt:4} is an easy reformulation of the condition \ref{LLEEw:3}.  
        
  Now \ref{LLEEw:alt:2} is guaranteed by Lemma~\ref{lem:lem:oneLTShat:stackStExps:is:LLEEw}, \ref{it:1:lem:lem:oneLTShat:stackStExps:is:LLEEw}.
  It remains to verify the three remaining conditions from above.\vspace*{-3pt}
  For reasoning about transitions in $\aoneLTShat$ we employ easy properties that follow from the rules of 
                                                                                                           $\stackStExpTSShatover{\actions}$.
  
  For showing \ref{LLEEw:alt:1}, we suppose that $\asstexp \redi{\looplab{n}} \asstexpi{1}$ for some $\asstexp,\asstexpi{1}\in\stackStExpover{\actions}$.
  We have to show $\asstexp \comprewrels{\sredi{\looplab{n}}}{\sredrtci{\bodylab}} \asstexp$.
  By Lemma~\ref{lem:lem:oneLTShat:stackStExps:is:LLEEw}, \ref{it:3:lem:lem:oneLTShat:stackStExps:is:LLEEw}, 
  $\asstexp \redi{\looplab{n}} \asstexpi{1}$
  implies that $\asstexp = \acxtap{\stexpit{\astexp}}$, $\astexp$ \txtnormedplus, $n = \sth{\astexp} + 1$,
    for some $\astexp\in\StExpover{\actions}$ and $\acxtwh\in\AppCxtover{\actions}$.
  Since $\astexp$ is \txtnormedplus, and because no \onetransitions\ can emanate from $\astexp$,
  there must be some normed $\csstexp\in\stackStExpover{\actions}$, and $\aact\in\actions$
  such that $\astexp \lti{\aact}{\alab} \csstexp$ holds. We pick $\csstexp$ accordingly.
  Together with  $n = \sth{\astexp} + 1$ this implies $\stexpit{\astexp} \redi{\looplab{n}} \stexpstackprod{\csstexp}{\stexpit{\astexp}}$.
  Since $\csstexp$ is normed, 
  by Lemma~\ref{lem:lem:oneLTShat:stackStExps:is:LLEEw}, \ref{it:2:lem:lem:oneLTShat:stackStExps:is:LLEEw}
  there are $k\in\nat$, $\csstexpi{0},\ldots,\csstexpi{k}\in\stackStExpover{\actions}$ such that
  $\csstexp = \csstexpi{0} \redi{\bodylab} \ldots \redi{\bodylab} \csstexpi{k}$ and $\terminates{\csstexpi{k}}$.
  Then 
  $ \stexpstackprod{\csstexpi{k}}{\stexpit{\astexp}}
      \redi{\bodylab}
    \stexpit{\astexp}$,
  and, by Lemma~\ref{lem:steps:appcxt},   
  $\asstexp = \acxtap{\stexpit{\astexp}}
     \redi{\looplab{n}}
   \acxtap{\stexpstackprod{\csstexpi{0}}{\stexpit{\astexp}}}
     \redi{\bodylab}
   \ldots
     \redi{\bodylab}
   \acxtap{\stexpstackprod{\csstexpi{k}}{\stexpit{\astexp}}}
     \redi{\bodylab}    
   \acxtap{\stexpit{\astexp}}
     =
   \asstexp$.  
  Hence $\asstexp \comprewrels{\sredi{\looplab{n}}}{\sredrtci{\bodylab}} \asstexp$.  
  
  For showing \ref{LLEEw:alt:3},
  we suppose that 
  $ \asstexp 
      \redi{\looplab{n}}
    \asstexpi{1}
      \redi{\bodylab}
    \ldots
      \redi{\bodylab}
    \asstexpi{k}
      =
    \bsstexp $
  for $\asstexp,\asstexpi{1},\ldots,\asstexpi{k},\bsstexp\in\stackStExpover{\actions}$ 
  with $\asstexpi{1},\ldots,\asstexpi{k},\bsstexp \neq \asstexp$, and $k\in\nat$, $k \ge 1$.
  We have to show that $\notterminates{\bsstexp}$. 
  By applying Lemma~\ref{lem:lem:oneLTShat:stackStExps:is:LLEEw}, \ref{it:3:lem:lem:oneLTShat:stackStExps:is:LLEEw}, 
  $\asstexp \redi{\looplab{n}} \asstexpi{1}$
  implies that 
  $\asstexp = \acxtap{\stexpit{\astexp}}$, 
  $\asstexpi{1} = \acxtap{\stexpstackprod{\csstexp}{\stexpit{\astexp}}}$,
  $\stexpit{\astexp} \redi{\looplab{n}} \stexpstackprod{\csstexp}{\stexpit{\astexp}}$,
  for some $\astexp\in\StExpover{\actions}$, $\csstexp\in\stackStExpover{\actions}$, and $\acxtwh\in\AppCxtover{\actions}$.    
  As now the considered path is of the form
  $ \asstexp = \acxtap{\stexpit{\astexp}} 
      \redi{\looplab{n}}
    \asstexpi{1} = \acxtap{\stexpstackprod{\csstexp}{\stexpit{\astexp}}}
      \redi{\bodylab}
    \asstexpi{2}
      \redi{\bodylab}  
    \ldots
      \redi{\bodylab}
    \asstexpi{k}
      =
    \bsstexp $, 
  it follows from Lemma~\ref{lem:bodypaths:stexpstackprod:stexpprod}, \ref{it:1:lem:bodypaths:stexpstackprod:stexpprod},
  due to $\asstexpi{1},\ldots,\asstexpi{k} \neq \asstexp = \acxtap{\stexpit{\astexp}}$
  that this path must be of form \ref{path:1:it:1:lem:bodypaths:stexpstackprod:stexpprod} there:
  $ \asstexp = \acxtap{\stexpit{\astexp}} 
      \redi{\looplab{n}}
    \asstexpi{1} = \acxtap{\stexpstackprod{\csstexpi{0}}{\stexpit{\astexp}}}
      \redi{\bodylab} 
    \asstexpi{2} = \acxtap{\stexpstackprod{\csstexpi{1}}{\stexpit{\astexp}}}
      \redi{\bodylab}
    \ldots
      \redi{\bodylab}
    \asstexpi{k} = \acxtap{\stexpstackprod{\csstexpi{k}}{\stexpit{\astexp}}}
      =
    \bsstexp $
  for some $\csstexpi{0},\csstexpi{1},\ldots,\csstexpi{k}\in\stackStExpover{\actions}$ with $\csstexpi{0} = \csstexp$.
  From the form $\acxtap{\stexpstackprod{\csstexpi{k}}{\stexpit{\astexp}}}$ of $\bsstexp$
  we can now conclude that $\notterminates{\bsstexp}$, 
  since a stacked star expression that is not also a star expression does not permit immediate termination.

  For showing \ref{LLEEw:alt:4},
  we suppose that 
  $ \asstexp 
      \redi{\looplab{n}}
    \asstexpi{1}
      \redi{\bodylab}
    \ldots
      \redi{\bodylab}
    \asstexpi{k}
      =
    \bsstexp 
      \redi{\looplab{m}}
    \bsstexpi{1}
    $
  for some $k,m,n\in\nat$, $k,m,n \ge 1$,   
  and $\asstexp,\asstexpi{1},\ldots,\asstexpi{k},\bsstexp,\bsstexpi{1}\in\stackStExpover{\actions}$ 
  with $\asstexpi{1},\ldots,\asstexpi{k} \neq \asstexp$.
  We have to show that $n > m$.
  Arguing analogously as for \ref{LLEEw:alt:3}
  with Lemma~\ref{lem:lem:oneLTShat:stackStExps:is:LLEEw}, \ref{it:3:lem:lem:oneLTShat:stackStExps:is:LLEEw},
  and Lemma~\ref{lem:bodypaths:stexpstackprod:stexpprod}, \ref{it:1:lem:bodypaths:stexpstackprod:stexpprod}, 
  we find that the path is of the form 
  $ \asstexp = \acxtap{\stexpit{\astexp}} 
      \redi{\looplab{n}}
    \asstexpi{1} = \acxtap{\stexpstackprod{\csstexpi{0}}{\stexpit{\astexp}}}
      \redi{\bodylab} 
    \asstexpi{2} = \acxtap{\stexpstackprod{\csstexpi{1}}{\stexpit{\astexp}}}
      \redi{\bodylab}
    \ldots
      \redi{\bodylab}
    \asstexpi{k} = \acxtap{\stexpstackprod{\csstexpi{k}}{\stexpit{\astexp}}}
      =
    \bsstexp
      \redi{\looplab{m}}
    \bsstexpi{1} $
  for some $\csstexpi{0},\csstexpi{1},\ldots,\csstexpi{k}\in\stackStExpover{\actions}$
  such that 
  $n = \sth{\astexp} + 1$,
  $\astexp \redi{\alab} \csstexpi{0}$ 
  and 
  $ \csstexpi{0}
      \redi{\bodylab} 
    \csstexpi{1}
      \redi{\bodylab}
    \ldots
      \redi{\bodylab}
    \csstexpi{k-1}
      =
    \bsstexp
      \redi{\looplab{m}}
    \bsstexpi{1} $.
   Due to Lemma~\ref{lem:lem:oneLTShat:stackStExps:is:LLEEw}, \ref{it:4:lem:lem:oneLTShat:stackStExps:is:LLEEw},
   it follows that 
   $n = \sth{\astexp} + 1
      > \sth{\astexp}
      \ge \sth{\csstexpi{0}}
      \ge \sth{\csstexpi{1}}
      \ge \ldots
      \ge \sth{\csstexpi{k}} $.
   Since due to the loop step
   $\acxtap{\stexpstackprod{\csstexpi{k}}{\stexpit{\astexp}}}
      \redi{\looplab{m}} 
    \bsstexpi{1}$ 
   can only originate from $\csstexpi{k}$,
   we conclude by Lemma~\ref{lem:lem:oneLTShat:stackStExps:is:LLEEw}, \ref{it:3:lem:lem:oneLTShat:stackStExps:is:LLEEw}, 
   that $\sth{\csstexpi{k}} \ge m$. 
   From what we have obtained before, we conclude~$n > m$.

   By having verified \ref{LLEEw:alt:1}--\ref{LLEEw:alt:4} we have shown that $\oneLTShatof{\StExpover{\actions}}$ is a \LLEEwitness.
\end{proof}

Now we can finally prove Theorem~\ref{thm:onechart-int:LLEEw}. We repeat its statement first.

\begin{repeatedthm}[= Theorem~\ref{thm:onechart-int:LLEEw}]
  For every $\astexp\in\StExpover{\actions}$,
  the \entrybodylabeling~$\onecharthatof{\astexp}$ of $\onechartof{\astexp}$ is a \LLEEwitness\ of $\onechartof{\astexp}$.
  Therefore the \onechart\ interpretation $\onechartof{\astexp}$ of a star expression $\astexp\in\StExpover{\actions}$ satisfies the property~\LEE. 
\end{repeatedthm}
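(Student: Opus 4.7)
The plan is to derive the theorem as a direct consequence of Lemma~\ref{lem:oneLTShat:stackStExps:is:LLEEw}, by exploiting the fact that $\onechartof{\astexp}$ and $\onecharthatof{\astexp}$ are defined as the subchart and the correspondingly marking-labelled subchart, respectively, that are generated by $\setexp{\astexp}$ within the ambient \oneLTS\ $\oneLTSof{\stackStExpover{\actions}}$ and its \entrybodylabeling\ $\oneLTShatof{\stackStExpover{\actions}}$. So the essential work will be to argue that the \LLEEwitness\ property is preserved under taking reachable sub-(1-)LTSs.

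First I would recall, via Def.~\ref{def:onechart:interpretation} and the definition of $\onecharthatof{\astexp}$, that $\onecharthatof{\astexp}$ is obtained from $\oneLTShatof{\stackStExpover{\actions}}$ by restricting to the vertices and transitions reachable from $\astexp$ along arbitrary transitions (ignoring marking labels, which do not affect reachability). In particular, $\onecharthatof{\astexp}$ is indeed an \entrybodylabeling\ of $\onechartof{\astexp}$, inheriting the assignment of body-labels and \entrytransition\ labels from the ambient \LLEEwitness\ $\oneLTShatof{\stackStExpover{\actions}}$ given by Lemma~\ref{lem:oneLTShat:stackStExps:is:LLEEw}.

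Next I would verify the three conditions \ref{LLEEw:1}, \ref{LLEEw:2}, \ref{LLEEw:3} of Def.~\ref{def:LLEEwitness} for $\onecharthatof{\astexp}$. Condition \ref{LLEEw:1} (body-step termination) is immediate: any infinite $\sredi{\bodylab}$-path in $\onecharthatof{\astexp}$ would be such a path in $\oneLTShatof{\stackStExpover{\actions}}$, contradicting \ref{LLEEw:1} for the ambient \LLEEwitness. Condition \ref{LLEEw:3} (layeredness) is inherited verbatim, because the loop-levels of \entrytransitions\ carry over, and the relation ``is in $\indsubchartinat{\cdot}{\avert,\aLname}$'' depends only on departure vertices and on body transitions, which are unchanged under restriction. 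The step that needs a one-line justification is condition \ref{LLEEw:2}: for any entry identifier $\pair{\avert}{\aLname}\in\entriesof{\onecharthatof{\astexp}}$ the vertex $\avert$ is reachable from $\astexp$, and every body-step path from $\avert$ staying in $\oneLTShatof{\stackStExpover{\actions}}$ is also reachable from $\astexp$, so $\indsubchartinat{\onecharthatof{\astexp}}{\avert,\aLname} = \indsubchartinat{\oneLTShatof{\stackStExpover{\actions}}}{\avert,\aLname}$, which is a loop chart by \ref{LLEEw:2} for the ambient \LLEEwitness.

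Having established that $\onecharthatof{\astexp}$ is a \LLEEwitness\ of $\onechartof{\astexp}$, the second claim of the theorem follows from the proposition (stated immediately after Def.~\ref{def:LLEEwitness}) that any chart possessing a \LLEEwitness\ satisfies \LEE. No step here looks like a substantial obstacle; the only subtlety is the careful bookkeeping needed to observe that the induced loop subcharts $\indsubchartinat{\cdot}{\avert,\aLname}$ are preserved by restriction to the reachable part, which rests on the trivial observation that every vertex occurring in $\indsubchartinat{\oneLTShatof{\stackStExpover{\actions}}}{\avert,\aLname}$ with $\avert$ reachable from $\astexp$ is itself reachable from~$\astexp$.
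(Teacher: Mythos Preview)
Your proposal is correct and follows essentially the same approach as the paper: both derive the theorem from Lemma~\ref{lem:oneLTShat:stackStExps:is:LLEEw} by observing that the \LLEEwitness\ property is preserved under passing to a generated sub-(1-)chart. The only cosmetic difference is that the paper appeals to the alternative characterization \ref{LLEEw:alt:1}--\ref{LLEEw:alt:4} established inside the proof of Lemma~\ref{lem:oneLTShat:stackStExps:is:LLEEw} to justify this preservation, whereas you verify \ref{LLEEw:1}--\ref{LLEEw:3} directly; your key observation that $\indsubchartinat{\onecharthatof{\astexp}}{\avert,\aLname}$ coincides with $\indsubchartinat{\oneLTShatof{\stackStExpover{\actions}}}{\avert,\aLname}$ for reachable $\avert$ is exactly what makes either route go through.
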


\begin{proof}[Proof (of Theorem~\ref{thm:onechart-int:LLEEw})]
  Since $\oneLTShatof{\stackStExpover{\actions}}$ is an \entrybodylabeling\ of $\oneLTSof{\stackStExpover{\actions}}$ by Lemma~\ref{lem:oneLTShat:stackStExps:is:LLEEw},
  it follows that $\onecharthatof{\astexp}$, the $\astexp$-rooted \subonechart\ of $\oneLTShatof{\stackStExpover{\actions}}$,
  is an \entrybodylabeling\ of $\onechartof{\astexp}$, the $\astexp$-rooted subchart of the \LTS~$\oneLTSof{\stackStExpover{\actions}}$.
  Now \LLEEwitnesses\ are preserved under taking generated subcharts, which can be concluded directly 
  from the alternative characterization of \LLEEwitnesses\ \ref{LLEEw:alt:1}--\ref{LLEEw:alt:4} in the proof of Lemma~\ref{lem:oneLTShat:stackStExps:is:LLEEw}.
  It follows that $\onecharthatof{\astexp}$ is a \LLEEwitness\ of $\onechartof{\astexp}$. 
\end{proof}


\paragraph{Acknowledgement.} \hspace*{-0.4em}%
  I want to thank the reviewers of the workshop and the post-proceedings submission for their close reading, their questions, 
  and for their suggestions of improvements.
  I am thankful to Luca Aceto for his comments on a draft.
  For the idea to define LLEE-witnesses directly from transition system specifications
  I want to thank Wan Fokkink, with whom I worked it out for \onefree\ star expressions in \cite{grab:fokk:2020:LICS,grab:fokk:2020:arxiv}. 
  That inspired me for this generalization to the full class of star expressions.

\bibliographystyle{eptcs}
\bibliography{scpgs-arxiv.bib}

\end{document}